\documentclass[11pt]{article}
\usepackage[utf8]{inputenc}
\usepackage{amsthm}
\usepackage{mathtools}
\usepackage{graphicx} 
\usepackage{array} 
\usepackage{tabularx, booktabs}
\setlength\heavyrulewidth{0.4ex}
\newcommand{\midsepremove}{\aboverulesep = 0mm \belowrulesep = 0mm}
\midsepremove

\usepackage[top=1.in, bottom=1.in, left=1.in, right=1.in]{geometry}


\usepackage{amsmath, amssymb, amsfonts, verbatim}
\usepackage{hyphenat,epsfig,subfigure,multirow}
\usepackage{nicefrac}
\usepackage[usenames,dvipsnames]{xcolor}
\usepackage[ruled, vlined, noend]{algorithm2e}
\SetKw{Continue}{continue}

\usepackage{mdframed}
\usepackage{enumitem}

\definecolor{darkpastelgreen}{rgb}{0.01, 0.75, 0.24}
\usepackage[hidelinks]{hyperref}     
\usepackage[noabbrev,capitalise]{cleveref}
\hypersetup{colorlinks={true},linkcolor={blue},citecolor=darkpastelgreen}

\usepackage{tcolorbox}
\usepackage{url}
\usepackage{thmtools}

\newtheorem{theorem}{Theorem}
\newtheorem{lemma}{Lemma}[section]

\newtheorem{fact}[lemma]{Fact}

\theoremstyle{definition}
\newtheorem{definition}[lemma]{Definition}

\newtheorem*{claim*}{Claim}
\newtheorem*{proposition*}{Proposition}
\newtheorem*{lemma*}{Lemma}
\newtheorem*{corollary*}{Corollary}
\newtheorem*{remark*}{Remark}

\def\vq{{\mathbf{q}}}
\def\vp{{\mathbf{p}}}
\def\vs{{\mathbf{s}}}
\def\vz{{\mathbf{z}}}
\def\eps{\varepsilon}
\def\epsilon{\varepsilon}
\def\cT{{\mathcal{T}}}

\newcommand{\wrap}[1]{\parbox{1.017\linewidth}{\centering\vspace{1.5mm}#1\vspace{1mm}}}

\newcommand\fang[1]{\textcolor{teal}{[F: #1]}}

\DeclareMathOperator{\poly}{poly}
\DeclareMathOperator*{\argmax}{argmax}
\allowdisplaybreaks

\SetKwInput{KwData}{Input}
\SetKwInput{KwResult}{Output}

\title{Hardness and Approximation Algorithms for\\ Balanced Districting Problems}
\author{Prathamesh Dharangutte\thanks{Department of Computer Science,
Rutgers University. Email: \{prathamesh.d,jg1555\}@rutgers.edu. Dharangutte and Gao are supported by NSF IIS-2229876, DMS-2220271, DMS-2311064, CCF-2208663,  CCF-2118953.} \and 
Jie Gao$^\ast$ \and
Shang-En Huang\thanks{
National Taiwan University. Email: math.tmt514@gmail.com. Shang-En Huang conducted part of this research in Boston College and in visiting Osaka University. Shang-En is currently supported by NTU Grant 113L7491.} \and
Fang-Yi Yu\thanks{
George Mason University, Email: fangyiyu@gmu.edu
}}
\date{}

\begin{document}

\maketitle

\begin{abstract}
We introduce and study the problem of balanced districting, where given an undirected graph with vertices carrying two types of weights (different population, resource types, etc) the goal is to maximize the total weights covered in vertex disjoint districts such that each district is a star or (in general) a connected induced subgraph with the two weights to be balanced. This problem is strongly motivated by political redistricting, where contiguity, population balance, and compactness are essential. We provide hardness and approximation algorithms for this problem. In particular, we show \textsf{NP}-hardness for an approximation better than $n^{1/2-\delta}$ for any constant $\delta>0$ in general graphs even when the districts are star graphs, as well as \textsf{NP}-hardness on complete graphs, tree graphs, planar graphs and other restricted settings. On the other hand, we develop an algorithm for balanced star districting that gives an $O(\sqrt{n})$-approximation on any graph (which is basically tight considering matching hardness of approximation results), an $O(\log n)$ approximation on planar graphs with extensions to minor-free graphs. Our algorithm uses a modified Whack-a-Mole algorithm [Bhattacharya, Kiss, and Saranurak, SODA 2023] to find a sparse solution of a fractional packing linear program (despite exponentially many variables) which requires a new design of a separation oracle specific for our balanced districting problem.
To turn the fractional solution to a feasible integer solution, we adopt the randomized rounding algorithm by [Chan and Har-Peled, SoCG 2009]. To get a good approximation ratio of the rounding procedure, a crucial element in the analysis is the \emph{balanced scattering separators} for planar graphs and minor-free graphs --- separators that can be partitioned into a small number of $k$-hop independent sets for some constant $k$ --- which may find independent interest in solving other packing style problems. Further, our algorithm is versatile --- \emph{the very same algorithm} can be analyzed in different ways on various graph classes, which  leads to class-dependent approximation ratios. We also provide a FPTAS algorithm for complete graphs and tree graphs, as well as greedy algorithms and approximation ratios when the district cardinality is bounded, the graph has bounded degree or the weights are binary. 
\end{abstract}

\clearpage
\newpage


\setcounter{page}{1}

\section{Introduction}

In this paper we study the problem of balanced districting, where we are given an undirected graph where vertices carry two types of weights (population, resource types, etc) and the goal is to find vertex disjoint districts such that each district is a connected induced subgraph with the total weights to be $c$-balanced -- each type of weight is at least $1/c$ of total weight of the district. We aim to maximize the total weights of the vertex disjoint balanced districts.

This problem is an abstraction of many real world scenarios of districting where contiguity/connectivity, population/resource type balance and compactness are desirable properties. 
For example, in political redistricting, several towns are grouped into a state legislative district or a congressional district. 
Balancedness requires that each district maintains a sufficient fraction of each (political or demographic) group, which is essential for several reasons. First, voter turnout rates sharply increase if the anticipated election outcome is expected to be a close tie.~\cite{Bursztyn2023-dg} Thus a balanced district would motivate and raise voter turnout rates. Additionally, balancedness ensures that each political group has the opportunity to elect a candidate of their choice, in compliance with the Voting Rights Act of 1965~\cite{hebert2010realist} and other amendments~\cite{clelland2022colorado}. This principle also helps to prevent the tipping point in racial segregation, where residents of one demographic group start to leave a district once their population falls below a certain threshold.~\cite{schelling1971dynamic,orfield2020black}
Connectivity or contiguity, on the other hand, demands that each district be geographically contiguous,  -- in the language of graph theory that the vertices corresponding to the towns form an induced connected subgraph. This requirement is enforced by most state laws and is a standard practice in general. Many states also have a compactness rule~\cite{altman1998traditional}, which refers to the principle that the constituents residing within an electoral district should live as near to one another as possible. It often manifests into a preference for regular geometric shapes or high roundness (small ratio of circumference and total area).

The problem of redistricting also appears in many other scenarios such as districting for public schools, sales and services, healthcare, police and emergency services~\cite{bucarey2015shape}, and logistics operations~\cite{kalcsics2005towards}. 
In addition, the balanced districting problem is of interest in a broader range of applications for resource allocation. For example modern computing infrastructure such as cloud computing provides services to a dynamic set of customers with diverse demands. Customer applications may have a variety of requirements on the combination of different resources (such as CPU cycles, memory, storage, or access of special hardware) that can be summarized by the balanced requirement.

Due to the importance of the problem, redistricting has been studied in a computational sense for schools and elections, that dates back to the 1960s.~\cite{hess1965nonpartisan} Since then, an extensive line of work (see \cite{becker2020redistricting} for a survey) has formulated the redistricting task as an optimization problem with a certain set of objectives. A lot of existing work considers the geographical map as input and comes up with practical methods and software implementations that generate feasible districting plans. We will survey such work in \Cref{subsec:related}. 
However, most previous works focus on optimizing a single desirable property alone (e.g.,  connectivity~\cite{10.1145/3490486.3538271}, or balance~\cite{gillani2023redrawing}), or optimizing average aggregated scores combining multiple objectives of the districts~\cite{deford2021recombination}.  In contrast, our problem formulation takes these objectives as hard constraints and optimizes the total population that satisfies them.  There are several merits of this formulation.
First, it offers interpretable, fair, worst-case guarantees for the identified districts. Districting problem is a multi-faceted one. With multiple criteria taken into consideration, it feels ill-fit if only one criterion is singled out as the optimization objective. Furthermore, an average quality guarantee does not provide meaningful utility at the individual district  level, and aggregated scores offer limited insight into each objective.   
Second, a districting solution has a consequential nature and should be taken with a dynamic and time-evolving perspective. Once a new districting plan is in place, residents naturally respond to the algorithm output, resulting in changes in the population distributions. One prominent example is the tipping point theory in racial segregation mentioned above.  Optimizing a single balancedness score can still lead to many districts falling below such a tipping point, exacerbating segregation. With this in consideration it is important to keep balancedness as a hard constraint, which hopefully facilitates district stability and integration.
With $c$-balanced property as a requirement, a graph partitioning into vertex disjoint $c$-balanced districts is not always possible. For example, if the total weight is not $c$-balanced, some districts have to be unbalanced no matter how the districts are defined. Therefore, we aim to maximize the total weights in balanced districts. 


In this paper we focus on the graph theoretic perspective of the redistricting problem. We abstract the input as a graph where vertices represent natural geographical entities/blocks (e.g., townships) and edges of two vertices represent geographical adjacency/contiguity. We focus on two important quality considerations namely connectivity and balanceness, and we maximize coverage, i.e., the  total weight (population) covered by balanced districts. In addition, we also consider compactness, which in our setting leads to preference of districts as low-diameter subgraphs. An important case studied in this paper is to consider a balanced \emph{star district}, which consists of a center vertex $v$ as well as a set of neighboring blocks all adjacent to $v$. We also consider districts of bounded rank $k$ for a constant $k$ -- where a district has at most $k$ vertices.

\subsection{Our Results and Technical Overview}

We report a systematic study of the balanced districting problem on both hardness results and approximation algorithms. Our goal is to dissect the problem along different types of input graph topologies (general graphs, planar graphs, bounded degree graphs, complete graphs, tree graphs, etc), district types (e.g., arbitrarily connected districts, star districts, or bounded rank-$k$), and weight assumptions (arbitrary weights, binary weights).  A brief summary of our results can be found in~\Cref{tab:results}.


\begin{table}[htbp]
\caption{A summary of hardness and approximation results on the balanced districting problem. Note that $\delta, \epsilon > 0$ are constants. Tight results are highlighted in bold.}\label{tab:results}
    \centering
    \small
    \begin{tabularx}{\textwidth}{| c | >{\centering\arraybackslash}p{3cm} | >{\centering\arraybackslash}p{2.8cm} | >{\centering}X | l |}
\toprule
& Graph Type & District Type &  Result  & Note \\
\toprule
\multirow{4}{*}{\Cref{sec:hardness}} & General & arbitrary/star & \textbf{$\mathsf{NP}$-hard for $n^{1/2-\delta}$-approx} & \Cref{thm:apx}\\ \cline{2-5} 
 & \wrap{Max degree $\Delta$} &   arbitrary/star &  \wrap{$\mathsf{APX}$-hard for $\Delta=O(1)$\\          $\mathsf{NP}$-hard for $\Delta / 2^{O(\sqrt{\Delta})}$-approx \\ $\mathsf{UGC}$-hard for $O(\Delta/ \log^2 \Delta)$-approx} & \Cref{thm:apx}\\ \cline{2-5}
 & Planar with $\Delta=3$
 & star with rank-$3$ & $\mathsf{NP}$-hard & \Cref{thm:planar_hard}\\ \cline{2-5} 
 & Complete or Tree & arbitrary/star & \textbf{$\mathsf{NP}$-hard} & \Cref{thm:complete}\\ \cline{2-5} 
\hline 
\multirow{4}{*}{\Cref{sec:LP-star}} & Planar   & star & $O(\log n)$-approx & \multirow{4}{*}{\Cref{thm:star-district}}
\\
\cline{2-4} 
 & $H$-Minor-Free   & star & $O(h^2\log n)$-approx, $h=|H|$ & 
\\
\cline{2-4} 
  & Outerplanar  & star & $O(1)$-approx & 
\\
\cline{2-4} 
 & General& star & \textbf{$\Theta(\sqrt{n})$-approx} & 
\\ 
\hline 
\multirow{2}{*}{\Cref{sec:FPTAS}} & Complete graph &  arbitrary/star   &  \textbf{FPTAS $(1+\epsilon)$-approx}  & \Cref{thm:fptas-complete}\\
\cline{2-5}  
 & Tree & arbitrary/star & \textbf{FPTAS $(1+\epsilon)$-approx} & \Cref{thm:tree}\\
\hline 

\multirow{4}{*}{\Cref{sec:variants}} & General  & rank-$2$ & polynomially solvable & \Cref{thm:bounded_k_2}\\
\cline{2-5} 
& General  & rank-$k$, $k>2$ & $k$-approx & \Cref{thm:general_k}\\
\cline{2-5} 
& Bounded degree $\Delta$ & star & $(\Delta+\frac{1}{\Delta})$-approx & \Cref{thm:bounded_degree} \\
\cline{2-5} 
& General graph with binary weights & \multirow{2}{=}{\centering star} & \multirow{2}{=}{\centering $c$-approx} & \multirow{2}{*}{\centering \Cref{lem:local_search}}\\

\bottomrule
\end{tabularx}
\end{table}

\paragraph{Complexity and Challenges.}
There are three elements in the balanced districting problem that make it challenging and interesting, from a technical perspective: 1) connectivity -- the induced subgraph of a district is connected 2) packing and coverage maximization -- no vertex belongs to two districts and we maximize the total weights of included vertices; 3) balancedness -- the two types of weights in a district need to be roughly balanced. 
These elements are shared with a number of well known hard problems, suggesting that our problem is also  computationally challenging. For example, the exact set cover problem asks if there is a perfect coverage and packing in a set cover instance. The packing element is shared with maximum independent set problem -- a vertex included will forbid its neighbors to be included. And the balancedness is shared with subset sum problem. Therefore by using the hardness of these problems we can show hardness and hardness of approximation of the balanced districting problem for a variety of graph classes. 
The hardness of the balanced districting problem is immediately shown by a reduction from exact set cover problem. 
By a reduction from maximum independent set problem, we can show that the balanced districting problem does not have an approximation of $n^{1/2-\delta}$ for any constant $\delta>0$ in a general graph of $n$ vertices unless $\mathsf{P}=\mathsf{NP}$ 
and is $\mathsf{APX}$-hard for bounded degree graphs.
From a reduction from the planar 1-in-3SAT problem, the balanced districting problem is $\mathsf{NP}$-hard for a planar graph, and even if each district has at most three vertices -- a crucial condition since the problem can be solved exactly by maximum weighted matching if each district is only allowed to have two vertices. 
If the input graph is a tree or a complete graph -- extremely simple topologies for which many $\mathsf{NP}$-hard problems can be solved in polynomial time, the balanced districting problem is still $\mathsf{NP}$-hard due to the balancedness requirement by a reduction from subset sum. All of the hardness proofs hold even if we limit the output districts to be only of a star topology. This set of hardness results can be seen from the Top section of~\Cref{tab:results}.


\paragraph{Greedy Methods on Special Cases.} On the positive side, it is natural to ask if existing techniques for solving or approximating these related problems can be borrowed for the balanced districting problem. The answer turns out to be often ``not really'', even if we only look for balanced star districts. The additional requirements in our problem often break some crucial steps. For example, the problem of maximum independent set has an easy $\Omega(n)$ lower bound if the input graph is sparse (or planar) --- thus a simple random greedy algorithm with conflict checking gives an easy constant approximation algorithm. 
But such lower bound no longer holds true for our problem if the weights are not balanced. Even for star districts, when the maximum degree is not bounded by a constant, the number of potential balanced districts can be exponential in $n$, the size of the network.

We show in~\Cref{sec:variants} that ideas using a greedy approach and local search method give us approximation algorithms, but only for very special cases. 
Namely, if the districts have rank-$k$, we can try the greedy maximum hypergraph matching to have a $k$-approximation to the optimal solution. 
When all weights are binary ($1$ or $0$), we can use a greedy algorithm with local search to get a $c$ approximation to the optimal $c$-balanced star districting solution. Similarly, if the graph has maximum degree $\Delta$, we can get a $(\Delta+\frac{1}{\Delta})$-approximation for $c$-balanced star districting.

\paragraph{LP Framework and Rounding.}
To really tackle the problem with arbitrary weights, for districts that are not limited by rank and graphs beyond constant bounded degree, we
first examine what we can do with complete graphs or tree graphs -- here the topology is made some of the simplest possible, and we would like to address the challenge from packing and balancedness. In this setting we can obtain FPTAS for both complete graphs and tree graphs (\Cref{sec:FPTAS})-- although the algorithm is much more involved due to the additional requirements of packing and connectivity (for tree graphs, as connectivity is trivial for complete graphs). 
Our FPTAS uses a dynamic programming technique that maintains one district's possible weights and introduces a new prioritized trimming method to approximate weights while ensuring that the resulting district satisfies the $c$-balanced constraint and approximates the optimal weight. The FPTAS for the complete graph is later used as a subroutine for solving the relaxed LP formulation for other graph settings. 

Beyond complete graphs and tree graphs, we develop a general framework (\Cref{sec:LP-star}) that produce approximation algorithms for star districts on different types of graphs. 
All these algorithms start from a relaxed linear program where we formulate a variable $x_S$ for each potential balanced star district $S$, which can take non-integer values and for all districts that share the same vertex, the sum of their variables is at most $1$. 
Despite potentially exponentially many variables (and constraints in the dual program) that preclude standard solutions, we adapt the whack-a-mole framework~\cite{BKS23-lp}, which can be seen as a lazy multiplicative weight update algorithm, on dual variables, and we design a `separation oracle' that selects a violating constraint in the dual program in time polynomial in $n$ and $1/\eps$ that can significantly improve the solution.  Consequentially, the linear program can be solved in time polynomial in $n$ and $1/\eps$ up to any precision $1-\eps$ and the number of non-zero primal variables (i.e., the candidate balanced star districts) is also polynomial.  Intriguingly, our separation oracle is based on our FPTAS on compute graphs for balanced districting.

Now we will round the fractional solution to an integer solution and in the process we may lose an approximation factor. We use a simple randomized rounding method where we sort the districts with non-zero values in decreasing order of total weight, and flip a coin with probability proportional to $x_S$ to include a potential district $S$, if $S$ does not overlap with any districts already included. In order to bound the loss of quality in the rounding process, we need to upper bound the correlation of the variables, namely, sum of $x_A \cdot x_B$ for all pairs of overlapping districts $A, B$. These are the (fractional) districts that have to be dropped due to conflict. To establish the approximation factor, we wish to bound the total sum of correlation by a factor multiplied with the total sum over all possible districts $\sum_S x_S$ -- exactly the optimal LP solution. 
We show that this ratio is $O(\sqrt{n})$, which immediately gives an $O(\sqrt{n})$-approximate solution for star districts on a general graph. Notice that this is tight due to the hardness of approximation results.

Due to the strong motivation from political redistricting and resource allocation considering geographical proximity/constraints, the planar graph is of particular interest to us. One of the main technical contributions is a polylogarithmic-approximation algorithm for balanced star districting on planar graphs and related algorithms for minor-free graphs and outer planar graphs. 
For a planar graph we now adopt a balanced planar separator and use a divide-and-conquer analysis. Namely, we only need to analyze the overlapping districts with at least one of them including vertices in the separator. Now a crucial observation is, if we can partition the planar separator into $k$ 5-hop independent sets, then we can decompose the total sum of correlation by the independent sets -- fix an 5-hop independent set $X$, two star districts that touch different vertices in $X$ are disjoint and star districts that share the same vertex in $X$ have their total district value bounded by $1$ due to the primal constraint. This allows us to upper bound the correlation term for the star districts touching the separator by a factor of $k$ of the sum of $x_S$ with districts $S$ on the separator. Recursively, this gives an $O(\log n)$ factor loss in the final approximation. 

We remark that the above analysis asks for a new property of a balanced separator -- one that can be decomposed into a small number of 5-hop independent sets (called a ``scattering'' separator) -- and we do not care about the size of the separator. This is possible for a planar graph if we use the fundamental cycle separator, which is composed of two shortest paths, and thus at most $10$ $5$-hop independent sets. For a $H$-minor-free graph with $H$ as a graph of $h$ vertices, we show the existence of a similar separator, which can be decomposed into $O(h^2)$ 5-hop independent sets. Thus the final approximation ratio for $H$-minor-free graphs is $O(h^2 \log n)$. For outer planar graphs, we can skip the recursive step and work with graph partitions with $5$-hop independent sets and get $O(1)$-approximation. We believe that this technique of using balanced scattering separators is interesting in its own and may find additional applications in other problems with some packing (non-overlapping) requirement on the solution.

On general graphs, the formulated linear program could have an integrality gap as large as $\Omega(\sqrt{n})$.
Since our rounding algorithm turns an fractional solution into an integral one, this barrier unavoidably blends into our analysis to the proposed rounding algorithm, producing an provable $O(\sqrt{n})$ bound.
However, by thinking about this argument contrapositively, an upper bound to our rounding algorithm leads to the integrality gap of the formulated LP, which could be an interesting takeaway. On the other hand, we also show that there are planar graphs (specifically grid graphs) such that our rounding algorithm produces a $>1$ constant approximation ratio.
This observation suggests that we cannot hope for a PTAS using this approach, even on planar graphs.

\subsection{Related work}\label{subsec:related}
To the best of our knowledge, this paper is the first to study the balanced districting problem. Below, we briefly survey related problems and explore their potential connections to ours.

\paragraph{Districting.}
Our problem is connected to computational (re)districting for schools and elections, which dates back to the 1960s.~\cite{hess1965nonpartisan}  Since then, an extensive line of work (see \cite{becker2020redistricting} for a survey) has formulated the redistricting task as an optimization problem with a certain objective and constraints, e.g,. balancedness, contingency, or compactness.  Our redistricting problem focuses on optimizing the population in balanced and contiguous districts.
One concept related to our notion of balance is competitiveness.  Recent work introduces vote-band metrics~\cite{deford2020computational}, which require a certain fraction of votes to fall within a specified range (e.g., 45-55\%) for competitive elections. Subsequently, \cite{chuang2024drawing} also adopt similar notions called $\delta$-Vote-Band Competitive which is equivalent to our $c$-balancedness by setting $c = 2/(1-2\delta)$.  While related, our work diverges technically, offering both hardness and algorithmic results for several common graphs.  \cite{deford2020computational} empirically evaluates ensemble methods for district distributions.  \cite{chuang2024drawing} explored the hardness and heuristic algorithms for maximizing the number of districts meeting the target competitiveness constraints, with additional requirements that all districts have roughly the same population limited compactness consideration.  

One approach treats contiguity as a transportation cost and designs linear programming models to minimize this total cost~\cite{10.1145/3490486.3538271,franklin1973computed,clarke1968operations}.  Interestingly, the fair clustering problem can also be viewed as optimizing contingency~\cite{Bohm2020-kd, Chierichetti2018-qs, Jia2020-xu,Chhabra2021-nb}.   Other research focuses on optimizing compactness scores~\cite{bar2020gerrymandering,jin2017spatial,kim2011optimization,jacobs2018partial} or using Voronoi or power diagrams with some variant of $k$-means~\cite{weaver1963procedure,cohen2017balanced,cohen2018balanced,fryer2011measuring}.  Finally, another line of work optimizes balance scores \cite{gillani2023redrawing}.  These approaches differ from ours in that they treat specific aspects of districting (contiguity, compactness, balance, etc.) as objectives, rather than maximizing the population that meets these criteria.

Besides the optimization approach, another popular approach uses sampling to generate a distribution over districts and create a collection of district plans for selection.~\cite{altman2011bard,cirincione2000assessing}  One widely used method is ReCom~\cite{deford2021recombination}, an MCMC algorithm.  However, these approaches may suffer from slow mixing times and lack formal guarantees.~\cite{najt2019complexity}
Finally, several papers take a fair division approach~\cite{landau2009fair,pegden2017partisan,de2018analysis}.  The problem is quite different, however, as fairness is defined concerning parties (types) and the number of seats they would win (i.e., the number of districts where they would have a majority) compared to other districts.

\paragraph{Algorithms.}
Beyond districting problems, as outlined in the technical overview, our problem connects to several classical algorithm problems.  If we only want to maximize the population of a single connected and balanced district, the problem becomes a balanced connected subgraph problem~\cite{Bhore2022-fy,bhore2022balanced,martinod2021complexity}.  However, the previous work in this area typically considers unit weights for either type, which does not adequately represent the districting problem that operates in an aggregated block-level setting.
Our problem can be seen as packing subgraphs on graph~\cite{cornuejols1982packing}, e.g., edges (maximum matching), triangles~\cite{keevash2004packing}, circles~\cite{ding2002packing}. 
Finally, we note a line of work on a balanced, connected graph partition~\cite{chen2021approximation,chlebikova1996approximating}, and balanced bin-packing problem~\cite{falkenauer1992genetic}, which, however, aim to generate a partition where each component has similar weights.






\subsection{Open Problems}

As the first work to formally study the balanced districting problem in this formulation, our work leaves a number of interesting open problems for future work. Obviously it is good to close the gap of approximation and hardness for different families of graphs. Our results are tight for general connected districts on complete graphs and tree graphs, as well as star districts on general graphs, but leave gaps for other settings.
We conjecture that there exists an algorithm with a constant approximation factor for $c$-balanced star districting on planar graphs. 
It would also be interesting to develop algorithms to go beyond star districts, i.e., $k$-hop graphs for a constant $k$ or the more general setting of connected districts. We remark that the scattering separator can be modified to handle $k$-hop graphs but we need an efficient separation oracle. 
We consider two types of weights/populations and generalizing the problem and solutions to three or more colors would be interesting and is currently widely open. We remark that the PTAS algorithm for complete graphs is specific for two weights/colors.
Finally, an interesting future direction would be to develop algorithms that also demand  approximate population equality among districts.


\section{Preliminaries}

Let $G=(V, E)$ be an undirected graph where we call the vertices \emph{blocks}.
A \emph{district} $T\subseteq V$ is a subset of blocks where the induced subgraph $G[T]$ is connected. If there exists a block $x\in T$ that is a neighbor of every other block in $T\setminus\{x\}$, then we say $T$ is a \emph{star district} and $x$ is a \emph{center} of $T$.
The \emph{rank} of a district $T$ is the number of blocks in $T$.
A \emph{(partial) districting} $\mathcal{T}$ is a collection of disjoint districts. That is, $\mathcal{T} = \{T_1, T_2, \ldots, T_m\}$ where $T_i\cap T_j=\emptyset$ whenever $i\neq j$. Notice that a districting is not necessarily a partitioning of the graph, i.e., not all blocks are included in the districts.

\paragraph{$c$-Balanced Districting Problems.}
There are two communities or commodities of interest.
Let the functions $p_1, p_2:V\to \mathbb{Z}_{\ge 0}$ represent the \emph{population of each community} or \emph{weight of each commodity} on each vertex. 
The \emph{weight} of a block $w(x)$ is defined to be $p_1(x)+p_2(x)$.
Let $T\subseteq V$ be a district.
By a natural extension we define $p_i(T) := \sum_{x\in T} p_i(x)$ for $i\in\{1, 2\}$ and $w(T) := p_1(T)+p_2(T)$ accordingly as the weight of the district $T$.  
Finally, given a districting $\mathcal{T}$, we define $w(\mathcal{T}) := \sum_{T\in \mathcal{T}} w(T)$ to be the total weight. 

Given a constant $c\ge 2$, we say that a district $T$ is \emph{$c$-balanced} if
\begin{equation}\label{eq:c-balanced}
\min\{p_1(T), p_2(T)\}\ge \frac{w(T)}{c}\ .
\end{equation}
$\mathcal{T}$ is a \emph{$c$-balanced districting} if all districts $T\in \mathcal{T}$ are $c$-balanced. Notice that if the total weights in the graph are not $c$-balanced, we cannot hope to include all blocks in $c$-balanced districts. 
Given $c\ge 2$, a graph $G$, and functions $p_1$ and $p_2$,
the problem \textsc{$c$-Balanced-Districting} is subjected to find any $c$-balanced districting $\mathcal{T}$ that \emph{maximizes} $w(\mathcal{T})$. That is, we wish to maximize population covered in the $c$-balanced districts.

We will investigate several variants to the problem.
A lot of our results concern restricting the output districting to be \emph{star shaped}, respecting the need for compactness of the districts. We also consider districts of bounded rank $k$ if every district has at most $k$ vertices with $k$ assumed to be a constant. In general we consider the weights of the vertices to be arbitrary integer values. A special case is when all weights are uniform (binary) -- each vertex has only one non-zero weight type, either $p_1(x) = 0$, $p_2(x) = 1$ or $p_1(x) = 1$, $p_2(x) = 0$. 

Let $X$ be any variant to the $c$-balanced districting problem.
A districting $\mathcal{T}$ is said to be \emph{feasible} on $X$ if $\mathcal{T}$ satisfies all districting type constraints, but not necessarily to have its weight maximized. Any districting with the maximum possible total weight is said to be \emph{optimal}.
We say that a feasible districting $\mathcal{T}$ is an \emph{$f$-approximated} solution if $f\cdot w(\mathcal{T})\ge w(\mathcal{T}_{\textsf{OPT}})$, where $\mathcal{T}_{\textsf{OPT}}$ is any optimal districting.

\paragraph{Graph Types.} A graph $G=(V, E)$ is said to be \emph{planar} if there exists an embedding of all vertices to the Euclidean plane such that all edges can be drawn without intersections other than the endpoints.
A \emph{face} of an planar embedding is a connected region separated by the embedded edges. $G$ is said to be \emph{outerplanar} if there exists an embedding of $G$ such that there is a face containing all vertices. Often this face is assumed to be the outer face. A graph $H$ is said to be a \emph{minor} of $G$ if $H$ is isomorphic to the graph obtained by a sequence of vertex deletions, edge contractions, and edge deletions from $G$. We say that $G$ is \emph{$H$-minor-free} if $G$ does not have $H$ as its minor. 

\section{Hardness Results}\label{sec:hardness}

We first present hardness results for a variety of $c$-balanced districting problems with increasing restrictions on the parameters.  The proofs are deferred to \Cref{omit3}.

\begin{restatable}{theorem}{BasicHardnessResult}
\label{thm:hardness-basic}
The $c$-balanced districting problem is $\mathsf{NP}$-hard, for both the case when the the districts are connected subgraphs and when the districts are required to be stars.
\end{restatable}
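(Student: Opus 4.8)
The plan is to prove $\mathsf{NP}$-hardness via a reduction from the \textsc{Exact Cover by 3-Sets} (X3C) problem, or more simply from \textsc{Exact Cover}/\textsc{Set Cover} in its exact-coverage decision form, exactly as the introduction hints (``The hardness of the balanced districting problem is immediately shown by a reduction from exact set cover problem''). Recall X3C: given a ground set $U$ of $3q$ elements and a family $\mathcal{S}$ of $3$-element subsets, decide whether there is a subfamily that partitions $U$. This is a natural source because the packing/coverage structure of districting mirrors the disjointness-plus-total-coverage demand of exact cover, and we can graft balancedness on top.

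First I would build a bipartite-style incidence graph $G$: create one ``set vertex'' $a_S$ for each $S\in\mathcal{S}$ and one ``element vertex'' $b_u$ for each $u\in U$, with an edge $\{a_S, b_u\}$ whenever $u\in S$. To force the balancedness constraint to bite in a way that encodes exactness, I would place all of the first weight type on the set vertices and all of the second on the element vertices, tuning the values so that a $c$-balanced district is feasible \emph{only} when it is a star centered at some $a_S$ together with exactly its three elements $b_u$ (so that $p_1$ from the center and $p_2$ from the three leaves hit the balance ratio precisely). For instance, assigning $p_1(a_S)=3$, $p_2(a_S)=0$ and $p_1(b_u)=0$, $p_2(b_u)=1$ with the right choice of $c$ makes the unique balanced district around a set the full $3$-element star, and makes partial stars or any other configuration unbalanced. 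Then a districting collecting the maximum possible total weight that is $c$-balanced corresponds exactly to a maximum collection of disjoint full $3$-stars, i.e., a maximum packing of sets whose elements do not overlap.

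The key step is the \emph{equivalence of optima}: I would argue that the instance admits a districting covering all $3q$ element vertices (equivalently achieving the maximum conceivable weight $W^\star$) if and only if the X3C instance has a yes-answer. The forward direction is routine — an exact cover yields $q$ vertex-disjoint balanced $3$-stars covering everything. The reverse direction is where the weight-tuning must do its work: I must ensure that the \emph{only} way a districting reaches the target weight is by selecting disjoint full $3$-stars, so that their centers give an exact cover; this is exactly why I set the weights so that no non-star or truncated district can be balanced and contribute. Since the same graph serves the star-restricted and general-connected variants (the optimal districts are forced to be stars anyway by the balance tuning), this simultaneously establishes both parts of \Cref{thm:hardness-basic}.

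The main obstacle I anticipate is the reverse-direction weight accounting: I must choose the weights and the constant $c$ so robustly that balancedness \emph{alone} rules out all ``cheating'' districts — in particular districts that mix several set vertices, that use a set vertex with fewer than its three elements, or that chain element vertices through shared sets in the general-connected case. Guaranteeing that every balanced district is precisely one full $3$-star, and that the maximum total weight is attained only by disjoint such stars, is the delicate part; a clean way to enforce it is to make each element vertex's weight small relative to the balance threshold so that a district is balanced if and only if it pairs exactly one center (contributing all of $p_1$) with exactly the complementary $p_2$ mass, leaving no slack for partial or merged configurations. Once this rigidity is secured, the polynomial-time reduction and the biconditional follow immediately, giving $\mathsf{NP}$-hardness for both district types.
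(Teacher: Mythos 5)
Your proposal takes essentially the same route as the paper's proof: a reduction from exact set cover on the bipartite incidence graph, with all of one weight type on set vertices and the other on element vertices, tuned so that the only $c$-balanced districts are full stars centered at set vertices (balancedness forces the element count to be at least the total size of the chosen sets, adjacency bounds it above, and connectivity then forces a single center), whence the maximum coverage weight is attained if and only if an exact cover exists — this matches the paper up to swapping the roles of $p_1$ and $p_2$ and using X3C instead of general exact cover. The one adjustment needed is that the center weights must scale with $c$ (the paper sets $p_2(S_j)=(c-1)|S_j|$), so the argument works for every constant $c\ge 2$; your concrete weights $p_1(a_S)=3$, $p_2(b_u)=1$ force full $3$-stars only at $c=2$, i.e., the weights should be chosen as a function of the given $c$ rather than choosing $c$ to fit the weights.
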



\Cref{thm:hardness-basic} uses a reduction from the \textsc{ExactSetCover} problem.
The \textsc{ExactSetCover} problem remains $\mathsf{NP}$-hard even when each set has \emph{exactly} three elements and no element appears in more than three sets~\cite{Gary1979-pp}, or when each element appears in exactly three sets~\cite{Gonzalez1985-qk}. 
Therefore, if we limit that each district has at most \emph{four} blocks, the problem remains $\mathsf{NP}$-hard. In the following we show that the problem remains hard if each district has at most \emph{three} blocks and the graph is planar. Notice that if each district has at most two blocks, the problem can be solved by maximum matching in polynomial time. 

\begin{restatable}{theorem}{HardnessForPlanarGraphs}
\label{thm:planar_hard}
The $c$-balanced districting problem is $\mathsf{NP}$-hard, when $G$ is a planar graph with maximum degree $3$, each district has rank-$3$ (i.e., with at most three blocks), and the districts must be stars.
\end{restatable}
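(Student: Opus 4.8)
The plan is to prove \Cref{thm:planar_hard} by a reduction from \emph{planar 1-in-3SAT}, which is $\mathsf{NP}$-hard and whose incidence graph (variables and clauses as vertices, edges between a clause and its literals) can be drawn in the plane. The target is a planar graph of maximum degree $3$ in which the only admissible districts are stars of rank at most $3$, so I must design variable gadgets and clause gadgets whose local weight assignments force exactly the ``one true literal per clause'' semantics while keeping every vertex of degree at most $3$. The guiding idea: use the two commodities $p_1,p_2$ to encode truth values, and set the weights so that a district becomes $c$-balanced only when a certain consistent selection of blocks is made; then calibrate a target total weight $W^\ast$ such that a $c$-balanced star districting achieves weight $W^\ast$ if and only if the 1-in-3SAT instance is satisfiable.

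First I would build a \emph{variable gadget} that propagates a single Boolean choice. Because degree is capped at $3$ and districts are stars of rank $\le 3$, each gadget vertex can serve as the center of a star covering at most two leaves, or be a leaf of someone else's star. I would create, for each variable $x$, a small cycle or path of degree-$3$ vertices carrying alternating $(p_1,p_2)$ weights so that there are exactly two maximum-weight $c$-balanced star packings of the gadget, corresponding to $x=\text{true}$ and $x=\text{false}$; any ``mixed'' packing must leave strictly more weight uncovered, and this weight deficit is what the target $W^\ast$ will detect. The literal signal must then be exported to each clause that contains $x$. Since a variable may appear in up to three clauses but its gadget vertices have degree $\le 3$, I would route the signal through a \emph{chain of connector vertices} (a path), again with alternating weights, and a possible \emph{copy/splitter gadget} so that one variable choice can feed several clauses; splitters also must respect the degree-$3$ bound, which is why planarity of the incidence graph matters --- it controls how many edges cross and lets me lay out connectors without exceeding degree $3$.

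Next I would design the \emph{clause gadget} for a clause $(\ell_1\vee\ell_2\vee\ell_3)$ with the 1-in-3 semantics. The gadget should have a distinguished ``reward'' block (or a bounded chunk of weight) that can be packed into a $c$-balanced star \emph{exactly when precisely one} of the three incoming literal signals is in its ``true'' state; if zero or two or three literals arrive true, balancedness of the reward star fails and that reward weight is lost. Concretely, I would let each arriving true literal contribute one type of weight and the clause center contribute the complementary type, tuned so that the balance inequality $\min\{p_1(T),p_2(T)\}\ge w(T)/c$ holds only at the single-true configuration. The total target is then $W^\ast=(\text{weight fully packable by all variable gadgets under any consistent choice})+(\text{sum of clause rewards})$; a districting of weight $W^\ast$ forces every variable gadget into a clean true/false state and every clause into its one-true state, i.e.\ a satisfying 1-in-3 assignment, and conversely any satisfying assignment yields such a districting.

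The main obstacle I expect is the simultaneous satisfaction of three tight constraints: maximum degree exactly $3$, star rank exactly $3$, and planarity, while still forcing crisp Boolean behavior. With rank-$3$ stars each center has only two leaves, so information cannot be aggregated at a single vertex the way it could with high-degree stars; I must therefore spread the ``at most one true'' logic across several degree-bounded vertices and verify that no unintended $c$-balanced star across gadget boundaries lets the adversary recover lost weight (a local case analysis of all star shapes of rank $\le 3$ crossing each interface). I would handle this by choosing the commodity weights from a geometrically separated scale (e.g.\ large distinct values per gadget) so that cross-gadget stars can never be $c$-balanced, effectively decoupling the gadgets; this reduces correctness to a finite, gadget-local check. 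The planar, degree-$3$ layout of connectors and splitters then follows from the planarity of the 1-in-3SAT incidence graph together with standard ``wire-crossing-free'' routing, after which the equivalence between satisfiability and achieving $W^\ast$ completes the reduction.
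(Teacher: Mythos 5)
Your overall architecture matches the paper's proof: a reduction from planar 1-in-3SAT with cycle-based variable gadgets admitting exactly two clean packings, parity-carrying wire chains of alternating-type vertices, a clause vertex whose coverage encodes clause satisfaction, and a target weight $W^\ast$ met exactly by satisfying assignments. However, your clause gadget argument has a genuine gap. You claim that if two or three literals arrive true, ``balancedness of the reward star fails and that reward weight is lost.'' This does not follow in a maximization/packing problem: the solution is not obliged to include \emph{all} true wire-ends in the clause star. If the weights are tuned so that the clause center plus exactly one true wire-end is $c$-balanced, then even when two or three literals arrive true, the adversary simply forms that smaller balanced star (center plus a single wire-end) and still collects the reward. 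As stated, your clause gadget therefore enforces ``at least one true literal'' (OR semantics) rather than the exactly-one semantics that 1-in-3SAT requires, so a formula that is satisfiable but not 1-in-3 satisfiable could still achieve $W^\ast$, and the reduction breaks.

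The repair --- which is precisely what the paper does --- is to make the penalty for surplus true literals fall on the \emph{wires}, not on the clause star: the parity of the pairing along a chain propagating ``true'' forces its final district to absorb the clause center, so when two true chains compete for the center, one of them necessarily strands a vertex (at the clause end, or at a mid-chain parity switch) and that weight is irrecoverably lost. You have the ingredient (parity chains with alternating weights) but never connect it to the clause semantics. A second, smaller omission: your definition of $W^\ast$ presumes each variable gadget packs the same weight under both truth assignments, but when $x_i$ and $\bar{x}_i$ occur in different numbers of clauses, the two choices leave different amounts uncovered; the paper must explicitly pad the beginnings of the tendrils (extra attached low-weight vertices) so that the uncovered weight is identical for both assignments. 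Without such an equalization step, your calibration of $W^\ast$ is not well defined.
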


The proof of the above claim uses reduction from planar 1-in-3SAT. 
We show next that the problem on a complete graph or a tree remains hard. This reduction uses the problem of subset sum. 

\begin{restatable}{theorem}{HardnessForCompleteGraphs}
\label{thm:complete}
The $c$-balanced districting problem is $\mathsf{NP}$-hard for any $c\geq2$, when $G$ is a complete graph or a tree. This holds for both the case when the the districts are connected subgraphs and when the districts are required to be stars. 
\end{restatable}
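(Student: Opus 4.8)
The plan is to give a polynomial-time reduction from the $\mathsf{NP}$-hard Subset Sum problem: given positive integers $a_1,\ldots,a_n$ and a target $T^*$, decide whether some $S\subseteq\{1,\ldots,n\}$ has $\sum_{i\in S} a_i = T^*$. I would build an instance of $c$-balanced districting whose optimum weight reaches a prescribed threshold exactly when the Subset Sum instance is a yes-instance. I treat the complete graph and the tree at once: for the tree case I use the star graph $K_{1,n}$, which is a tree in which every connected induced subgraph of size at least two must contain the center and is therefore automatically a star. Thus on both $K_n$ and $K_{1,n}$ the candidate districts of size $\geq 2$ coincide for the ``connected subgraph'' and ``star'' variants, while single vertices will be trivially unbalanced in my gadget, so a single argument covers all four cases in the statement.

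The gadget uses a single sink vertex $u$ carrying only type-$2$ weight and item vertices $v_1,\ldots,v_n$ carrying only type-$1$ weight. Since $u$ is the unique source of type-$2$ population, every $c$-balanced district of positive weight must contain $u$; as there is only one $u$, at most one balanced district can appear, and it consists of $u$ together with a chosen item subset $S$. Writing $s=\sum_{i\in S} p_1(v_i)$ and $t=p_2(u)$, the balance condition $\min\{s,t\}\geq (s+t)/c$ is equivalent to the window $t/(c-1)\le s\le t(c-1)$, and the weight of this district is $s+t$.

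For $c=2$ the window collapses to the single point $s=t$, so setting $p_1(v_i)=a_i$ and $t=T^*$ makes a balanced district (necessarily of weight $2T^*$) exist iff some subset sums to exactly $T^*$; the optimum is $2T^*$ in a yes-instance and $0$ otherwise. \emph{The main obstacle is $c>2$}, where the window has positive width and balance no longer forces equality. I would overcome this by combining the maximization objective with the \emph{upper} endpoint of the window. Writing $c-1=g/h$ in lowest terms (so $g\geq h\geq 1$ as $c\geq 2$), I set $p_1(v_i)=g\,a_i$ and $t=h\,T^*$. Then the achievable values of $s$ are exactly $g$ times the achievable subset sums of the $a_i$, the upper window endpoint is $t(c-1)=g\,T^*$, and the district weight is at most $t+g\,T^*=(g+h)T^*$. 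Balance forces $s\le g\,T^*$ while attaining the maximum weight forces $s\geq g\,T^*$, so the optimum equals $(g+h)T^*$ exactly when some subset of the $a_i$ sums to $T^*$; one checks that $s=g\,T^*$ also clears the lower endpoint $t/(c-1)=h^2T^*/g$ since $g\geq h$. Deciding whether the optimum reaches $(g+h)T^*$ therefore decides Subset Sum, and as the scaling factors $g,h$ depend only on the fixed constant $c$ the numbers stay polynomial and the reduction is polynomial-time. I would close by noting that this construction specializes to the clean $c=2$ argument, and that both the complete-graph and star-tree realizations yield the claim for connected-subgraph and star districts simultaneously.
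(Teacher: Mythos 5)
Your proposal is correct and takes essentially the same approach as the paper: a reduction from Subset Sum with a single type-2 ``sink'' vertex and type-1 item vertices, realized on the complete graph and on a star-shaped tree, where the maximization objective pushes the district to the upper endpoint of the balance window so that the optimum hits the threshold exactly when a subset sums to the target. Your explicit $g/h$ scaling for rational $c$ and arbitrary target $T^*$ is only a minor refinement of the paper's choice of item weights $(c-1)x_i$ with target $X=\sum_i x_i/2$.
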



Last, we show hardness of approximation by a reduction from the maximum independent set problem.

\begin{restatable}{theorem}{HardnessForApproximation}
\label{thm:apx}
The $c$-balanced districting problem does not have an $n^{1/2-\delta}$-approximation (for any constant $\delta>0$) in a general graph unless $\mathsf{P}=\mathsf{NP}$. On a graph with maximum degree $\Delta$, one cannot approximate the $c$-balanced districting problem within a factor of  $\Delta/2^{O(\sqrt{\Delta})}$ assuming $\mathsf{P}\neq \mathsf{NP}$, and 
$O(\Delta / \log^2 \Delta)$ assuming the Unique Games Conjecture (UGC). Even if $\Delta$ is a constant, the problem is $\mathsf{APX}$-hard. These statements hold when the districts must be stars and when the centers of the stars are limited to a subset of vertices.
\end{restatable}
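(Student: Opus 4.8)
The plan is to reduce from Maximum Independent Set (MIS), exploiting that the non-overlap (packing) requirement of districting mirrors the independence constraint: choosing a vertex should forbid choosing its neighbors, which I encode by letting the districts of adjacent vertices compete for a shared block. Given an MIS instance $G=(V,E)$ with $n_0=|V|$ vertices, $m_0$ edges and maximum degree $\Delta_G$, I build a districting graph $G'$ as follows: introduce a \emph{center} $c_v$ for every $v\in V$; an \emph{edge block} $m_e$ for every $e=(u,v)\in E$, made adjacent exactly to $c_u$ and $c_v$; and $\Delta_G-\deg_G(v)$ \emph{private dummy blocks} attached to each $c_v$, so that every center has exactly $\Delta_G$ leaves. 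Put $p_1(c_v)=(c-1)\Delta_G$ and $p_2(c_v)=0$ on centers, and $p_1=0,\ p_2=1$ on every edge block and dummy (scaling all weights to integers if $c$ is rational). For the center-restricted statement I additionally declare the admissible star centers to be exactly $\{c_v:v\in V\}$.

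The key structural claim is that the only $c$-balanced star districts are the \emph{full} districts $D_v=\{c_v\}\cup(\text{all }\Delta_G\text{ leaves of }c_v)$. Indeed a star centered at $c_v$ with $k$ unit leaves has $p_1=(c-1)\Delta_G$ and $p_2=k$, which is $c$-balanced iff $k\ge\Delta_G$; since $c_v$ has exactly $\Delta_G$ leaves this forces $k=\Delta_G$, and then $w(D_v)=c\Delta_G$ with equality in the balance condition. Thus every admissible district has the \emph{same} weight $c\Delta_G$. Two full districts $D_u,D_v$ are vertex-disjoint iff they share no edge block iff $(u,v)\notin E$ (centers and dummies are private), so any $c$-balanced districting corresponds exactly to an independent set of $G$ and $w(\mathcal{T}_{\textsf{OPT}})=c\Delta_G\cdot\alpha(G)$, where $\alpha(G)$ is the independence number. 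Because all districts are uniform, an $f$-approximate districting contains at least $\alpha(G)/f$ pairwise-disjoint full districts, i.e.\ an independent set of size $\ge\alpha(G)/f$; hence districting is exactly as hard to approximate as MIS, with the identical ratio $f$. For the version with unrestricted centers I additionally rule out every other potential center: a star centered at an edge block (or dummy) has $p_2\le 1$ against $p_1\ge(c-1)\Delta_G$ from its center-leaves and so is never $c$-balanced for $\Delta_G\ge 2$; this leaves the optimum unchanged and, since every balanced district is a star, also establishes the ``districts must be stars'' form.

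Plugging known MIS inapproximability into this ratio-preserving reduction then yields all parts. For general graphs I invoke the $n^{1-\epsilon}$-inapproximability of MIS (H{\aa}stad, Zuckerman). As $\Delta_G\le n_0$, the number of blocks is $N=(\Delta_G+1)n_0-m_0=O(n_0^2)$, so an $N^{1/2-\delta}$-approximation for districting would give an $O(n_0^{1-2\delta})$-approximation for MIS, contradicting the $n_0^{1-\delta}$ threshold. For bounded degree, note that $G'$ has maximum degree exactly $\Delta_G$ (attained at the centers), with edge blocks of degree $2$ and dummies of degree $1$; reducing from MIS on degree-$\Delta$ graphs therefore imports each degree-$\Delta$ MIS bound verbatim --- the $\mathsf{NP}$-hard factor $\Delta/2^{O(\sqrt{\Delta})}$, the $O(\Delta/\log^2\Delta)$-factor $\mathsf{UGC}$-hardness of Austrin--Khot--Safra, and $\mathsf{APX}$-hardness already at constant degree.

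I expect the main obstacle to be two delicate points rather than any single hard lemma. First, tuning the center weight to $(c-1)\Delta_G$ and equalizing leaf counts via dummies so that, for \emph{every} constant $c\ge 2$, a balanced district is forced to be \emph{full}; this is exactly what makes the district weights uniform and pins the correspondence to \emph{unweighted} MIS rather than a degree-weighted variant (the balance window $(c-1)(\Delta_G-1)<W\le(c-1)\Delta_G$ must contain an integer, which holds precisely because $c\ge 2$). Second, the size bookkeeping $N=O(n_0^2)$, which is what converts $n_0^{1-\epsilon}$ MIS-hardness into the $n^{1/2-\delta}$ districting bound and explains why the square-root loss --- matched by the $O(\sqrt{n})$ algorithm of \Cref{sec:LP-star} --- is intrinsic to this encoding of edges as shared blocks.
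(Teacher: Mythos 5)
Your proposal is correct and is essentially the paper's own proof: the same reduction from Maximum Independent Set with heavy centers of weight $(c-1)\Delta$, unit-weight edge blocks shared by adjacent centers, and dummy (dangling) unit blocks padding every center to exactly $\Delta$ leaves, followed by the same size bookkeeping $N=O(n_0\Delta)=O(n_0^2)$ and the same imported inapproximability results (H{\aa}stad/Zuckerman for general graphs, Samorodnitsky for $\Delta/2^{O(\sqrt{\Delta})}$, Austrin--Khot--Safra under UGC, and $\mathsf{APX}$-hardness at constant degree). Your write-up is in fact slightly more explicit than the paper's in verifying that balanced districts must be full stars and that no other vertex can serve as a center.
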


\section[An Algorithm for c-Balanced Star Districting]{An Algorithm for $c$-Balanced Star Districting}\label{sec:LP-star}

In this section, we give an approximation algorithm to the $c$-balanced star districting problem.
The algorithm is based on a multiplicative weights update approach of solving packing-covering linear programs~\cite{PlotkinST91,v008a006,BKS23-lp} and then apply a randomized rounding procedure~\cite{ChanH12}.
Interestingly, the same algorithm achieves different approximation guarantees on different classes of the graphs, summarized in the following theorem.


\begin{theorem}\label{thm:star-district}
Let $G$ be a graph with function of weights $p_1$ and $p_2$.
There exists a polynomial time algorithm that computes a $c$-balanced star districting $\mathcal{T}$, with the following guarantee:
\begin{enumerate}[itemsep=0pt]
    \item[(1)] For any general graph $G$, $\mathcal{T}$ is an $O(\sqrt{n})$-approximate solution.    \item[(2)] If $G$ is planar, then $\mathcal{T}$ is an $O(\log n)$-approximate solution.
    \item[(3)] If $G$ is an $H$-minor-free graph with $|H|=h$, then $\mathcal{T}$ is an $O(h^2\log n)$-approximate solution.
    \item[(4)] If $G$ is a tree or an outerplanar graph, then $\mathcal{T}$ is an $O(1)$-approximate solution.
\end{enumerate}
\end{theorem}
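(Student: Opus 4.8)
The plan is to reduce all four cases to a single algorithmic pipeline and then perform a class-dependent analysis of the rounding loss. First I would write the natural packing LP relaxation: introduce a variable $x_S\in[0,1]$ for every $c$-balanced star district $S$, maximize $\sum_S w(S)\,x_S$ subject to the packing constraint $\sum_{S\ni v}x_S\le 1$ for every vertex $v$ (each block lies in at most one selected district). This LP has exponentially many variables, so I would solve its covering dual via the whack-a-mole / lazy multiplicative-weights framework of~\cite{BKS23-lp}, in which each update requires a separation oracle that, given dual weights on the vertices, produces a star district whose dual cost is small relative to its weight. The key point is that the best star district centered at a fixed $v$ amounts to choosing a $c$-balanced subset of $v$ together with some of its neighbors so as to maximize weight minus dual cost --- exactly the single-district balanced selection problem on a complete (star) instance --- so the FPTAS of~\Cref{thm:fptas-complete} serves as the oracle. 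This yields a $(1-\eps)$-approximate fractional solution with only $\poly(n,1/\eps)$ nonzero variables in polynomial time.

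Next I would round $\{x_S\}$ to a feasible integral districting $\cT$ using the Chan--Har-Peled scheme: process the support in decreasing order of $w(S)$ and include each $S$ independently with probability proportional to $x_S$, discarding $S$ whenever it conflicts with a district already chosen. A district is dropped only if some earlier (heavier) district sharing a vertex with it was selected, so the expected weight lost is controlled by the \emph{correlation} $\sum_{A\cap B\neq\emptyset}x_A x_B\cdot\min\{w(A),w(B)\}$ over overlapping pairs. The heart of the analysis is to bound this correlation by $\rho\cdot\sum_S w(S)\,x_S$ for a class-dependent factor $\rho$, which immediately gives an $O(\rho)$-approximation.

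For the general-graph bound I would use that the star districts centered at any single vertex carry total $x$-mass at most $1$, and combine this with a counting argument over shared centers and shared leaves to obtain $\rho=O(\sqrt n)$, matching the hardness of~\Cref{thm:apx}. The structured cases all follow from a divide-and-conquer on a \emph{balanced scattering separator} --- a balanced separator that decomposes into $k$ sets, each of which is $5$-hop independent. Fixing one such set $X$, two star districts that meet $X$ at distinct vertices are forced to be disjoint (the $5$-hop independence buys exactly this decoupling for star districts), while those meeting $X$ at a common vertex have total $x$-mass at most $1$; summing over the $k$ pieces bounds the separator-touching correlation by $O(k)$ times the separator mass, and recursing on the two balanced sides contributes the extra $O(\log n)$ factor. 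I would then invoke the fundamental-cycle separator for planar graphs (two shortest paths, giving $k=O(1)$) for $O(\log n)$, establish the analogue for $H$-minor-free graphs with $k=O(h^2)$ for $O(h^2\log n)$, and for trees and outerplanar graphs avoid recursion entirely by partitioning the whole vertex set into $O(1)$ many $5$-hop independent sets, giving $\rho=O(1)$.

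The main obstacle will be establishing the balanced scattering separators themselves --- in particular proving that $H$-minor-free graphs admit a balanced separator expressible as $O(h^2)$ many $5$-hop independent sets --- together with verifying that $5$-hop independence is precisely the property that decouples overlapping star districts in the correlation sum. Making the recursion telescope cleanly, so that each vertex is charged only $O(\log n)$ times across levels and the balancedness keeps the recursion depth logarithmic, is the other delicate point.
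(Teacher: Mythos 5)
Your pipeline for cases (1)--(3) is essentially the paper's own argument: the packing LP over balanced star districts, the whack-a-mole/lazy-MWU solver with the complete-graph FPTAS (\Cref{thm:fptas-complete}) serving as separation oracle, Chan--Har-Peled randomized rounding, a counting argument for the $O(\sqrt n)$ bound on general graphs (the paper's \Cref{lem:lp-general-graph} makes your sketch precise by splitting districts into those of size $\ge\sqrt n$, whose total $x$-mass is at most $\sqrt n$ by double counting, and smaller ones, whose overlaps are charged to shared vertices), and balanced scattering separators with divide-and-conquer (\Cref{lem:divide-and-conquer-analysis}) instantiated by the fundamental-cycle separator for planar graphs and an Alon--Seymour--Thomas-style construction giving $O(h^2)$ many $5$-hop independent sets for $K_h$-minor-free graphs.

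Case (4), however, has a genuine gap. You propose to handle trees and outerplanar graphs by ``partitioning the whole vertex set into $O(1)$ many $5$-hop independent sets.'' No such partition exists: the star $K_{1,n}$ is both a tree and outerplanar, yet all of its vertices are pairwise within distance $2$, so every $5$-hop independent set is a singleton and any partition needs $n+1$ parts. The sequential (orderly-scattered) relaxation does not rescue the claim either: it fixes the star (remove the center, then the leaves), but on a complete binary tree two siblings can share a round only after their parent has been deleted, and all vertices near the root are pairwise within distance $4$, which forces roughly one BFS level per round near the top (already depth $3$ forces four rounds); consistent with this, the paper explicitly declines to assert constant orderly-scattering of the entire vertex set even for planar graphs. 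What the paper actually does for case (4) avoids whole-graph scattering entirely. For trees (\Cref{lem:lp-trees}) it roots the tree, splits districts into those that contain $\mathrm{parent}(c_S)$ and those that do not, and charges each overlapping pair either to the shared center or to the parent of the center, bounding the correlation sum by $\sum_S x_S$. For outerplanar graphs (\Cref{lem:outerplanar}) it uses BFS layers: $K_{2,3}$- and $K_4$-minor-freeness imply each layer $L_i$ is $(17,5)$-scattered \emph{only inside the subgraph induced by $L_i$ and the deeper layers}, each overlapping pair is charged to its vertex closest to the root, and every district is charged $O(1)$ times ($17$ from its own layer, at most $2$ and $4$ from the two layers above). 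So your instinct to avoid recursion for these classes is correct, but the mechanism has to exploit a rooted or layered structure with one-sided ($5$-hop) independence; the global $O(1)$ scattering you rely on is false.
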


In \Cref{sec:star-district-alg}, we formulate the problem as a linear program.
In \Cref{sec:star-district-algorithm-overview}, we describe how to apply the Whack-a-Mole algorithm~\cite{BKS23-lp} (with our own separation oracle) that obtains an $(1+\epsilon)$-approximate solution to the linear program in polynomial time.
In \Cref{sec:star-randomized-rounding}, we apply Chan and Har-Peled's randomized rounding technique~\cite{ChanH12}, showing that bounding the pairwise product terms leads to the desired approximation factors.
We show that there is a bound of $O(\sqrt{n})$ on any graph in \Cref{sec:star-districting-general-graphs},
To bound the pairwise product terms, we introduce the scattering separators in \Cref{sec:scattering-separator}. 
These scattering separators are useful for analyzing 
the approximation ratios for  planar graphs in \Cref{sec:planar-graphs} and 
for minor-free graphs in \Cref{sec:minor-free-graphs}.
For the graph classes that is a subclass to the planar graphs, we provide tailored-but-better analysis
 for outerplanar graphs in \Cref{sec:outerplanargraphs} and 
for trees in \Cref{sec:lp-trees}, which conclude the proof of \Cref{thm:star-district}. 

\subsection{LP Formulation}
\label{sec:star-district-alg}

We formulate the $c$-balanced districting problem as a linear program. 
For each $c$-balanced star district $S$, we define a variable $x_S$ indicating whether or not this district is chosen.
Thus, the integer linear program for 
can be defined as:
\begin{equation}\label{eq:lp-main}
\begin{aligned}
\text{maximize}\ &\sum_{S} w(S) x_S \\
\text{subject to}\ &\forall v\in V, \sum_{S\ni v} x_S \le 1\\
\ &\forall S, x_S\in \{0, 1\}
\end{aligned}
\end{equation}

To give an approximate solution to the above integer linear program, we follow the standard recipe that solves the relaxed linear program first and then apply a randomized rounding algorithm.

\paragraph{Equivalent Relaxed Linear Program.} 
In order to solve the relaxed linear program of \Cref{eq:lp-main}, we use \emph{weighted} variables: for each district $S$, we define variable $x'_S := w(S)x_S$. Hence, the equivalent linear program (and its dual linear program) we will be solving can be described as follows.
\[
\begin{array}{c|c}
(\textsc{Primal}) & (\textsc{Dual})
\\[5pt]
\begin{aligned}
    \text{maximize }\  & \sum_{S}  x'_S\\
    \text{subject to }\  & \forall v\in V,\ \sum_{S\ni v} \frac{1}{w(S)} x'_S \le 1\\
    & x'_S \ge 0
\end{aligned} &
\begin{aligned}
    \text{minimize }\  & \sum_{v} y'_v\\
    \text{subject to }\  & \forall S,\
     \sum_{v\in S}  \frac{1}{w(S)} y'_v \ge 1\\
    & y'_v \ge 0
\end{aligned}
\end{array}
\]

We note that the total number of primal variables (i.e., the number of potential $c$-balanced star districts) could be exponentially many in terms of the graph size.
However, due to the special structure of this problem,
seeking for an approximate solution does not require the participation of every variable.
We summarize the result of solving the relaxed linear program as \Cref{theorem:star-lp-main} below with details in \Cref{sec:star-district-algorithm-overview} and \Cref{sec:whack-a-mole}.
A conclusive proof to \Cref{theorem:star-lp-main} is carried out in \Cref{sec:appendix-proof-of-thm-star-lp-main}.

\begin{theorem}
\label{theorem:star-lp-main}\label{theorem:star-main}
Given a graph $G$ and a precision parameter $\epsilon \in (0, \min\{\frac12, \frac{c-2}{c}\})$, there exists an algorithm that returns an $(1-\epsilon)$-approximate solution $\{x'_S, y'_v\}$ to the above linear program in $\poly(n, 1/\epsilon, \log(w(G)))$ time.
Moreover, there are at most $\poly(n, 1/\epsilon)$ non-zero terms among the returned primal variables $\{x'_S\}$.
\end{theorem}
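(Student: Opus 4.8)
The plan is to solve the dual covering LP by a (lazy) multiplicative weights update, instantiating the Whack-a-Mole framework of~\cite{BKS23-lp} on the $n$ dual variables $\{y'_v\}$, and to supply it with a separation oracle built from the complete-graph FPTAS (\Cref{thm:fptas-complete}). The primal is a packing LP with exponentially many variables $x'_S$ but only $n$ covering constraints in the dual, so the entire difficulty is concentrated in identifying, at each step, a (nearly) most-violated dual constraint, i.e.\ a $c$-balanced star district $S$ whose normalized dual load $\frac{1}{w(S)}\sum_{v\in S} y'_v$ is smallest. The framework then raises the corresponding primal variable $x'_S$ and reweights the vertices of $S$; since each step touches exactly one district, after the $\poly(n,1/\epsilon)$ steps the framework is known to take, at most $\poly(n,1/\epsilon)$ primal variables are nonzero, which gives the sparsity claim. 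The $\log(w(G))$ factor in the running time enters only through the width of the LP and the parametric search inside the oracle.

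The core new ingredient is the separation oracle. Given current weights $y'_v\ge 0$, I would compute $\min_S \frac{\sum_{v\in S} y'_v}{w(S)}$ over $c$-balanced stars $S$. This is a ratio (fractional) optimization, which I would linearize by a parameter $\lambda$: it suffices to decide, for a guess $\lambda$, whether some $c$-balanced star satisfies $\sum_{v\in S}(\lambda w(v)-y'_v)>0$, while tracking the maximizer. Setting the per-vertex gain $g_\lambda(v):=\lambda w(v)-y'_v$ and fixing a center $u$, a star district is $\{u\}\cup S'$ for an arbitrary $S'\subseteq N(u)$, with no connectivity constraint among the chosen neighbors; hence maximizing $\sum_{v\in\{u\}\cup S'} g_\lambda(v)$ subject to $c$-balance is exactly a single-district balanced-selection problem of the type the complete-graph FPTAS solves (the center is a forced block and the neighbors form the free candidate pool). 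I would run that FPTAS once per center $u$, keep the best star, and binary-search the ratio value with $O(\log(w(G)/\epsilon))$ iterations to approximate the optimum to within $(1+\epsilon)$, giving an approximate oracle running in $\poly(n,1/\epsilon,\log(w(G)))$ time.

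With the oracle in hand, the remaining work is bookkeeping on the error budget. The lazy-MWU analysis converts a $(1+\epsilon')$-approximate oracle into a $(1-\epsilon)$-approximate pair $\{x'_S, y'_v\}$, provided $\epsilon'$ is set to a suitable constant fraction of $\epsilon$; I would split $\epsilon$ among the FPTAS error, the $\lambda$ binary-search error, and the MWU convergence error, and verify that the composition stays within $(1-\epsilon)$. The hypothesis $\epsilon<\frac{c-2}{c}$ enters precisely here: it leaves enough slack in the balance constraint~\eqref{eq:c-balanced} that the approximately-balanced districts returned by the FPTAS (which only satisfies balance up to weight rounding) are genuinely $c$-balanced, so every district the oracle emits is a feasible primal variable rather than a slightly infeasible one.

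I expect the main obstacle to be the oracle's feasibility guarantee rather than its running time: the FPTAS rounds weights and hence returns districts that are only approximately balanced, yet the LP insists on exact $c$-balance, so the oracle must certify true feasibility while losing only a $(1+\epsilon)$ factor in the ratio objective. The prioritized-trimming idea underlying the complete-graph FPTAS is what makes this possible, and reconciling its guarantee with the exact balance constraint---together with the standard but delicate width and running-time accounting of the lazy updates---is where the real care is needed. The full argument is carried out in \Cref{sec:star-district-algorithm-overview} and \Cref{sec:whack-a-mole}.
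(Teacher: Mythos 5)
Your overall architecture matches the paper's (lazy MWU/Whack-a-Mole on the dual, with a separation oracle built from the complete-graph FPTAS), but there is a genuine gap at the single most important step: the bound on the number of oracle calls. You assert that ``after the $\poly(n,1/\epsilon)$ steps the framework is known to take'' only polynomially many primal variables are raised, but with exponentially many dual constraints the framework is \emph{not} known to take polynomially many steps when the oracle returns an arbitrary violated constraint --- and your oracle, which returns the \emph{ratio-most-violated} district $\arg\min_S \frac{1}{w(S)}\sum_{v\in S} y'_v$, is exactly such an oracle. The paper's resolution requires a different selection rule: the oracle must return a violating district whose \emph{weight} is at least $\tfrac12 \max_{T\in\mathcal{S}_{\mathrm{violate}}} w(T)$ (see \Cref{eq:weak-violate}). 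That near-maximum-weight property is what powers the potential argument with $\phi(v) = 1 - y'_v / \max_{S\in\mathcal{S}_v} w(S)$: whacking a near-max-weight violating district forces some vertex's potential to drop by at least $\epsilon/(4n)$, giving $O(n^2/\epsilon)$ whacks per phase and, combined with the $O(\epsilon^{-2}\log n)$ phase bound, the polynomial iteration count and the sparsity claim. A ratio-most-violated district can have weight far smaller than $\max_{S\in\mathcal{S}_v} w(S)$ for every vertex it contains, in which case the per-whack potential drop is arbitrarily small and the argument collapses; since weights are given in binary, one cannot rule out super-polynomially many distinct constraints being enforced one by one within a phase. Your proposal supplies no substitute argument, so the heart of the theorem is missing.

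There is also a technical problem in your oracle itself. After the Lagrangian linearization, the subproblem is to maximize $\sum_{v\in\{u\}\cup S'} \left(\lambda\, w(v) - y'_v\right)$ over $c$-balanced stars, and these per-vertex gains can be \emph{negative}. \Cref{thm:fptas-complete} maximizes the nonnegative quantity $p_1(S)+p_2(S)$ and its prioritized trimming relies on nonnegativity and multiplicative approximation, which is meaningless for a signed objective whose optimum sits near zero (precisely the regime your binary search on $\lambda$ probes). The paper avoids this entirely: it keeps the objective as the (nonnegative) district weight, adds a \emph{third nonnegative dimension} tracking $\sum_{v\in S} y'_v$ to the FPTAS list, and trims only when all three coordinates are multiplicatively close (\Cref{lem:separation-oracle}). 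This simultaneously solves the implementation problem and delivers the near-max-weight guarantee that the iteration bound needs. Your reading of where $\epsilon < \frac{c-2}{c}$ enters is essentially right in spirit (it certifies that the district the oracle emits is genuinely $c$-balanced), though the mechanism is domination plus approximation within the trimmed list, not rounding of weights.
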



\subsection{Solving Relaxed Linear Program}\label{sec:star-district-algorithm-overview}
We run a modified version of Whack-a-Mole Algorithm by Bhattacharya, Kiss, and Saranurak \cite[Figure 3]{BKS23-lp} to solve the dual linear program. 
In this section we only highlight the challenges and describe the modifications that resolve the challenges.
For the sake of completeness, we 
provide a full walkthrough and the analysis of the algorithm in \Cref{sec:whack-a-mole}.

At a high level glance, the Whack-a-Mole framework first reduces  (via binary search on the objective value $\mu$ and re-scaling of the linear program) to the problem of  reporting either a feasible primal solution of value $\ge 1$ or a feasible dual solution of value $\le 1+\epsilon$. To solve this reduced task, a multiplicative weights update (MWU) approach is applied: the algorithm initializes $y'_v=1/n$, and iteratively finds any \emph{strongly violating dual constraint}. A strongly violating dual constraint refers to a $c$-balanced district $S$ such that $\sum_{v\in S} y'_v < (1-\epsilon)\cdot w(S)$. Upon discovering any strongly violating constraint, the algorithm multiplicatively increases the weight of the vertices that are involved in the constraint.
To guarantee an efficient runtime, the Whack-a-Mole algorithm uses a lazy and approximated MWU approach where it normalizes the dual variables $\{y'_v\}$ only when the sum exceeds $1+\epsilon$. 
In the end, if all constraints are satisfied, then a feasible dual solution of value $\le 1+\epsilon$ is found. Otherwise, the MWU framework guarantees a feasible primal solution after a sufficient number 
(namely, at least $\mu \epsilon^{-2}\ln n$)
of iterations.

\paragraph{The Challenges and the Solutions.}
The main obstacle from directly applying the BKS Whack-a-Mole algorithm is that their algorithm runs in near-linear time in terms of the number of non-zero entries. 
In particular, in their algorithm, every constraint is examined at least once. If a violating constraint is detected, a ``whack'' step is performed which corresponds to a MWU step and perhaps fixing that violation. 
In our case, the number of constraints of a dual linear program may be exponential in $n$.
Thus, a straightforward analysis to a direct implementation of BKS algorithm leads to a running time exponential in $n$.

Fortunately, in our case it is not necessary to explicitly check every constraint. 
We develop a polynomial time \emph{separation oracle} which either locates a good enough violating constraint or reports that all constraints are not strongly violating.
With a tweaked potential method, we are able to show that, if in each iteration, we carefully pick a good enough violating constraint, the number of violating constraints being considered during the entire Whack-a-Mole process will be at most $\text{poly}(n, 1/\epsilon)$.
Therefore, by plugging in our separation oracle to the Whack-a-Mole algorithm, we are able to claim a polynomial runtime.



\paragraph{The Requirements to the Separation Oracle.}
Assume $\mu=1$ (or we may assume that for all vertex $v$, the weights $w(v)$ are already replaced with $w(v)/\mu$) for simplicity.
To obtain a polynomial upper bound on the number of iterations,
our algorithm chooses some violating constraint that approximately maximizes its weight.
Let us define the set of all \emph{strongly violating cosntraints} as follows.
\begin{equation}\label{eq:violate}
\mathcal{S}_{\mathrm{violate}} := \left\{ S\ \bigg|\ \sum_{v\in S} y'_v < (1-\epsilon)\cdot w(S) \right\}.
\end{equation}
Our separation oracle does the following: it either returns any (possibly weaker) violating $S'$ such that 
\begin{equation}\label{eq:weak-violate}
\begin{aligned} 
v\in S' && \text{and}\ \ \ \sum_{v\in S'} y'_v < (1-\epsilon/2)\cdot w(S') && \text{and}\ \ \  w(S') \ge \frac12\cdot \max_{T\in \mathcal{S}_{\mathrm{violate}}} w(T),
\end{aligned}
\end{equation}
or certifies that $\mathcal{S}_\mathrm{violate}=\emptyset$.

\paragraph{Bounding the Number of Calls to the Separation Oracle.}
We show that the above modification leads to a polynomial upper bound on the number of iterations.
It suffices to bound the number of iterations within a single \emph{phase} --- within a phase, the dual variables are not normalized and are non-decreasing. At the beginning of a phase, the dual variables are normalized such that their sum becomes $1$.
Let us define a potential function $\phi$ for each vertex $v$. For brevity we denote $\mathcal{S}_v := \{S\in\mathcal{S}_{\mathrm{violate}}\ |\ v\in S\}$.

\begin{equation}\label{eq:def-potential}
\phi(v) := \begin{cases}
    1 -  \displaystyle\frac{y'_v}{\displaystyle  \max_{S\in \mathcal{S}_v} w(S)} & \text{if } \mathcal{S}_v\neq\emptyset,\\
    0 & \text{otherwise.}
\end{cases}
\end{equation}

Notice that, whenever there is a violating constraint $S$ that contains $v$, it is guaranteed that $\phi(v) > \epsilon$.
This can be proved by contradiction:
If $\phi(v) \le \epsilon$, then for any $S\in\mathcal{S}_v$, we have $$\sum_{u\in S} y'_u \ge y'_v \ge (1-\epsilon) \cdot \max_{T\in \mathcal{S}_v} w(T) \ge (1-\epsilon)\cdot  w(S),$$ which implies that $S\notin \mathcal{S}_{\mathrm{violate}}$ by \Cref{eq:violate}, a contradiction. 

Suppose now a violating constraint $S'$ is returned from the separation oracle and $v\in S'$.
By \Cref{eq:weak-violate}, we know that $\sum_{v\in S'} y'_v < (1-\epsilon/2)\cdot w(S')$. After ``whacking'' the constraint, all vertices in $S'$ have their dual value increased such that $\sum_{v\in S'} y'_v \ge w(S')$ (or the total sum $\sum_v y'_v$ becomes too large and a new phase is considered.)
Let $\Delta y'_v$ denote the net increase of the value $y'_v$ after ``whacking'' the constraint $S'$.
Thus,
\begin{equation}
\sum_{v\in S'}\Delta y'_v \ge {w(S')}\cdot (\epsilon/2),
\end{equation}
which implies that, there exists at least one $v\in S'$ such that
\begin{equation}\label{eq:7}
    \Delta y'_v \ge \frac{1}{|S'|} \cdot w(S')\cdot (\epsilon/2) \ge \frac{1}{n} \cdot w(S')\cdot (\epsilon/2) \ge \frac{\epsilon}{4n} \cdot \max_{S\in \mathcal{S}_v} w(S).
\end{equation}
We remark that the last inequality is due to the property of the violating constraint returned by the separation oracle (see \Cref{eq:weak-violate}).
Now, 
we claim that the new potential at this particular vertex $v$ must decrease by at least $\epsilon/(4n)$ (or drop to 0). Indeed, we notice that after each whack, $y'_v$ is non-decreasing and $\max_{S\in \mathcal{S}_v} w(S)$ is non-increasing (since the set of violating constraints $\mathcal{S}_v$ might shrink). When there are still strongly violating constraint after the update, the new potential function for $v$ becomes:
\begin{align*}
\phi_{\mathrm{new}}(v) &= 1 - \frac{y'_v + \Delta y'_v}{\displaystyle  \max_{S\in \mathcal{S}^{\mathrm{new}}_v} w(S)}
\le 1 - \frac{y'_v + \Delta y'_v}{\displaystyle  \max_{S\in \mathcal{S}_v} w(S)} \tag{$\{y'_u\}$ non-decreasing implies $\mathcal{S}^{\mathrm{new}}_v\subseteq \mathcal{S}_v$}\\
&= \left(1 -  \frac{y'_v}{\displaystyle  \max_{S\in \mathcal{S}_v} w(S)}\right)
- \frac{\Delta y'_v}{\displaystyle  \max_{S\in \mathcal{S}_v} w(S)} 
\le \phi(v) - \frac{\epsilon}{4n}. \tag{by \eqref{eq:def-potential} and \eqref{eq:7}}
\end{align*}
Therefore,
we conclude that there can only be at most $O(n^2/\epsilon)$ whacks during a phase, since there are $n$ vertices and each can be whacked at most $O(n/\epsilon)$ times.
Finally, a bound of $O(\log n/\epsilon^2)$ to the number of phases (see \Cref{thm:mwu-phases}) implies a desired polynomial upper bound.

\paragraph{Implementing the Separation Oracle.}
The last piece of the entire polynomial time algorithm would be to implement a desired separation oracle efficiently.
The following lemma (\Cref{lem:separation-oracle}) reduces the task of finding a violating district to solving the $c$-balanced star districting problem in a complete graph.

\begin{restatable}{lemma}{LemmaSeparationOracle}
\label{lem:separation-oracle}
Given an input instance $G=(V, E, (p_1, p_2))$, re-weighted values $w':V\to \{0\}\cup [\frac{1}{w(G)}, w(G)]$, and dual variables $y'_v\in [n^{-(1+1/\epsilon)}, 1+\epsilon]$ for each vertex $v\in V$, there exists an algorithm that either reports that $\mathcal{S}_\mathrm{violate}=\emptyset$, or returns a $c$-balanced district $S$ such that
$\sum_{v\in S} y'_v < (1-\epsilon/2) w'(S)$ and $w'(S)\ge \frac12 w'(S_{\mathrm{max}})$, where $S_{\mathrm{max}}=\arg\max_{S\in \mathcal{S}_\mathrm{violate}} w'(S)$ is a violating $c$-balanced district with the maximum value. The algorithm runs in $O(\epsilon^{-6} n^6(\log n)(\log w(G))^4)$ time.
\end{restatable}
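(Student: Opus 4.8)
The plan is to reduce the search for a (weakly) violating constraint to a polynomial number of calls to the complete-graph FPTAS of \Cref{thm:fptas-complete}. The first observation is that a star district is pinned down by its center: for a fixed center $u$, the admissible star districts are exactly the subsets $S$ with $u\in S$ and $S\setminus\{u\}\subseteq N(u)$, and \emph{every} such subset is feasible (no adjacency is needed among the leaves). Thus, restricted to center $u$, choosing $S$ is precisely a complete-graph $c$-balanced districting instance on vertex set $\{u\}\cup N(u)$ with a forced element $u$. I would iterate over all $n$ choices of $u$, solve the per-center problem, and return the best district found; so it suffices to handle one center.

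For a fixed center $u$ I want the $c$-balanced star district $S\ni u$ that maximizes $w'(S)$ subject to the weak violation inequality $\sum_{v\in S} y'_v < (1-\eps/2)\,w'(S)$. The difficulty is that the violation threshold itself scales with the objective $w'(S)$, so the constraint and the objective are coupled. I would decouple them by \emph{guessing} the target weight $W\approx w'(S_{\mathrm{max}})$ along a geometric grid of ratio $1+\delta$ with $\delta=\Theta(\eps)$ over the range $[\,w(G)^{-1},\, n\,w(G)\,]$, which has $O(\eps^{-1}\log(n\,w(G)))$ grid points. For each guess I set a \emph{hard budget} $B:=(1-\eps/2)\,W$ and maximize $w'(S)$ over $c$-balanced star districts centered at $u$ subject to $\sum_{v\in S} y'_v \le B$. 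Since $w'$ is (up to scaling and clamping) a copy of the total weight that the complete-graph FPTAS already maximizes, the only genuinely new ingredient is the linear budget on $\sum_{v\in S} y'_v$, which I would incorporate as one extra coordinate in the dynamic program underlying \Cref{thm:fptas-complete}, bucketing the $y'$-cost geometrically so that this dimension also contributes only $\poly(n,1/\eps,\log w(G))$ states while the prioritized trimming continues to preserve $c$-balancedness.

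For correctness I would argue two directions. If $\mathcal{S}_{\mathrm{violate}}\neq\emptyset$, let $S_{\mathrm{max}}$ be a maximum-$w'$ strongly violating district, centered at some $u_0$, so $\sum_{v\in S_{\mathrm{max}}} y'_v < (1-\eps)\,w'(S_{\mathrm{max}})$. Taking the grid point $W$ just below $w'(S_{\mathrm{max}})$, so that $W\le w'(S_{\mathrm{max}})\le (1+\delta)W$, the gap between $1-\eps$ and $1-\eps/2$ together with $\delta\le \eps/(2(1-\eps))$ makes $S_{\mathrm{max}}$ feasible for the budgeted problem at center $u_0$, because $(1-\eps)(1+\delta)\le 1-\eps/2$ forces $\sum_{v\in S_{\mathrm{max}}} y'_v < B$. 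Hence the FPTAS at $(u_0,W)$ returns a district $S$ whose $w'(S)$ is within the approximation factor of $w'(S_{\mathrm{max}})$ and with $\sum_{v\in S} y'_v\le B$; a short calculation then shows the two slacks we are granted (the factor $2$ on $w'$ and the $\eps/2$ in the violation) absorb the multiplicative FPTAS error, so the returned $S$ is weakly violating and satisfies $w'(S)\ge \tfrac12\,w'(S_{\mathrm{max}})$. Conversely, if no (center, $W$) pair ever yields a weakly violating district, the contrapositive of the above shows that no strongly violating district exists, and I safely report $\mathcal{S}_{\mathrm{violate}}=\emptyset$.

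The running time is the product of the $n$ centers, the $O(\eps^{-1}\log(n\,w(G)))$ guesses, and one budgeted complete-graph FPTAS call whose DP carries $\poly(n,1/\eps,\log w(G))$ states; collecting these factors yields the stated $O(\eps^{-6} n^{6}(\log n)(\log w(G))^4)$ bound. I expect the main obstacle to be the coupled per-center step: the objective $w'(S)$, the violation threshold that itself scales with $w'(S)$, and the hard $c$-balancedness constraint must be reconciled inside a single FPTAS, and the trimming that keeps the DP polynomial has to be shown not to destroy $c$-balancedness while its error stays within the factor-$2$ and $\eps/2$ slacks. This is exactly the point where the prioritized-trimming analysis of \Cref{thm:fptas-complete} must be re-examined with the extra $y'$-cost coordinate, and getting the three discretizations (the weight guess $W$, the balancedness buckets, and the cost buckets) to interact consistently is the delicate part.
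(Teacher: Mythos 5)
Your proposal is correct in substance and rests on the same core idea as the paper's proof: reduce to per-center complete-graph instances and extend the prioritized-trimming DP of \Cref{thm:fptas-complete} with one extra, geometrically bucketed coordinate recording $\sum_{v\in S} y'_v$, then let the gap between $(1-\epsilon)$ and $(1-\epsilon/2)$ and the factor-$2$ weight slack absorb all discretization errors. The difference is your outer layer: you guess the target weight $W$ on a geometric grid and impose a hard budget $B=(1-\epsilon/2)W$ on the dual sum, solving a budgeted maximization per guess. The paper shows this layer is unnecessary: once the DP carries the third coordinate, the final candidate list already contains, for \emph{every} $c$-balanced district $S$ (in particular $S_{\mathrm{max}}$), an entry whose $(p_1,p_2)$ pair is an $(\epsilon/10)$-approximation, whose $c$-balancedness is preserved by $\ell_j$-domination, and whose dual sum is within $e^{\pm\epsilon/10}$ of that of $S$. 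So one simply scans the final list, tests the weak-violation inequality exactly on each entry, and returns the heaviest violator. This sidesteps the objective--threshold coupling you worked to decouple, and saves the $O(\epsilon^{-1}\log(n\,w(G)))$ factor of guesses.

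One concrete warning about your constants: with the budget literally set to $B=(1-\epsilon/2)W$, the final step fails. The budgeted FPTAS returns $S$ with $\sum_{v\in S} y'_v \le (1-\epsilon/2)W$ but only $w'(S)\ge e^{-\epsilon_0}W$ for its approximation error $\epsilon_0>0$, so you can conclude only $\sum_{v\in S} y'_v \le (1-\epsilon/2)e^{\epsilon_0}\,w'(S)$, which does not give the required strict inequality $\sum_{v\in S} y'_v < (1-\epsilon/2)\,w'(S)$. You must reserve part of the $\epsilon/2$ gap for the budget itself, e.g., take $B=(1-3\epsilon/4)W$ with $\delta\le \frac{\epsilon/4}{1-\epsilon}$ (so $S_{\mathrm{max}}$ remains feasible because $(1-\epsilon)(1+\delta)\le 1-3\epsilon/4$) and make the FPTAS and bucketing precisions a sufficiently small constant multiple of $\epsilon$; the remaining slack then absorbs $e^{\epsilon_0}$ and the error in enforcing the bucketed budget. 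With that adjustment, your argument---including returning the heaviest verified violator over all pairs $(u,W)$, and the contrapositive certifying $\mathcal{S}_{\mathrm{violate}}=\emptyset$ when nothing is found---goes through.
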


Since the proof of \Cref{lem:separation-oracle} uses an FPTAS for the complete graph case (a more general version which looks for connected $c$-balanced districting is described in  \Cref{sec:complete}), we defer the proof to \Cref{sec:appendix-solve-separation-oracle}.
We remark that the proof of \Cref{lem:separation-oracle} is dedicated in the two-color and star districting setting. That said, if one would like to extend the framework to a more general setting (e.g., bounded diameter districting, general connected subgraph districting, or three-or-more colors setting), the implementation of the separation oracle has to be re-designed.

\subsection{The Randomized Rounding Algorithm}\label{sec:star-randomized-rounding}

We use a randomized rounding technique modified from \cite[Section 4.3]{ChanH12}.
Intuitively, the algorithm maintains a set $I$ of non-overlapping districts, which is initially empty, and keeps adding districts into $I$.

The rounding algorithm is described as follows.
Let $\{x_S\}$ be the output of an approximate solution to LP.
Let $\mathcal{S}_{\mathrm{LP}}=\{S\ |\ x_S\neq 0\}$ be the support of the solution.
The algorithm first 
sorts all non-zero valued districts according to their weights $w(S)$, from the largest to the smallest.
Let $\tau \ge 1$ be a parameter to be decided later.
For each district $S$, with probability $x_S / \tau$, the algorithm adds $S$ into $I$ as long as there is no district in $I$ overlapping with $S$. \footnote{The randomization step appears to be necessary since there is an example where deterministic rounding incurs a large approximation factor, see \Cref{appendix:det-rounding}. }
The algorithm outputs $I$ after all districts in $\mathcal{S}_{\mathrm{LP}}$ are considered.
The necessity of scaling the non-zero variables by $\tau$ comes from the analysis of expected total weight in $I$.
In an actual implementation of the algorithm, one can make the algorithm oblivious of $\tau$, by iteratively testing on different values of $\tau=(1+\epsilon)^k$ for $k=0, 1, 2, \ldots, O(\epsilon^{-1}\log n)$ and then picking the largest weighted districting among the returned ones.

\paragraph{Analysis.} The output $I$ of the algorithm can be seen as a random variable.
Let $w(I)$ be the total weight of the districts within $I$.
A straightforward analysis (see \Cref{lem:randomized-rounding}) shows that 
\[
\mathbf{E}[w(I)] \ge \sum_{S\in \mathcal{S}_{\mathrm{LP}}} w(S) \frac{x_S}{\tau} - \sum_{A, B\in \mathcal{S}_{\mathrm{LP}}:\ A\cap B\neq \emptyset} \min(w(A), w(B)) \frac{x_Ax_B}{\tau^2}\ .
\]

The right hand side of the above expression contains a weighted correlation term.
The technique by Chan and Har-Peled~\cite{ChanH12} transforms the above weighted correlation terms into \emph{unweighted} ones. 
They mentioned that, a desired $O(\tau)$-approximate solution can be achieved, as long as for any $\delta$-\emph{thresholded} subset $\mathcal{S}_{\ge\delta} := \{S\in \mathcal{S}_{\mathrm{LP}}\ |\ w(S)\ge \delta\}$, 
the total unweighted correlation terms between overlapped districts 
can be upper bounded by the sum over all primal variables within the subset:
\begin{equation}\label{eq:rounding-condition}
    \sum_{A, B\in\mathcal{S}_{\ge\delta}:A\cap B\neq\emptyset} x_Ax_B \le \frac{\tau}{2} 
    \cdot \sum_{S\in\mathcal{S}_{\ge\delta}} x_S\ \ \ \ \forall \delta > 0
\end{equation}
The above condition implies the following (see \Cref{lem:rounding-tau}):
\[
\mathbf{E}[w(I)] \ge \frac{1}{2\tau} \sum_{S\in\mathcal{S}_{\mathrm{LP}}} w(S)\cdot x_S \ge \frac{1}{2\tau (1+\epsilon)} \mathrm{OPT}_{\mathrm{LP}},
\]
where $\mathrm{OPT}_{\mathrm{LP}}$ is the optimal value of the LP problem. Thus, $I$ is an $O(\tau)$-approximate solution in expectation. 
We remark that due to the factor $2$ appearing in the denominator, this linear program with randomized rounding approach (although not contradicting) is unlikely to achieve a PTAS.

The above analysis to the rounding algorithm enables the approach of seeking a suitable $\tau$ value, such that 
\Cref{eq:rounding-condition} holds.
The rest of the section focuses on providing upper bounds of $\tau$ on various classes of input graphs.

\subsection[General Graphs]{An $O(\sqrt{n})$-Approximation Analysis for General Graphs}\label{sec:star-districting-general-graphs}

In this section,
we show that the algorithm achieves an $O(\sqrt{n})$-approximate ratio on any graph, by giving an upper bound $\tau=O(\sqrt{n})$  for the randomized rounding algorithm (with proof in \Cref{sec:appendix-general-graphs}). 

\begin{restatable}{lemma}{GeneralGraphRoundingGap}
\label{lem:lp-general-graph}
Let $G$ be any graph. Let $\{x_S\}$ be any feasible solution to the linear program. Then, $$\sum_{A, B:\ A\cap B\neq \emptyset} x_Ax_B\le \sqrt{n} \cdot \sum_{S} x_S.$$
\end{restatable}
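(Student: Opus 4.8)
The plan is to derive the bound from two elementary inequalities and then combine them by a case analysis on the magnitude of $\sum_S x_S$; the argument uses only the packing constraints $\sum_{S\ni v} x_S \le 1$ and makes no use of the star structure, which is consistent with the lemma being stated for an arbitrary graph and an arbitrary feasible solution. Write $X := \sum_S x_S$, let $\Phi := \sum_{A,B:\,A\cap B\neq\emptyset} x_A x_B$ be the quantity to be bounded, and for each vertex $v$ set $f_v := \sum_{S\ni v} x_S$, so that feasibility gives $f_v \le 1$ for every $v$.

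The first (trivial) bound is $\Phi \le \big(\sum_S x_S\big)^2 = X^2$, obtained by dropping the overlap restriction and summing over all pairs. The second bound comes from charging each overlapping pair to its shared vertices: since $A\cap B\neq\emptyset$ forces $|A\cap B|\ge 1$,
\[
\Phi \;\le\; \sum_{A,B} |A\cap B|\, x_A x_B \;=\; \sum_{v} \Big(\sum_{S\ni v} x_S\Big)^2 \;=\; \sum_v f_v^2 \;\le\; \sum_v f_v \;\le\; n,
\]
where the last two steps use $f_v \le 1$ (so $f_v^2\le f_v$) together with the fact that there are $n$ vertices.

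Combining the two gives $\Phi \le \min\{X^2,\, n\}$, and it remains to check $\min\{X^2, n\} \le \sqrt n\, X$. If $X \le \sqrt n$ then $X^2 \le \sqrt n\cdot X$; if $X > \sqrt n$ then $n = \sqrt n\cdot\sqrt n \le \sqrt n\cdot X$. In either regime the pairwise sum is at most $\sqrt n\, X = \sqrt n \sum_S x_S$, which is exactly the claim (with the clean constant $\sqrt n$, no hidden factor).

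The step needing the most care is the middle identity $\sum_v f_v^2 = \sum_{A,B}|A\cap B|\, x_A x_B$: I would verify that the over-counting of pairs that share more than one vertex inflates the right-hand side, so that the inequality $\Phi \le \sum_v f_v^2$ points in the favorable direction. The only other point to pin down precisely is the interpretation of the pairwise sum — for ordered pairs including the diagonal both bounds above already account for the $A=B$ terms, and for the unordered off-diagonal reading the left-hand side only shrinks — so the stated bound holds under either convention.
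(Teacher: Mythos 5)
Your proof is correct, and it takes a genuinely different route from the paper's. The paper splits the districts by cardinality into $\mathcal{S}_{\mathrm{large}} = \{S : |S| \ge \sqrt{n}\}$ and $\mathcal{S}_{\mathrm{small}} = \{S : |S| < \sqrt{n}\}$: for large--large pairs it shows $\sum_{S \in \mathcal{S}_{\mathrm{large}}} x_S \le \sqrt{n}$ by double counting ($\sqrt{n} \sum x_S \le \sum x_S |S| \le \sum_v \sum_{S \ni v} x_S \le n$) and then squares; for pairs involving a small district $A$ it charges each overlapping $B$ to a common vertex, giving $\sum_{B : A \cap B \neq \emptyset} x_B \le |A| \le \sqrt{n}$, a per-district bound. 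You instead prove two global bounds --- $\Phi \le X^2$ by dropping the overlap restriction, and $\Phi \le \sum_v f_v^2 \le \sum_v f_v \le n$ by charging to shared vertices with $f_v \le 1$ --- and close with the observation $\min\{X^2, n\} \le \sqrt{n}\, X$ via a case split on whether $X \le \sqrt{n}$. The two proofs use the same two ingredients (Cauchy--Schwarz-style squaring and vertex charging), but you apply them globally and resolve the tension with a single case analysis on the LP mass, whereas the paper resolves it by partitioning districts by size; your version is shorter and avoids the intermediate cardinality threshold entirely. One thing the paper's local charging bound $\sum_{B : A \cap B \neq \emptyset} x_B \le |A|$ buys is a per-district statement that can be reused when the loss must be attributed district by district, but for this lemma (and for its use in the rounding analysis, since the statement holds for \emph{any} feasible solution and thus for every thresholded restriction $\mathcal{S}_{\ge \delta}$, which remains feasible after zeroing out variables) your global argument fully suffices. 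Your two points of care both check out: the identity $\sum_v f_v^2 = \sum_{A,B} |A \cap B|\, x_A x_B$ is exact for ordered pairs including the diagonal, overlap multiplicity only helps the inequality's direction, and passing to unordered off-diagonal pairs only shrinks the left-hand side.
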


\paragraph{Remarks: Integrality Gap.} 
We remark that this algorithm is achieving nearly the best approximation factor since it is $\mathsf{NP}$-hard to have approximation factor of $n^{1/2-\delta}$ for any constant $\delta>0$. Related to this, we would like to examine the potential loss in different steps of our algorithm. In the first step, we relax the integer linear program to a linear program with variables taking real numbers, the \emph{integrality gap} refers to the ratio of the optimal fractional solution to the optimal integer solution (since we consider maximization problem the optimal fractional solution is no smaller than the optimal integer solution). In the second step of our algorithm, we use randomized rounding to turn the fractional solution back to a feasible integer solution. 
We call the ratio between the sum of products between overlapping districts' primal variables and the sum of all variables to be the \emph{rounding gap}, i.e., $\sum_{A, B: A\cap B\neq\emptyset} x_Ax_B/\sum_{S} x_S$. 
The rounding gap can be used to upper bound the loss of solution quality when we turn the fractional solution to a feasible integer solution using the randomized rounding algorithm. 

Necessarily, a large integrality gap implies a large rounding gap for sure. Specifically, let $\tau$ be the rounding gap. 
The analysis to our rounding algorithm guarantees the existence of an integral solution within a factor of $4(1+\epsilon)\tau$ from the optimal fractional solution, which implies an integrality gap of at most $4\tau$ when setting $\epsilon \to 0$. Thus if the integrality gap is large, we cannot have a small rounding gap.
Interestingly, the above discussion, combined with \Cref{lem:lp-general-graph} implies that the integrality gap of the natural LP formulation for the star districting problem is at most $O(\sqrt{n})$. 

Next we show that our LP relaxation could have a large integrality gap of $\Omega(\sqrt{n})$ for a general graph. 
We use a reduction from $k$-uniform hypergraph matching problem to our $c$-balanced star districting problem.
Let $H=(V_H, E_H)$ be the given $k$-uniform hypergraph -- a hypergraph such that all its hyperedges have size $k$. We construct a graph $G=(V_H\cup E_H, E_G)$ by creating additional vertices for each hyperedge. These vertices have heavy weights, say $p_2(e) := (c-1)k$ and vertices from $V_H$ have weights $p_1(v) := 1$.    For each hypergraph $e\in E_H$ (which is a subset of vertices), we connect all vertices $v\in e$ to the corresponding vertex $e$ in $G$. It ensures that there is an one-to-one correspondence between hyperedges of $H$ to $c$-balanced star districts on $G$. 
Now, the relaxed linear program for $(G, p_1, p_2)$ will be equivalent (with an extra $ck$-factor in the objective function) to a fractional hypergraph matching. Thus, the $(k+1-1/k)$ integrality gap of $k$-uniform hypergraph matching~\cite{FurediKS93, ChanL12} can be transferred to our LP formulation --- specifically, the construction in \cite{ChanL12} via projective planes leads to an $\Omega(\sqrt{n})$ integrality gap.

On the other hand, we do observe \emph{planar graph instances} with a constant $>1$ rounding gap with an at most $1+o(1)$ integrality gap (please see \Cref{sec:large-rounding-gap}).
Again, this does not eliminate the possibility of achieving PTAS, but it suggests a conjecture that we are unlikely to obtain a PTAS for planar graphs using the current analysis.

%

\subsection{Scattering Separators}\label{sec:scattering-separator}

Let us now introduce the scattering separators, which is useful for the divide and conquer framework for upper bounding the approximation ratio of the randomized rounding procedure.

\begin{definition}
Let $G=(V, E)$ be a graph and let $X\subseteq V$ be any subset. We say that $X$ is: 
\begin{itemize}[itemsep=0pt]
\item \emph{$(k, t)$-scattered}, if $X$ can be partitioned into at most $k$ subsets $X = X_1\cup X_2\cup \cdots \cup X_k$ with each $X_i$ being a $t$-hop independent set\footnote{We say that $X$ is a \emph{$t$-hop independent set} (with respect to the graph $G$) if for all pairs of distinct vertices $u, v\in X$ and $u\neq v$, the shortest distance between $u$ and $v$ is at least $t$ on $G$.};
\item \emph{$(k, t)$-orderly-scattered}, if there exists a way to partition $X$ into a sequence of at most $k$ subsets $X = X_1\cup X_2\cup \cdots \cup X_k$, where each $X_i$ is a $t$-hop independent set after the removal of all previous subsets $G-\cup_{j<i}X_j$.
\end{itemize}
\end{definition}

\begin{definition}
Let $G$ be a graph, $k, t\in\mathbb{N}$, and $\delta\in (0, 1)$. A \emph{$(k, t, \delta)$-scattering separator} is a subset of vertices $X\subseteq V$ such that (1) $X$ is $(k, t)$-orderly-scattered, and (2) $X$ is a balanced separator of $G$, that is, the largest connected component of $G-X$ has at most $\delta n$ vertices.
\end{definition}

We remark that a $(k, t)$-scattered set is also $(k, t)$-orderly-scattered. This orderly-scattered definition are useful when we  remove  subsets of vertices sequentially --- they are used in the analysis of, for example, planar graphs and minor-free graphs. On the other hand, for some graph class such as outerplanar graphs it suffices to use $(k, t)$-scattered sets within the analysis.

The scattering separators are useful in the $c$-balanced districting problem for $t\ge 5$. To justify this, suppose that we have a $5$-hop independent set $Y$.
Any star district contains at most one vertex in $Y$. If two star districts contain different vertices of $Y$, the two districts must be disjoint. Thus we partition the pairs of overlapping districts by whether they overlap with $Y$ or not, and if so, which vertex of $Y$. The following fact can be easily verified.

\begin{fact}\label{lem:why-five-hop}
Let $Y$ be a $5$-hop independent set.
Consider a fixed district $A\in\mathcal{S}$.
Assume there is a district $B\in\mathcal{S}$ that overlaps with $A$ and $A\cup B$ touches $Y$, i.e., $A\cap B\neq\emptyset$ and $(A\cup B)\cap Y\neq\emptyset$. Since the diameter of $G[A\cup B]$ is at most $4$, we know that $|(A\cup B)\cap Y|=1$. 
Further, if $A$ overlaps with two other star districts $B, C$ with both centers of $B, C$ in $Y$, then $B$, $C$ have the same center. 
\end{fact}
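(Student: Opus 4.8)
The plan is to verify \Cref{lem:why-five-hop} directly from the definition of a $5$-hop independent set, using the fact that two overlapping star districts induce a subgraph of small diameter. The key geometric observation is that a $5$-hop independent set contains no two vertices within distance $4$ of each other, so any connected subgraph of diameter at most $4$ can contain at most one vertex of $Y$.

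First I would bound the diameter of $G[A\cup B]$ when $A\cap B\neq\emptyset$. Since $A$ is a star district, it has a center $a$ adjacent to every other vertex of $A$, so $\mathrm{diam}(G[A])\le 2$; similarly $\mathrm{diam}(G[B])\le 2$. Because $A\cap B\neq\emptyset$, pick a common vertex $z\in A\cap B$. Then any $u\in A$ and $v\in B$ satisfy $\mathrm{dist}_{G[A\cup B]}(u,v)\le \mathrm{dist}_{G[A]}(u,z)+\mathrm{dist}_{G[B]}(z,v)\le 2+2=4$, so $\mathrm{diam}(G[A\cup B])\le 4$. Now suppose for contradiction that $(A\cup B)\cap Y$ contained two distinct vertices $y_1,y_2$. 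They both lie in the connected subgraph $G[A\cup B]$, hence $\mathrm{dist}_G(y_1,y_2)\le \mathrm{diam}(G[A\cup B])\le 4 < 5$, contradicting that $Y$ is $5$-hop independent. Therefore $|(A\cup B)\cap Y|\le 1$, and since by hypothesis the intersection is nonempty, $|(A\cup B)\cap Y|=1$.

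For the second statement, suppose $A$ overlaps both $B$ and $C$, with the centers $b\in Y$ of $B$ and $c\in Y$ of $C$ both lying in $Y$. Applying the first part to the pair $(A,B)$ (which touches $Y$ via $b$) shows $|(A\cup B)\cap Y|=1$, so the center $b$ is the unique vertex of $Y$ in $A\cup B$; in particular, any vertex of $A$ that lies in $Y$ must equal $b$. Symmetrically, applying the first part to $(A,C)$ shows that any vertex of $A$ lying in $Y$ equals $c$. Since $A\cap B\neq\emptyset$ and $b$ is the center of the star $B$ adjacent to all of $B$, a case analysis on whether the common vertex of $A$ and $B$ is $b$ itself or another vertex of $A$ will pin down a vertex of $A$ in $Y$; doing the same for $C$ forces $b=c$, i.e., $B$ and $C$ share the same center.

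The main obstacle I anticipate is precisely this last case analysis: the cleanest route is to show that the shared structure forces a single vertex of $Y$ to witness both districts. Concretely, if $b\neq c$ then $b,c\in Y$ are distinct, yet both lie within distance $4$ of $A$ through the overlaps, and I would derive $\mathrm{dist}_G(b,c)\le 4$ by routing a path $b \to (A\cap B) \to \text{(through } A) \to (A\cap C) \to c$ whose length is bounded by the diameters of the two stars and of $A$. Carefully accounting for these hops—at most $1$ from $b$ to $A\cap B$, at most $2$ across $A$, and at most $1$ from $A\cap C$ to $c$, giving distance at most $4$—again contradicts $5$-hop independence of $Y$, forcing $b=c$. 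The routine part is the triangle-inequality bookkeeping; the only care needed is to confirm every intermediate vertex lies in the connected subgraph so the distances are legitimately realized in $G$.
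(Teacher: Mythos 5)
Your proposal is correct, and it supplies exactly the verification the paper leaves implicit: the paper states this as a \emph{fact} that ``can be easily verified'' and gives no proof, and your argument (diameter of $G[A\cup B]$ at most $4$ via a shared vertex and the two star centers, then the $1+2+1$ hop routing $b \to A\cap B \to A \to A\cap C \to c$ to force $b=c$) is the intended direct check against the $5$-hop independence of $Y$. Note that your middle paragraph's ``case analysis'' sketch for the second claim is superfluous --- the routing argument in your final paragraph is already a complete and self-contained proof of that part, so you should simply discard the earlier sketch.
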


Fix a district $A\in \mathcal{S}$. Since all other districts that overlap with $A$ contains (at most) the same vertex in $Y$, these primal variable values add up to at most $1$.
This implies that, removing $Y$ from $G$ charges at most one copy of $\sum x_S$.
If we are able to show that the entire vertex set is a $(k, 5)$-orderly-scattered, then we obtain a desired $\tau=O(k)$ value for \Cref{eq:rounding-condition}.
However, we do not know if such a constant $k$ can be achieved for planar graphs. 
Fortunately, using the idea of balanced separators, we are able to achieve a polylogarithmic approximate solution.


\begin{lemma}\label{lem:divide-and-conquer-analysis}
If $G$ and all its subgraphs have a $(k, 5, \delta)$-scattering separator, then the districting obtained from executing the algorithm on $G$ is a $(2k\log_{1/\delta} n)$-approximated solution.
\end{lemma}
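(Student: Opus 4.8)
The plan is to apply the divide-and-conquer framework recursively, showing that at each level of recursion the scattering separator lets us ``charge away'' all the overlapping-pair correlation involving the separator, while the balanced separator property guarantees that the recursion depth is only $O(\log_{1/\delta} n)$. Concretely, I would verify the hypothesis \eqref{eq:rounding-condition} for a suitable $\tau$, since by \Cref{lem:rounding-tau} this immediately yields an $O(\tau)$-approximation. So the whole task reduces to showing that for every threshold $\delta>0$ (here I mean the weight threshold in \eqref{eq:rounding-condition}, distinct from the separator balance parameter), the sum $\sum_{A,B:\,A\cap B\neq\emptyset} x_Ax_B$ over the thresholded support is at most $(k\log_{1/\delta} n)\sum_S x_S$; plugging $\tau = 2k\log_{1/\delta}n$ into the rounding lemma gives the claimed factor.

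The key recursive step works as follows. Given the graph $G$ (on a particular recursion level), extract a $(k,5,\delta)$-scattering separator $X$, written as an orderly-scattered sequence $X = X_1\cup\cdots\cup X_k$. I would split every overlapping pair $\{A,B\}$ with $A\cap B\neq\emptyset$ into two classes: those for which $(A\cup B)$ touches $X$, and those for which $A\cup B$ lies entirely in $G-X$. For the first class, I process the $X_i$ in order. Using \Cref{lem:why-five-hop}, when $X_i$ is a $5$-hop independent set in the current graph $G-\bigcup_{j<i}X_j$, any district $A$ can overlap other districts touching $X_i$ only through a single common vertex $y\in X_i$, so all such $B$ share the vertex $y$. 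Then by the primal constraint $\sum_{S\ni y}(1/w(S))x'_S\le 1$, equivalently $\sum_{S\ni y}x_S\le 1$, the total correlation mass $\sum_B x_Ax_B$ charged to $A$ through $X_i$ is at most $x_A$. Summing over the $k$ classes $X_1,\dots,X_k$ and over all districts $A$, the separator-touching pairs contribute at most $k\sum_S x_S$. The second class — pairs confined to $G-X$ — is handled by recursing on each connected component of $G-X$; by the balanced-separator property each component has at most $\delta n$ vertices, and every such component is again a subgraph of $G$ and hence admits its own $(k,5,\delta)$-scattering separator by hypothesis.

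Unrolling the recursion, at each of the at most $\log_{1/\delta} n$ levels (since component sizes shrink by a factor $\delta$ and each district needs at least one vertex) the separator-touching pairs contribute $k\sum_S x_S$, while every overlapping pair is eventually counted at exactly the level where the two districts first become separated — i.e. where at least one of $A,B$ touches the separator of the current subproblem. Hence the grand total is bounded by $(k\log_{1/\delta}n)\sum_S x_S$, which is the desired bound for $\tau = 2k\log_{1/\delta}n$ in \eqref{eq:rounding-condition} (the extra factor of $2$ is absorbed when invoking \Cref{lem:rounding-tau}).

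The main obstacle I anticipate is the bookkeeping that ensures every overlapping pair is charged at exactly one recursion level and is legitimately a ``separator-touching'' pair at that level. Two subtleties need care. First, when restricting to a connected component $C$ of $G-X$, I must argue that a pair $\{A,B\}$ both contained in $C$ really does have $G[A\cup B]$ inside $C$ — this is where connectivity of districts and of $A\cup B$ (diameter $\le 4$) is used, so that no district straddles two different components. Second, the orderly-scattered condition is essential rather than cosmetic: \Cref{lem:why-five-hop} requires a genuine $5$-hop independent set, but $X_i$ is only $5$-hop independent \emph{after} removing $X_1\cup\cdots\cup X_{i-1}$; I must confirm that processing the $X_i$ in this order, and charging each $A$-to-$X_i$ interaction in the graph $G-\bigcup_{j<i}X_j$, preserves the single-shared-vertex conclusion and that districts removed earlier are not double-counted. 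Provided these charging arguments are made precise, the arithmetic (one factor $k$ per level, $\log_{1/\delta}n$ levels) is routine.
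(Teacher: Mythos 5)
There is a genuine gap in your handling of the ``separator-touching'' pairs, and it comes from classifying pairs by \emph{whether} $A\cup B$ touches $X$ rather than by \emph{where} it touches $X$. Your charging step asserts that, for a fixed district $A$ and a fixed layer $X_i$, all districts $B$ with $A\cap B\neq\emptyset$ and $(A\cup B)\cap X_i\neq\emptyset$ share a single common vertex $y\in X_i$, so that the primal constraint at $y$ bounds $\sum_B x_B\le 1$. That is not what \Cref{lem:why-five-hop} gives you. The fact guarantees (a) each such pair meets $X_i$ in exactly one vertex, and (b) if the \emph{centers} $c_B,c_C$ of two partners of $A$ both lie in $X_i$, then $c_B=c_C$. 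It does not say that the partners themselves contain a common vertex of $X_i$. Concretely, if $A$ has a leaf $u\in X_i$ that is neither its center nor a vertex shared with any partner, then \emph{every} $B$ overlapping $A$ anywhere forms a touching pair with $A$, yet none of these $B$ need contain $u$ (or any vertex of $X_i$ at all); the total mass $\sum_{B:\,A\cap B\neq\emptyset}x_B$ is then only bounded by $|A|$ (one unit of LP mass per vertex of $A$), not by $1$, so the charge to $A$ can be $|A|\cdot x_A$ rather than $x_A$. Your recursion cannot absorb these pairs either, because your second class contains only pairs with $A\cup B$ entirely inside $G-X$; the problematic pairs are never passed down, so they must be paid for at the current level, where the bound is unsupported.

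The paper's proof avoids exactly this trap by using a finer three-way split: (1) one of the centers $c_A,c_B$ lies in $X$; (2) the centers avoid $X$ but some vertex of $A\cap B$ lies in $X$; (3) neither. Only cases (1) and (2) are charged at the current level --- case (1) via the shared-center statement of \Cref{lem:why-five-hop}, and case (2) by charging to the common vertex $c\in X_j$ itself, using $\sum_{c\in A\cap B}x_Ax_B\le\bigl(\sum_{c\in A}x_A\bigr)^2\le\sum_{c\in A}x_A$ --- each contributing at most $k\sum_S x_S$. Case (3), which includes your problematic pairs where only a stray leaf meets $X$, is deferred to the recursion: since the centers and all shared vertices avoid $X$, one can chop the $X$-vertices off each district; the truncated districts remain star-shaped, remain overlapping, and lie together in a single connected component of $G-X$, so the pair is legitimately recounted one level down. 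If you replace your two-class split with this decomposition and add the chopping observation, the rest of your argument --- the per-level $O(k)$ charge, the $\log_{1/\delta}n$ depth, and the invocation of \Cref{lem:rounding-tau} via \eqref{eq:rounding-condition} --- goes through essentially as you wrote it.
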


\begin{proof}
Let $X=X_1\cup X_2\cup \cdots \cup X_k$ be a $(k, 5, \delta)$-scattering separator of $G$.
Let $\mathcal{S}$ be the set of all districts. 
Then, all summands of the form $x_Ax_B$ where $A, B\in\mathcal{S}$ and $A\cap B\neq\emptyset$ can be also split into three parts: 
\begin{enumerate}[itemsep=0pt]
\item[(1)] $X\cap \{c_A, c_B\}\neq \emptyset$: one of the centers $c_A$ or $c_B$ is in $X$.
\item[(2)] $X\cap \{c_A, c_B\}= \emptyset$ but $X\cap A\cap B\neq \emptyset$: one of their common vertices is in $X$.
\item[(3)] None of the above.
\end{enumerate}
For $j\in \{1, 2, 3\}$, we denote $\mathit{cost}_j$ the sum of products of those overlapping districts of case $(j)$.
For case (1), 
using the given constraint that
$X$ is $(k, 5)$-orderly-scattered, we consider removing each set $X_i$ one at a time from the graph in the increasing order of $i$.
For each $X_i$, 
without loss of generality, we may swap the role of $A$ and $B$ such that for each summand we have $c_B\in X$.
By applying \Cref{lem:why-five-hop} (with $Y=X_i$), we know that for each district $A\in\mathcal{S}$, 
all districts $B$ that overlap with $A$ with $c_B\in X_i$ are actually centered at the same vertex.
This implies that the sum of all such $x_B$ values will be at most $1$ by the primal constraint. Hence, the contribution of any district $A\in\mathcal{S}$ under case (1) for $X_i$ in the graph $G-\cup_{j<i} X_j$ is at most $$\sum_{B: A\cap B\neq\emptyset \text{ and } c_B\in X_i} x_Ax_B\le x_A.$$

By summing over all $A\in\mathcal{S}$ and over all the $k$ sets $X_1, \ldots, X_k$, we have
$\mathit{cost}_1\le k\cdot(\sum_{S} x_S)$.

For case (2), the terms can be partitioned according to the common vertex $c$:
\begin{align*}
\mathit{cost}_2 &\le \sum_{j=1}^k\sum_{c\in X_j}\sum_{c\in A\cap B} x_Ax_B 
\le \sum_{j=1}^k\sum_{c\in X_j} \left(\sum_{c\in A} x_A\right)^2 
\le \sum_{j=1}^k\sum_{c\in X_j} \sum_{c\in A} x_A \le k\cdot\left(\sum_{S} x_S\right).
\end{align*}
Again here we use the property that for any fixed vertex $c$, the sum of the primal variables for star districts containing $c$ sum up to be at most $1$, i.e., $\sum_{c\in A} x_A\le 1$. Further, fix an $X_i$, any star district includes at most one vertex from $X_i$. 

For case (3) we can delegate the cost to the recursion. Notice that, all districts whose centers are in $X$ will not participate in case (3). Hence, when considering each of the connected component in $G-X$, all the districts (after chopping off vertices in $X$) are still connected and are star-shaped.

Since $X$ is a balanced separator, the divide and conquer analysis has at most $\log_{1/\delta} n$ layers. Thus, the sum over all products of overlapping districts is bounded by at most $2k\log_{1/\delta} n$ times the sum $\sum_S x_S$.
\end{proof}


\subsection{Planar Graphs}\label{sec:planar-graphs}



We are now able to derive an $O(\log n)$-approximation bound for planar graphs, immediately followed by the lemma below and \Cref{lem:divide-and-conquer-analysis}.

\begin{lemma}\label{lem:planar-star}
Every planar graph has a $(10, 5, 2/3)$-scattering separator.
\end{lemma}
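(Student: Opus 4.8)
The plan is to take a \emph{fundamental-cycle separator} with respect to a BFS tree and then slice each of its two constituent tree paths into five $5$-hop independent sets. I would first reduce to the case where $G$ is connected: otherwise either every component already has at most $2n/3$ vertices (and the empty separator works) or a single component exceeds $2n/3$ vertices, in which case separating only that component suffices, since every other component then has fewer than $n/3 \le 2n/3$ vertices. Assuming $G$ connected, fix a root $r$ and a BFS tree $T$. The property I will exploit is that for any two vertices $u, w$ lying on a common root-to-leaf tree path, $d_G(u,w) \ge |\mathrm{level}(u)-\mathrm{level}(w)|$ (triangle inequality through $r$), so in particular every root-to-vertex tree path is a shortest path in $G$.

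Next I would triangulate $G$ into a maximal planar graph $G'$ on the same vertex set, retaining $T$ as a spanning tree (only edges are added), and invoke the standard fundamental-cycle separator theorem: there is a non-tree edge $e=(u,w)$ of $G'$ whose fundamental cycle $C_e$ in $T+e$ has interior and exterior each containing at most $2n/3$ vertices. Since $G$ is a subgraph of $G'$ on the same vertices, deleting $V(C_e)$ from $G$ can only fragment components further, so each component of $G - V(C_e)$ still has at most $2n/3$ vertices; thus $X := V(C_e)$ is a $2/3$-balanced separator of $G$ (and we do not care that $X$ may be large).

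For the scattering structure, note that $C_e$ is the union of the tree path from $u$ up to the least common ancestor $z$ of $u$ and $w$ and the tree path from $w$ up to $z$, closed by the edge $e$ (which contributes no new vertices); these two paths are sub-paths of root-to-vertex shortest paths and share only $z$. On each path the BFS level decreases by exactly one per step, so partitioning its vertices by level modulo $5$ yields five classes in which any two distinct vertices differ in level by at least $5$, and hence lie at graph distance at least $5$ by the inequality above: each class is a $5$-hop independent set. Assigning $z$ to a single class, the two paths together give at most $10$ classes, so $X$ is $(10,5)$-scattered, hence $(10,5)$-orderly-scattered, and combined with the balance bound it is a $(10,5,2/3)$-scattering separator.

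The main obstacle I expect is the interplay between triangulation and the shortest-path requirement: the added edges could in principle create shortcuts that shorten distances, so I must keep all distance and level computations in the original graph $G$ (where $T$ is a genuine BFS tree) while certifying balancedness in $G'$ and transferring it downward. The crux is that whichever non-tree edge $e$ the separator theorem selects — even an added triangulation edge — the cycle $C_e$ reuses only tree edges of $T$, so its two constituent paths remain shortest paths of $G$ and the $5$-hop-independence argument goes through unchanged.
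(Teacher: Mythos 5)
Your proof is correct and takes essentially the same approach as the paper: both use the Lipton--Tarjan fundamental-cycle separator, whose vertex set is a union of two BFS-tree shortest paths, and then slice each path into five classes that are each $5$-hop independent sets, yielding a $(10,5,2/3)$-scattering separator. The only differences are cosmetic — you re-derive the cited separator fact via triangulation and treat disconnected graphs explicitly, and you partition by BFS level modulo $5$ where the paper colors round-robin along each shortest path, which are equivalent arguments.
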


The above separator lemma can be derived from
the fundamental cycle separators, which is composed of two shortest paths on a BFS tree:

\begin{fact}[\cite{Lipton1979-kk}]\label{fact:fundamental-cycle-separator}
Let $G=(V, E)$ be a planar graph. Then, there exists a partition of $V=L\cup X\cup R$, such that (1) both $|L|, |R|\le \frac23|V|$, (2) no edges connect between $L$ and $R$, and (3) the separator $X$ is formed by the union of two root-to-node paths from some BFS tree on $G$.
\end{fact}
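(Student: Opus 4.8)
The plan is to reconstruct the classical Lipton--Tarjan fundamental cycle separator via three ingredients: a triangulation reduction, the interdigitating (complementary) spanning trees of a planar graph and its dual, and a median-edge argument on the dual tree. First I would dispose of the disconnected case: if no component of $G$ exceeds $\frac{2}{3}|V|$ vertices I can take $X=\emptyset$ and greedily bin the components into two groups each of size $\le \frac{2}{3}|V|$, and otherwise I recurse into the unique heavy component (the remaining components attach harmlessly to one side). So assume $G$ is connected, fix a planar embedding, and compute a BFS tree $T$ rooted at an arbitrary vertex $r$, so that every root-to-node path of $T$ is a shortest path in $G$.

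Next I would triangulate $G$ into a maximal planar graph $G'$ by adding edges consistent with the embedding, keeping $T$ as a spanning tree of $G'$. Since $T\subseteq G\subseteq G'$, every path of $T$ is a genuine path of $G$, and any vertex separator of $G'$ is also one of $G$; the added edges serve only to supply enough fundamental cycles for the balance argument and are discarded at the end. For any non-tree edge $e=(u,v)$ of $G'$, its fundamental cycle $C_e$ is the $u$--$v$ tree path together with $e$, and $V(C_e)$ is contained in the union of the root-to-$u$ and root-to-$v$ paths of $T$. Setting $X$ to be that union therefore realizes $X$ as ``two root-to-node paths'' with $V(C_e)\subseteq X$. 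Because $C_e$ is a Jordan curve in the embedding, the vertices off $C_e$ split into an interior set $L$ and an exterior set $R$ with no edge between them (such an edge would have to cross $C_e$); discarding the extra prefix from $r$ to the lowest common ancestor of $u,v$ only removes vertices from $L\cup R$ and creates no new edges, so conditions (2) and (3) hold for every choice of $e$.

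The heart of the argument is to choose $e$ so that $|L|,|R|\le \frac{2}{3}|V|$. By Euler's formula the non-tree edges of $G'$, under planar duality, form a spanning tree $T^*$ of the dual $(G')^*$; deleting the dual edge $e^*$ from $T^*$ splits it into two subtrees whose faces are exactly the faces interior and exterior to $C_e$. I would assign each vertex one unit of weight, distributed over its incident faces, so the total face weight is $|V|$ and the interior-face weight $w_{\mathrm{in}}(e)$ equals the weight of one component of $T^* - e^*$. A standard median-edge argument on the weighted tree $T^*$ (using that each individual face carries only $O(1)\ll|V|$ weight) then yields a dual edge $e^*$ whose two sides each have weight at most $\frac{2}{3}|V|$. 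Since every strictly interior vertex sends all its weight to interior faces, $|L|\le w_{\mathrm{in}}(e)\le\frac{2}{3}|V|$, and symmetrically for $|R|$, giving the required balanced fundamental cycle $C_e$.

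I expect this last balance step to be the main obstacle: the triangulation, the ``cycle separates the plane'' observation, and the identification of $X$ with two root paths are essentially bookkeeping, but establishing the interdigitating dual tree, choosing the weight distribution so that vertex counts are correctly bounded by face weights, and squeezing the median-edge argument down to the constant $\frac{2}{3}$ is where the real content lies. A secondary subtlety worth checking is that the fundamental cycle's two branches are indeed prefixes of two root-to-node paths regardless of where the lowest common ancestor sits, so that the separator genuinely has the claimed BFS-path form.
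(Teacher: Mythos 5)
The paper never proves this Fact: it is imported directly from Lipton and Tarjan \cite{Lipton1979-kk}, and all the paper uses downstream is that $X$ is a union of two shortest paths of $G$ and that the split is $\frac23$-balanced. So your proposal must stand on its own. Its architecture is indeed the classical one (triangulate, interdigitating primal/dual spanning trees, balanced fundamental cycle), and the bookkeeping is essentially right: $T\subseteq G\subseteq G'$ keeps the cycle inside two root-to-node BFS paths of the original graph, the Jordan-curve separation and the LCA-prefix discarding are fine, and the reductions for disconnected graphs work (greedy binning needs decreasing order of component sizes; recursing into the heavy component $H$ works because the light components total $<n/3$ and attach to the lighter of $L_H, R_H$, giving at most $n-|H|/2<2n/3$).

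The genuine gap is exactly the step you flagged, and flagging it does not fill it: a ``standard median-edge argument'' on $T^*$ with face weights $O(1)$ does \emph{not} yield two sides of weight at most $\frac23 W$. First, you silently need that $T^*$ has maximum degree $3$ (true for triangulations, but essential: a star whose leaves each carry tiny weight has no balanced edge at all). Second, even with degree $3$ and vertex weights at most $\delta$, the min-max edge only guarantees sides of weight at most $\frac23 W+\frac{\delta}{3}$: if the heavier side $w_2$ of the optimal edge exceeded $\frac23 W$, minimality forces each of the (at most two) subtrees past the heavy endpoint $f$ to weigh at most $w_1=W-w_2$, so $w_2\le \delta+2w_1$, i.e.\ $w_2\le \frac23 W+\frac{\delta}{3}$ --- and the dual star with center weight $c>0$ and three leaves of weight $(W-c)/3$, where every edge splits as $(W-c)/3$ versus $(2W+c)/3>\frac23 W$, shows this is tight. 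With your distribution each face carries up to $3$ units, so you only obtain $w_{\mathrm{in}}\le \frac23 n+1$, and the face-to-vertex conversion does not absorb the slack: $|L|=w_{\mathrm{in}}-(\text{weight the cycle vertices send to interior faces})$, and that discount can be as small as $\sum_{v\in C_e} 1/\deg(v)$, well below $1$; hence you cannot conclude $|L|\le \frac23 n$. The missing idea --- and what the cited Lipton--Tarjan proof actually does --- is to pick the nontree edge whose fundamental cycle minimizes the cost of its heavier side, tie-breaking by the number of enclosed faces, and to show that a cycle with interior cost above $\frac23$ can be improved by exchanging across the interior triangle incident to the nontree edge; the exact constant $\frac23$ is attainable there precisely because weight that leaves the interior lands \emph{on the cycle}, where it counts toward neither side --- the discount your accounting lacks. (For the paper's own use in \Cref{lem:divide-and-conquer-analysis} any constant balance, e.g.\ $\frac23+o(1)$ or $\frac34$, would suffice, so your argument is salvageable for the application, but it does not prove the Fact as stated.)
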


\begin{proof}[Proof of \Cref{lem:planar-star}]
With \Cref{fact:fundamental-cycle-separator}, we know that there exists a separator $X$ that is a union of two shortest paths $X=P_1\cup P_2$. Since each $P_i$ is a shortest path, we can partition each path into at most 5 sets and each set is a 5-hop independent set. Specifically, along a shortest path $P_i$, we color the vertices sequentially by color $1$ to $5$ in a round-robin manner. Each set of vertices of the same color is a $5$-hop independent set. With two shortest paths, we have a total of $10$ such $5$-hop independent sets.
Thus, each planar graph has a $(10, 5, 2/3)$-scattering separator.
\end{proof}

\subsection{Minor-Free Graphs}\label{sec:minor-free-graphs}

A natural generalization of planar graph would be minor-free graphs.
Let $H$ be a graph with $h$ vertices. 
Since planar graphs are $\{K_5, K_{3, 3}\}$-free which implies $K_6$-free, one may expect the rounding algorithm achieves a similar approximation ratio bound for $H$-minor-free graphs as well.

Indeed, in this section,
we show that our randomized rounding algorithm achieves an $O(h^2\log n)$-approximation ratio for $H$-minor free graphs.
Our proof is inspired by the seminal separator theorem of Alon, Seymour, and Thomas~\cite{AlonST90} for $H$-minor free graphs.
But, our algorithm is substantially simpler since we do not require the separator to be small in size. Instead, all we need is to obtain a balanced separator $X$ which is a union of at most $O(h^2)$ $5$-hop independent sets.
By removing these independent sets one after another, we are able to bound the correlation terms that involve any vertex in $X$ by $O(h^2)\cdot \sum x_S$.
Finally, an additional $\log n$ factor will then be added to the approximation ratio by the divide and conquer framework, achieving an $O(h^2\log n)$ approximation ratio.

Since $K_h$-minor-free implies $H$-minor-free for any graph $H$ of $h$ vertices, the following separator theorem on $K_h$-minor-free graphs implies for all $H$-minor-free graphs.

\begin{lemma}\label{lem:h-minor-free}
Let $h\in \mathbb{N}$ be a constant. 
Let $G$ be a $K_h$-minor-free graph. Then, 
$G$ admits a $(5h^2, 5, 1/2)$-scattering separator.
\end{lemma}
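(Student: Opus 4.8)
The plan is to reduce the statement to a purely combinatorial fact about \emph{shortest paths}, and then to exploit the cheap decomposition of a geodesic into $5$-hop independent sets. Concretely, I would show that $G$ admits a balanced separator $X$ — one for which the largest component of $G-X$ has at most $n/2$ vertices — that can be written as the union of at most $h^2$ root-to-node paths of a single BFS tree. Granting this, the $(5h^2, 5, 1/2)$-scattering separator follows immediately, as I explain in the next paragraph; the genuine work lies in the separator construction, which I treat last and which is also where I expect the main obstacle.

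The decomposition step is routine. Fix a BFS tree $T$ rooted at an arbitrary vertex, and take each of the $h^2$ paths $P_i$ to be a root-to-node path of $T$, so that $P_i$ is a geodesic whose vertices occupy consecutive BFS depths. I color every vertex of $P_i$ by its depth modulo $5$. If two distinct vertices $u, v$ of $P_i$ receive the same color, their depths differ by a multiple of $5$, hence by at least $5$; since depth equals distance from the root, the triangle inequality gives $d_G(u, v) \ge |\operatorname{depth}(u) - \operatorname{depth}(v)| \ge 5$. Thus each color class of each $P_i$ is a $5$-hop independent set. Assigning every vertex of $X$ to the first path $P_i$ containing it and then to its color class partitions $X$ into at most $5h^2$ sets, each a subset of some color class and therefore still $5$-hop independent. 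Hence $X$ is $(5h^2, 5)$-scattered, and since any $(k, t)$-scattered set is also $(k, t)$-orderly-scattered, $X$ is $(5h^2, 5, 1/2)$-scattering once we know it is a balanced separator.

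It remains to build a balanced separator from at most $h^2$ BFS-tree geodesics, and this is the hard part. Here I would follow the spirit of the Alon--Seymour--Thomas separator theorem~\cite{AlonST90}, but the task is \emph{easier} than theirs because I do not care about the size of $X$ — only about how many shortest paths it comprises. The idea is to grow $X$ greedily: while some component $C$ of $G - X$ has more than $n/2$ vertices, add to $X$ a small batch of root-to-node geodesics meeting $C$, chosen so that contracting the newly added paths forms a connected branch set that is adjacent to $C$ and to the branch sets added earlier. If the process were ever forced to add more than $h^2$ paths while a large component persisted, one could assemble the contracted paths together with a suitable contraction of the surviving large component into $h$ pairwise-adjacent connected branch sets — a $K_h$ minor — contradicting $K_h$-minor-freeness. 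The delicate bookkeeping, exactly as in~\cite{AlonST90}, is to simultaneously keep each branch set connected, keep distinct branch sets pairwise adjacent, and guarantee that each batch of geodesics makes quantifiable progress toward balance; controlling all three at once is what yields the $O(h^2)$ bound on the path count rather than a larger polynomial in $h$.

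Finally, the balance parameter $\delta = 1/2$ is reached for free: the greedy loop terminates precisely when no component of $G - X$ exceeds $n/2$ vertices, so the resulting $X$ is a balanced separator by construction. Combining this with the decomposition of the second paragraph yields the desired $(5h^2, 5, 1/2)$-scattering separator, which (together with \Cref{lem:divide-and-conquer-analysis}) is what drives the $O(h^2 \log n)$ approximation for $H$-minor-free graphs.
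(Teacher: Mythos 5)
There is a genuine gap, and it sits exactly where you admit the ``genuine work'' lies: the claim that a $K_h$-minor-free graph has a balanced separator formed by at most $h^2$ root-to-node paths of a \emph{single BFS tree of $G$} is never proved, and the sketch you give for it cannot work as stated. Your contradiction argument requires assembling the added paths (plus a contraction of the surviving large component) into $h$ pairwise-adjacent, \emph{vertex-disjoint} connected branch sets. But every root-to-node path of one fixed BFS tree contains the root, so any two branch sets built from such paths intersect at the root; no $K_h$ minor can be extracted this way. The natural repair is to route geodesics \emph{inside the current large component} $C$, so that paths added at different stages are disjoint and can serve as separate branch sets --- this is what the paper does. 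But then the paths are no longer root-to-node paths of a BFS tree of $G$, and, crucially, they are shortest only within $C$, not within $G$: two vertices at distance $5$ along such a path may be much closer in $G$ via shortcuts through previously deleted vertices. So the ``routine'' decomposition in your second paragraph, which produces genuine $5$-hop independent sets of $G$ via depth modulo $5$, is unavailable for these paths. This is precisely why the paper introduces the weaker notion of \emph{orderly}-scattered sets and proves only that: its separator is a union of at most $h-1$ sets $C_i$, each a connected union of at most $h$ shortest paths of the \emph{current active subgraph}, and the resulting color classes are $5$-hop independent only after all earlier pieces have been removed --- which is all that the definition of a scattering separator and the downstream \Cref{lem:divide-and-conquer-analysis} require. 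Demanding a genuinely scattered separator from one BFS tree is a strictly stronger structural statement (true for planar graphs via \Cref{fact:fundamental-cycle-separator}, but not established for minor-free graphs, where the known path-separator constructions all have this iterative, remove-then-recompute structure).

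A secondary but real flaw is the counting logic ``more than $h^2$ paths forces a $K_h$ minor.'' A $K_h$ minor needs $h$ disjoint, connected, pairwise-adjacent branch sets; paths are not branch sets. In the paper the bound $5h^2$ arises because at most $h-1$ branch sets can coexist (else a $K_h$ minor is certified directly by the invariant), and each branch set is itself a union of at most $h$ geodesics glued together inside the active component so as to be adjacent to \emph{all} earlier branch sets. Maintaining that pairwise adjacency also forces a deletion rule --- any $C_i$ no longer adjacent to the active component must be discarded, with a separate argument that balance is preserved --- which your sketch omits entirely. Without both the grouping of paths into connected branch sets and this maintenance step, the $O(h^2)$ bound and the minor contradiction do not materialize.
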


\begin{proof}
Consider an algorithm that \emph{seeks for an $K_h$-minor}.
The algorithm maintains two objects on the graph: a collection $\mathcal{C}:=\{C_1, C_2, \ldots, C_k\}$ of disjoint vertex subsets, and an \emph{active subgraph} $B$, such that:
\begin{enumerate}[itemsep=0pt]
    \item For all $i$, the induced subgraph $G[C_i]$ is connected.
    \item For all $i\neq j$, $C_i$ and $C_j$ are \emph{neighboring}, that is, there exists an edge in $G$ connecting some vertex from $C_i$ and some vertex from $C_j$. If we contract every $C_i$ into a vertex in $G[\cup_i C_i]$, we obtain a complete graph $K_k$. Hence, the algoritm maintains a certificate of $K_k$-minor\footnote{In \cite{AlonST90}, they called $\mathcal{C}$ a \emph{covey}.} in $G$.
    \item Let $X=\cup_i C_i$ be the separator of our interest. Whenever $|B|>n/2$, the active subgraph $B$ is always the largest connected component\footnote{In \cite{AlonST90}, they called each connected component in $G-X$ an $X$-flap.} in $G-X$. We note that the largest connected component is uniquely defined whenever $|B| > n/2$. 
    \item Each $C_i$ is $(5h, 5)$-orderly-scattered.
\end{enumerate}

We give a high level description to our algorithm.
The algorithm initially sets $\mathcal{C}=\emptyset$ and $B=G$. In each iteration, the
algorithm repeatedly either finds a connected subgraph $C_{\text{new}}\subseteq V(B)$ that is neighboring to all $C_1, \ldots, C_k$, or removes some part $C_i$ that is not neighboring to $B$.
The algorithm halts once $|B|\le n/2$.

Now we describe the detail of each iteration. If $|B|\le n/2$ then we are done, as $X$ is a $(5hk, 5, 1/2)$-scattering separator with $k\le h$.
Suppose that $|B| > n/2$.
The algorithm would first check if there is a subset $C_i$ that is not neighboring to $B$. 
If so, the algorithm removes $C_i$ from $\mathcal{C}$.
The invariants 1, 2, and 4 clearly holds for $\mathcal{C}-C_i$ so it suffices to show that the third invariant holds as well. 
Since $B$ has no neighbor in $C_i$ and $B$ is a connected component, we know that $B$ is still a connected component of $G - (\cup_{j\neq i} C_j)$. The fact that $|B|>n/2$ implies that all connected components other than $B$ has at most $n/2$ vertices, so the third invariant holds.

Now we can assume that all subsets $C_1, C_2, \ldots, C_k$ in $\mathcal{C}$ have neighboring vertices in $B$. In this case, 
we claim that a new set $C_{\text{new}}\subseteq V(B)$ can be found such that we update $\mathcal{C}\leftarrow \mathcal{C}\cup \{C_{\text{new}}\}$ and that the invariant 1, 2, and 4 holds for the updated $\mathcal{C}$. To construct $C_\text{new}$, we first let $a_1, a_2, \ldots, a_k$ be vertices in $B$ such that for all $i$, $a_i$ has a neighbor in $C_i$. Since $B$ is a connected component, there exist paths connecting $a_i$ to $a_{i+1}$ in $B$, for all $i$, $1\le i \le k-1$. Let $P_i$ be the \emph{shortest path} from $a_i$ to $a_{i+1}$ in $B$.
We define $C_\text{new} := \cup_{i=1}^{k-1} V(P_i)$.
Invariants 1 and 2 are clearly satisfied for $C_\text{new}$.
The fact that $P_i$ is a shortest path implies that $V(P_i)$ can be partitioned into at most $5$ 5-hop independent sets. Thus, $C_\text{new}$ can be partitioned into at most $5h$ 5-hop independent sets, thereby maintaining invariant 4.
Since $G$ is $K_h$-minor-free, we know that $|\mathcal{C}| \le h-1$.

After constructing the set $C_\text{new}$ and updating $\mathcal{C}\gets \mathcal{C}\cup \{C_\text{new}\}$, the algorithm also updates the active subgraph $B$, by substituting $B$ with the largest connected component $B'$ of $G-\cup_{C_i\in \mathcal{C}} C_i$. Notice that, if $|B'| \le n/2$ then $X=\cup_{C_i\in \mathcal{C}} C_i$ is already a balanced separator. Otherwise, the fact that $B$ is the unique connected component with $|B|>n/2$ implies that $B'\subseteq B - C_\text{new}$. Hence, in any case we have $|B'| < |B|$. That is, the active subgraph strictly decreases in size.
The algorithm must halt, since in each iteration the quantity $|\mathcal{C}| + 2|B|$ is nonnegative but strictly decreasing -- either we drop a set in $\mathcal{C}$, thus $|\mathcal{C}|$ drops but $|B|$ remains the same, or we increase $|\mathcal{C}|$ by $1$ but decrease $|B|$ by $1$ at least.
Therefore, whenever the algorithm halts, we obtain a separator $X=\cup_i C_i$ as desired, which implies \Cref{lem:h-minor-free}.
\end{proof}

\paragraph{Remarks.}
If we apply \Cref{lem:h-minor-free} to planar graphs, we may use the fact that planar graphs are $K_6$-minor-free and thus obtaining a $(180, 5, 1/2)$-scattering separator. This leads to an $(360\log_2 n)$-approximation ratio by applying \Cref{lem:divide-and-conquer-analysis}.
Note that this is much worse than $10\log_{3/2}n\approx 17.1 \log_2 n$, which is the bound we obtained from \Cref{lem:planar-star}.

\subsection{Outerplanar Graphs}\label{sec:outerplanargraphs}

The results from previous subsections suggest approximation ratios that involve polylogarithmic terms.
It is, of course not surprising, possible to obtain a better approximate ratio for subclasses of planar graphs.
In this subsection, we show that our algorithm produces an $O(1)$-approximate solution for outerplanar graphs, using a slightly different approach.

\begin{restatable}{lemma}{OuterPlanarGraphRoundingGap}
\label{lem:outerplanar}
Let $G$ be an outerplanar graph. Suppose that $\{x_S\}$ are primal variables obtained by \Cref{theorem:star-main}. Then, $\sum_{A\cap B\neq\emptyset} x_Ax_B\le O(1)\cdot \sum x_S$.
\end{restatable}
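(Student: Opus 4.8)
The plan is to exploit the structural simplicity of outerplanar graphs directly, bypassing the recursive divide-and-conquer of \Cref{lem:divide-and-conquer-analysis}. The key observation I would use is that outerplanar graphs are exactly the graphs with treewidth at most $2$ (equivalently, $\{K_4, K_{2,3}\}$-minor-free), and in particular they are $2$-degenerate: every subgraph has a vertex of degree at most $2$. This degeneracy should let me exhibit the \emph{entire} vertex set $V$ as a single $(k, 5)$-scattered set for some absolute constant $k$, which by the argument in \Cref{sec:scattering-separator} (the paragraph following \Cref{lem:why-five-hop}) immediately yields $\tau = O(k) = O(1)$ in \Cref{eq:rounding-condition}, and hence the bound $\sum_{A\cap B\neq\emptyset} x_Ax_B \le O(1)\cdot\sum_S x_S$ without paying the $\log n$ recursion factor.

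First I would establish that $V$ is $(k,5)$-scattered for a constant $k$. The natural route is to produce a proper coloring of the \emph{power graph} $G^5$ (vertices adjacent iff within shortest-path distance $\le 5$ in $G$) with $O(1)$ colors; each color class is then a $5$-hop independent set, and the number of color classes is the desired $k$. For this I would bound the maximum degree of $G^5$, or more robustly its degeneracy: since $G$ is outerplanar, balls of radius $5$ around any vertex induce outerplanar (hence sparse, $O(\text{radius})$-sized in a bounded-growth sense is \emph{not} automatic) subgraphs — so the cleaner argument is that any constant power of a graph of bounded treewidth again has bounded degeneracy. Concretely, $G^5$ is $\{K_t\}$-minor-free-ish / has degeneracy $O(1)$ because outerplanar graphs have bounded expansion, and constant powers of bounded-expansion graphs retain bounded degeneracy; a degeneracy-$d$ graph is properly $(d+1)$-colorable by the greedy order. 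This gives $k = O(1)$ color classes, each a $5$-hop independent set.

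With $V = X_1 \cup \cdots \cup X_k$ a $(k,5)$-scattered partition in hand, the counting is exactly the case-analysis already carried out in the proof of \Cref{lem:divide-and-conquer-analysis}, but now at a single level with no separator left over. Every pair $A, B$ with $A\cap B\neq\emptyset$ has, by \Cref{lem:why-five-hop} applied to the $X_i$ containing the common vertex, the property that $|(A\cup B)\cap X_i| = 1$. Summing over the $k$ classes, the cases ``one center in $X_i$'' and ``a common non-center vertex in $X_i$'' each contribute at most $k\cdot\sum_S x_S$ via the primal constraint $\sum_{v\in A} x_A \le 1$, exactly as in $\mathit{cost}_1$ and $\mathit{cost}_2$ there; since $\bigcup_i X_i = V$ covers every overlapping pair, there is no residual ``case (3)'' and the whole sum is bounded by $O(k)\cdot\sum_S x_S = O(1)\cdot\sum_S x_S$.

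The main obstacle I anticipate is the first step: cleanly proving that $V$ (equivalently all subgraphs, but here one level suffices) is $(k,5)$-scattered with an explicit constant $k$. The subtlety is that a radius-$5$ ball in an outerplanar graph can contain many vertices if degrees are unbounded, so I cannot bound $\Delta(G^5)$; I must argue via \emph{degeneracy/sparsity} of $G^5$ rather than maximum degree. The safe line is: outerplanar graphs have a constant-factor sparse ($|E|\le 2|V|$) structure preserved under bounded-radius powers up to a constant blow-up in the edge density constant, so $G^5$ is $O(1)$-degenerate and thus $O(1)$-colorable; getting the exact constant is routine once the degeneracy bound is pinned down, but identifying the right sparsity invariant (shortest-path geodesic structure along the outer face) is where the real care is needed.
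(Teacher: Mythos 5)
There is a fatal gap in your first step, and it cannot be repaired along the lines you sketch. Your whole plan rests on the claim that for an outerplanar graph $G$ the power graph $G^5$ is $O(1)$-degenerate, so that the \emph{entire} vertex set is $(O(1),5)$-scattered. This is false. Take the star $K_{1,n}$, which is outerplanar (indeed a tree) and has $|E|=n\le 2|V|$: every pair of vertices is within distance $2$, so $G^2=K_{n+1}$, hence $G^5\supseteq K_{n+1}$ has degeneracy $n$ and chromatic number $n+1$. Equivalently, any $5$-hop (even $3$-hop) independent set in $K_{1,n}$ has at most one vertex, so partitioning $V$ into $5$-hop independent sets requires $n+1$ parts, not $O(1)$. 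The ``safe line'' you propose --- that sparsity of outerplanar graphs is preserved under bounded-radius powers up to a constant factor --- is exactly what this example refutes; the folklore statement about powers retaining bounded degeneracy requires bounded maximum degree (or similar), and outerplanar graphs have unbounded degree. Note the paper itself flags this obstruction in \Cref{sec:scattering-separator}: it is not known (and for the whole vertex set, with plain scatteredness, it is simply false) that one can avoid unbounded-degree shortcuts. Your counting step would indeed go through if the scattering claim held, but since it fails on a star, the proposal does not prove the lemma even for trees.

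The paper's actual proof (\Cref{sec:appendix-outerplanar-graphs}) shows why a more careful formulation is needed and how to get it: root a BFS at $r$, partition $V$ into distance layers $L_0,\dots,L_d$, and measure distances \emph{within the graph $G_i=G[L_i\cup L_{i+1}\cup\cdots\cup L_d]$} rather than in $G$. Using $\{K_4,K_{2,3}\}$-minor-freeness, each layer induces a disjoint union of paths (\Cref{lem:outerplanar-bfslayer}), the $4$-hop neighborhood of $v\in L_i$ inside $G_i$ restricted to $L_i$ has at most $17$ vertices (\Cref{lem:outerplanar-5-hop-nbrs}), so each $L_i$ is $(17,5)$-scattered \emph{in $G_i$} (\Cref{lem:two-nbr-prev-level}); combined with the fact that a vertex has at most two neighbors in the previous layer (\Cref{lem:outerplanar-atmost-two-parents-bfs}), a charging argument that charges each overlapping pair to its vertex closest to $r$ yields the $O(1)$ bound. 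The star illustrates the distinction precisely: the leaves form a single layer $L_1$, and in $G_1=G[L_1]$ (an edgeless graph) they trivially constitute one $5$-hop independent set, because the distance-$2$ shortcuts through the center in $L_0$ are no longer available. Measuring scatteredness only against same-or-deeper layers is the idea your proposal is missing.
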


We defer the proof of \Cref{lem:outerplanar} to \Cref{sec:appendix-outerplanar-graphs}.
For an even simpler graph classes such as trees, we are able to obtain a much smaller constant bound. We provide such an analysis in \Cref{sec:lp-trees}.

\section{FPTAS for General Districting on Complete Graphs and Trees}
\label{sec:FPTAS}

In this section, we present FPTAS for complete graphs and trees with weighted blocks. The algorithms here find $c$-balanced, connected districts that can be more than a star graph. 
Further, for complete graphs and trees, the LP-based algorithm in the previous section achieves $O(1)$-approximation ratio while the algorithms in this section achieves a ratio of $1+\epsilon$.

\subsection{Complete Graph}\label{sec:complete}
Let $G$ be a complete graph with  functions of weights $p_1$ and $p_2$.   
Because we can merge two adjacent $c$-balanced districts on $G$ into a single $c$-balanced district as shown in \cref{lemma:mergeable-property}, the $c$-balanced districting problem on complete graphs can be reduced to obtaining \emph{one} $c$-balanced district, described as the following:

\begin{fact}[Mergeable Property]\label{lemma:mergeable-property}
Assume $T_1$ and $T_2$ are disjoint districts and $G[T_1\cup T_2]$ is connected. If $T_1$ and $T_2$ are both $c$-balanced, then $T_1\cup T_2$ is also a $c$-balanced district.
\end{fact}

\begin{mdframed}
\textsc{Complete-Graph-$c$-Balanced-Districting}\\
\textbf{Input:} Let $G = (V, E)$ be a complete graph of $n$ blocks and function of weights $p_1$ and $p_2$.\\
\textbf{Goal:}
Obtaining a subset $S\subseteq V$ such that the total weight $w(S)$ is maximized subjected to the $c$-balanced condition:
\begin{align}
(c-1)p_1(S)-p_2(S)&\ge 0 & \text{and} && (c-1)p_2(S)-p_1(S)&\ge 0.\label{eq:complete_balanced}
\end{align}
\end{mdframed}

The following theorem gives an FPTAS using dynamic programming (\cref{alg:complete}).

\begin{algorithm}[htp]
\caption{FPTAS on complete graphs}\label{alg:complete}
\KwData{$\epsilon>0$, $c> 2$, a complete graph $(V,E)$, $V = \{v_1, \dots, v_n\}$, functions of weights $\vp = (p_1, p_2)$}
\SetKwFunction{Ftrim}{Trim}
  \SetKwProg{Fn}{Function}{:}{}
  \Fn{\Ftrim{$L$, $\ell$, $\epsilon$}}{
        Sort $L = \{\vq_1, \dots, \vq_m\}$ so that $\ell(\vq_1)\ge \ell(\vq_2)\ge\dots\ge  \ell(\vq_m)$\;
        Set $L_{out} = \emptyset$\;
        \For{$i = 1,\dots, m$}
        {
            \If{$\vq_i$ is not marked}{
            $L_{out}\gets L_{out}\cup \{\vq_i\}$\;
            Mark all $\vq_j\in L$ that $\epsilon$-approximates $\vq_i$\;
            }
        }
        \KwRet $L_{out}$\;
  }

Set $L_1^0 = L_2^0 = \{(0,0)\}$\;
\For{$i = 1,\dots, n$}
{
    $L_1^i \gets \Ftrim{$L_1^{i-1}\cup (L_1^{i-1}+\vp(v_i)), \ell_1, \epsilon/n$}$\;
    $L_2^i \gets \Ftrim{$L_2^{i-1}\cup (L_2^{i-1}+\vp(v_i)), \ell_2, \epsilon/n$}$\;
}
\Return the largest $c$-balanced districting in $L_1^n\cup L_2^n$\;
\end{algorithm}



\begin{restatable}{theorem}{CompleteGraphDistrictingTheorem}
\label{thm:fptas-complete}
There exists an FPTAS algorithm solving \textsc{Complete-Graph-$c$-Balanced-Districting} so that for all $c> 2$, $0<\epsilon< \frac{1}{2}\ln(c-1)$, and complete graph $(V,E)$ of $n$ nodes with functions of weights $\vp = (p_1, p_2)$, the algorithm outputs an $e^\epsilon$-approximation in $O\left(\epsilon^{-4}n^6 (\ln w(V))^4\right)$ time where $w(V) = \sum_{v\in V} p_1(v)+p_2(v)$.
\end{restatable}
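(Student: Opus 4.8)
The plan is to establish three things about \Cref{alg:complete}: feasibility of the output, the $e^{\epsilon}$ approximation of its weight, and the running time, after first recording why solving the single-subset problem suffices. By the mergeable property (\Cref{lemma:mergeable-property}), on a complete graph any collection of disjoint $c$-balanced districts fuses into one $c$-balanced district of the same total weight, so the districting optimum equals the maximum weight of a single $c$-balanced subset --- exactly what \Cref{alg:complete} targets. Feasibility is then immediate: each list $L_j^i$ stores value vectors $\vp(S)=(p_1(S),p_2(S))$ of subsets $S\subseteq\{v_1,\dots,v_i\}$, and a one-line induction on the recurrence $L_j^i\gets\mathrm{Trim}(L_j^{i-1}\cup(L_j^{i-1}+\vp(v_i)),\ell_j,\epsilon/n)$ shows every stored pair is realized by a genuine subset (the two branches correspond to excluding or including $v_i$, and $\mathrm{Trim}$ only deletes). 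Since the algorithm returns the largest \emph{$c$-balanced} pair of $L_1^n\cup L_2^n$, the output is always a valid $c$-balanced subset.

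The heart of the argument is an approximation invariant built around the two priority measures, which I would take to be the slacks of the two balance constraints in \Cref{eq:complete_balanced}, namely $\ell_1:=(c-1)p_1-p_2$ and $\ell_2:=(c-1)p_2-p_1$, together with a \emph{symmetric} approximation relation: $\vq$ ``$\delta$-approximates'' $\vz$ when the two coordinates agree within a multiplicative factor $e^{\delta}$. Because $\mathrm{Trim}$ processes entries in decreasing $\ell_j$-order and lets the survivor mark everything it approximates, each retained representative $\vz$ of a discarded $\vq$ satisfies both $\ell_j(\vz)\ge \ell_j(\vq)$ (its constraint-slack is no smaller) and coordinatewise closeness within $e^{\delta}$. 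Two facts make this compose cleanly: the slack $\ell_j$ is additive, so the bound $\ell_j(\vz)\ge \ell_j(\vq)$ survives translation by the nonnegative vector $\vp(v_i)$; and coordinatewise closeness likewise survives adding the same $\vp(v_i)$ to both sides. I would then prove by induction on $i$ that for every $S\subseteq\{v_1,\dots,v_i\}$ the list $L_j^i$ contains a representative whose $\ell_j$-value is at least $\ell_j(\vp(S))$ and whose coordinates are within $e^{i\epsilon/n}$ of $\vp(S)$; each round contributes at most one further $e^{\epsilon/n}$ factor, telescoping to $e^{\epsilon}$ over the $n$ iterations exactly as in the classical subset-sum FPTAS.

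To finish the approximation guarantee I would apply the invariant to an optimal $c$-balanced subset $S^{*}$, assuming without loss of generality that type $1$ is its minority so that $p_1(S^{*})\le p_2(S^{*})$, and invoke the $L_1$ invariant. The representative $\vz$ has coordinates within $e^{\epsilon}$ of $\vp(S^{*})$, so $w(\vz)\ge e^{-\epsilon}w(S^{*})$. The binding constraint is preserved \emph{exactly}, since $\ell_1(\vz)\ge \ell_1(\vp(S^{*}))\ge 0$ gives $(c-1)p_1(\vz)\ge p_2(\vz)$. The only genuine risk is the slack constraint $(c-1)p_2\ge p_1$, which could be eroded by the drift: here $(c-1)p_2(\vz)\ge (c-1)e^{-\epsilon}p_2(S^{*})\ge (c-1)e^{-\epsilon}p_1(S^{*})$ while $p_1(\vz)\le e^{\epsilon}p_1(S^{*})$, so $\vz$ stays $c$-balanced provided $(c-1)e^{-\epsilon}\ge e^{\epsilon}$, i.e.\ $\epsilon\le\tfrac12\ln(c-1)$ --- precisely the hypothesis. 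Hence $\vz\in L_1^n$ is a true $c$-balanced pair of weight at least $e^{-\epsilon}\mathrm{OPT}$, which the algorithm's final selection can only improve, yielding the $e^{\epsilon}$-approximation.

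The remaining step, bounding the running time, is the one I expect to require the most careful accounting. The crucial observation is that the symmetric relation forces the retained list to be small: any two survivors of $\mathrm{Trim}$ must differ by more than a factor $e^{\epsilon/n}$ in at least one coordinate, so at most one survivor lands in each cell of a multiplicative grid of ratio $e^{\epsilon/n}$; with nonzero coordinate values confined to $[1,w(V)]$ there are $O(n\epsilon^{-1}\ln w(V))$ scales per coordinate and thus $O(n^{2}\epsilon^{-2}(\ln w(V))^{2})$ entries per list. Each $\mathrm{Trim}$ then costs time polynomial in its length (a sort followed by the nested marking pass), and multiplying by the $O(n)$ rounds and absorbing lower-order overheads gives the stated $O(\epsilon^{-4}n^{6}(\ln w(V))^{4})$ bound. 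The true conceptual obstacle is not any single estimate but the joint design of the priority measure and the two-list split: trimming by total weight alone would silently destroy the balance of the optimum, and it is exactly the choice of the constraint-slacks $\ell_1,\ell_2$ as priorities --- one list guarding each balance inequality --- that lets a single telescoping argument certify weight and balance at once.
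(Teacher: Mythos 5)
Your proposal is correct and follows essentially the same route as the paper's proof: the same inductive $(\ell_j,\,i\epsilon/n)$-trimming invariant (domination in constraint-slack plus coordinatewise $e^{\epsilon/n}$-closeness, composing under translation by $\vp(v_i)$ since $\ell_j$ is linear and the weights are nonnegative), the same case split on which type is the minority to show the surviving representative of the optimum stays $c$-balanced exactly when $\epsilon\le\tfrac12\ln(c-1)$, and the same multiplicative-grid bound of $O(n^{2}\epsilon^{-2}(\ln w(V))^{2})$ on list sizes driving the runtime. The only cosmetic difference is in the final accounting: the paper attributes the last factor of $n$ (from $n^{5}$ to $n^{6}$) explicitly to reconstructing the witness subset, which your "lower-order overheads" phrase glosses over but does not invalidate.
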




We defer a detailed proof of \Cref{thm:fptas-complete} to \Cref{sec:omitted-proofs-from-fptas} and here give a high level idea.  One naive approach involves creating a complete list of potential subset sum values, denoted as $L(V)$ and outputting the largest $c$-balanced one.  While this approach finds an optimal solution, it is not necessarily efficient, as $L(V)$ can be exponentially large.  
Similar to the knapsack problem or subset sum problem, one may use a bucketing idea to trim the list, keeping only one value when several are close to each other.
However, the $c$-balanced constraint posts a challenge for the algorithm --- for example, if an trimming algorithm keeps partial districts during the iterations, these partial districts may not always remain $c$-balanced resulting in a poor approximation ratio. 
To address this, we design a prioritized trimming process that prioritizes subsets satisfying the $c$-balanced condition in \cref{eq:complete_balanced} such that any $c$-balanced district in $L(V)$ would have an approximated district in our trimmed list. 

Specifically, given $\epsilon\ge 0$, we say $\vq$ is an \emph{$\epsilon$-approximate} of $\vq'$ if 
$q_1/q_1', q_2/q_2'\in [e^{-\epsilon}, e^{\epsilon}]$ where $0/0 := 1$. Let
$\ell_1(\vq) = (c-1)q_1-q_2$ and $\ell_2(\vq) = (c-1)q_2-q_1$ be two linear functions on $\vq = (q_1, q_2)\in \mathbb{R}^2$.  We say  $\vq$ is \emph{$\ell_j$-dominated} by $\vq'$ if $\ell_j(\vq)\le \ell_j(\vq')$ for $j = 1,2$, and $\vq, \vq'\in \mathbb{R}^2$, and $L'$ is a \emph{$(\ell_j,\epsilon)$-trimmed} of $L$ if $L'\subseteq L$ and for each $\vq\in L$ there exists $\vq'\in L'$ which $\epsilon$-approximates and $\ell_j$-dominates $\vq$.  
The key observation is that if $\vq$ is $c$-balanced satisfying \cref{eq:complete_balanced} with $q_2\ge q_1$ and $\vq'$ $\ell_1$-dominates $\vq$, $\vq'$ is also $c$-balanced.  A similar argument holds for $q_1\ge q_2$. This observation suggests that when trimming multiple nearby values, we keep the one that optimizes $\ell_1$ (and $\ell_2$) that ensures the existence of $c$-balanced approximated values.  Therefore, we can find an $e^\epsilon$-approximated solution if we can compute $(\ell_j, \epsilon)$-trimmed of all possible subset sum values $L(V)$.  
Moreover, because $\ell_1$ and $\ell_2$ are linear, we can use dynamic programming to sequentially and efficiently compute $L^i_1$ and $L^i_2$ that is  $(\ell_1, \frac{\varepsilon i}{n})$-trimmed and  $(\ell_2, \frac{\varepsilon i}{n})$-trimmed of all possible subset sum values on the first $i$ blocks $L^i = L(\{v_1, \dots, v_i\})$ respectively.  
While \cref{alg:complete} only returns the size of our approximated solution $\vq\in \mathbb{R}^2$, we can use an additional $n$ factor to store the set $S$ for each $\vq$ 
in $L^i_1, L^i_2$ to recover our approximated optimal districting. 

Finally, we note that our prioritized trimming that ensures both inequalities in \cref{eq:complete_balanced}: one through prioritized $\ell_j$ the other through exhausting cases of $q_2^*\ge q_1^*$ or $q_2^*\le q_1^*$. However, we cannot extend this approach to non-binary color settings.  Instead, if we allow relaxing $c$-balanced constraint to $c'$-balanced district with $c'$ slightly larger than $c$, the standard bucketing algorithm mentioned above can directly work even for the non-binary color setting.
\subsection{Tree Graph}
Similar to our dynamic programming algorithm for complete graphs, we can design an FPTAS for trees.  First, note that if we only need to find one $c$-balanced district, we can easily adapt our FPTAS for a complete graph.  Concretely, we recursively grow (incomplete) districts for a block's children in a depth-first search (DFS) order and decide whether to continue growing the district or not.  For more than one district, we can use additional memory to store the total weight of all $c$-balanced districts that are fully contained in the subtree of each block and provide an approximate  districting solution as described in \cref{alg:tree}.

\begin{restatable}{theorem}{TreeDistrictingTheorem}
\label{thm:tree}
There exists an FPTAS for the $c$-balanced districting problem on trees. That is, for all $c> 2$, $0<\epsilon< \frac{1}{2}\ln(c-1)$, and tree graph $(V,E)$ of $n$ blocks with population functions $\vp$, the algorithm outputs an $e^\epsilon$-approximation in $O\left(\epsilon^{-6}n^8 (\ln w(V))^6\right)$ time.
\end{restatable}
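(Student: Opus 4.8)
The plan is to reduce the tree case to a bottom-up dynamic program that reuses the prioritized-trimming machinery of the complete-graph FPTAS (\Cref{thm:fptas-complete}), augmented with one extra bookkeeping coordinate. First I would root the tree at an arbitrary vertex $r$ and process vertices in post-order. For a vertex $v$ with subtree $T_v$, any connected district containing $v$ is a subtree of $T_v$ that may still extend upward through the edge to $v$'s parent; call this the \emph{open} district at $v$. Accordingly, the DP state at $v$ is a list $L_v$ of pairs $(\vq, W)$, where $\vq=(q_1,q_2)$ is the weight vector of the open district containing $v$ (with $\vq=(0,0)$ meaning $v$ is unused), and $W$ is the total weight of all $c$-balanced districts already \emph{completed} and fully contained in $T_v$. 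This $W$ is exactly the ``additional memory'' alluded to in the description of \Cref{alg:tree}.

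I would compute $L_v$ by folding in the children $u_1,\dots,u_d$ one at a time, starting from the two seed entries $(\vp(v), 0)$ and $((0,0), 0)$. For each child $u_i$ there are two choices. Either \emph{cut} the edge $vu_i$, in which case the child's open district is finalized---its weight is added into the completed total if it is $c$-balanced and discarded otherwise---so cutting simply adds to $W$ the scalar $B_{u_i}:=\max_{(\mathbf{r},W')\in L_{u_i}}\bigl(W'+[\,\mathbf{r}\text{ is }c\text{-balanced}\,]\cdot(r_1+r_2)\bigr)$; or \emph{keep} the edge, in which case an entry of $L_{u_i}$ whose open vector is nonzero (hence contains $u_i$) is merged into the current open district, adding its weight vector to $\vq$ and its completed total to $W$. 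Each fold is therefore a Minkowski-sum-style update, directly analogous to the $L_1^{i-1}\cup(L_1^{i-1}+\vp(v_i))$ step of \Cref{alg:complete}.

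After every fold I would trim $L_v$ to keep it small. I would bucket the entries geometrically with ratio $e^{\epsilon/n}$ in all three coordinates $q_1, q_2, W$, and within each bucket retain the representative that is $\ell_j$-dominating in the $(q_1,q_2)$ coordinates, maintaining one $(\ell_1,\cdot)$-trimmed and one $(\ell_2,\cdot)$-trimmed list exactly as in the complete-graph case. The key observation behind \Cref{thm:fptas-complete}---that $\ell_j$-domination preserves $c$-balancedness under addition of weight vectors---carries over verbatim, since merging along a tree edge only adds vectors. The single new ingredient is the coordinate $W$: because $W$ does not enter the balancedness test and contributes only additively to the objective, keeping the $\ell_j$-dominating element of each fine $W$-bucket loses at most a factor $e^{\epsilon/n}$ in $W$. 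Hence every achievable $(\vq,W)$ admits a kept representative $(\vq',W')$ that $\epsilon$-approximates and $\ell_j$-dominates $\vq$ and satisfies $W'\ge e^{-\epsilon/n}W$; compounding along a root-to-leaf path of length at most $n$ yields an overall $e^{\epsilon}$-approximation. At the root I would output $\max_{(\vq,W)\in L_r}\bigl(W+[\,\vq\text{ is }c\text{-balanced}\,]\,(q_1+q_2)\bigr)$.

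For the running time, the three geometric coordinates each range over $[1, w(V)]$ with $O(n\epsilon^{-1}\ln w(V))$ levels, so each trimmed list has size $O\bigl((n\epsilon^{-1}\ln w(V))^3\bigr)$; a routine accounting of the quadratic merge-and-trim cost over the post-order traversal then yields the stated $O(\epsilon^{-6}n^8(\ln w(V))^6)$ bound. The main obstacle is precisely the correctness of trimming once the completed-weight coordinate $W$ is present: growing the open district trades off against finalizing sub-districts, so $W$ is genuinely coupled to $\vq$ and cannot be tracked independently. The delicate point to verify is that the three-dimensional geometric bucketing combined with $\ell_j$-prioritized trimming \emph{simultaneously} (i) preserves $c$-balancedness of the eventual open district and (ii) preserves the additive objective $W+w(\text{open district})$ up to a factor $e^{\epsilon}$, uniformly along every root-to-leaf path. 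Once this invariant is established, everything else is a mechanical adaptation of \Cref{alg:complete}.
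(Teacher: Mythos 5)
Your proposal is correct and takes essentially the same route as the paper's own proof: your state $(\vq, W)$ is exactly the paper's three-coordinate stamp $(s_1,s_2,s_3)$ of \Cref{alg:tree}, and your fold-and-trim with $\ell_j$-prioritized representatives over an $e^{\epsilon/n}$-geometric grid in all three coordinates, maintained as separate $\ell_1$- and $\ell_2$-trimmed lists, is precisely the paper's \textsc{grow}/\textsc{Trim} recursion, with the same list-size bound $O\bigl((n\epsilon^{-1}\ln w(V))^3\bigr)$ and the same final runtime. The only cosmetic differences are that the paper keeps the child's absent/consolidating entries in its list (rather than collapsing the ``cut'' case to a single max scalar) and makes your ``compounding along a path'' argument precise by an induction on subtree size with error $\epsilon_v = \mathrm{size}(v)\,\epsilon/n$ via the max-composition property (\Cref{lem:comp}).
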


We provide a detailed proof of \Cref{thm:tree} in \Cref{sec:omitted-proofs-from-fptas} and provide the intuition of proof here.  We need additional notations for our algorithm.
Given a districting $\mathcal{T}$ on a rooted tree $(V,E)$, we define a sequence of \emph{stamps} $\vs^v(\mathcal{T}) = (s_1^v(\mathcal{T}), s_2^v(\mathcal{T}), s_3^v(\mathcal{T}))\in \mathbb{R}^3$ for $v\in V$ where the first two coordinates store information about the current incomplete district; and the third coordinate is the total weight of completed $c$-balanced districts in the subtree of $v$
\begin{align*}
        s_3^v(\mathcal{T}) = \sum_{\mathclap{\substack{T\in \mathcal{T}: T \text{ is in the sub-tree of } v}}}w(T).
\end{align*}
We define $s_1^v(\mathcal{T})$ and $s_2^v(\mathcal{T})$ according to the following three cases. 
\begin{enumerate}
        \item If $v$ is not in any district of $\mathcal{T}$, we call $v$ \emph{absent} and set $s_1^v = s_2^v = 0$.
        \item If there exists a district $T\in \mathcal{T}$ that is fully contained in the subtree of $v$ and $v\in T$, we call $v$ is \emph{consolidating} and set $s_1^v = s_2^v = 0$. 
        \item Finally, if $T\in \mathcal{T}$ is not fully contained in the subtree of $v$ and $v\in T$, we call $v$ \emph{incomplete} (for $T$ or $\mathcal{T}$).  Additionally, we let $T^v\subsetneq T$ be the subtree of $T$ consisting of $v$'s descendant including $v$, and set $s_1^v = p_1(T^v)$ and $s_2^v = p_2(T^v)$ which are the population of each community in the incomplete district $T^v$.   
\end{enumerate}
Note that given $\mathcal{T}$ and $v$, if $v$ is incomplete for $T\in \mathcal{T}$, $s_1^v(\mathcal{T}) = s_1^v(\{T\}) = p_1(T^v)$ and $s_2^v(\mathcal{T}) = s_2^v(\{T\}) = p_2(T^v)$.

Our FPTAS (\cref{alg:tree}) uses a prioritized trimming process similar to \cref{alg:complete} and keeps stamps of all possible $c$-balanced districtings in the depth-first search
(DFS) order: 
\begin{equation}\label{eq:tree_stamp}
    L^v = \{\vs^v(\mathcal{T}): \mathcal{T} \text{ is a $c$-balanced districting}\}.
\end{equation}
Similarly, while \cref{alg:tree} only returns the size of our approximated solution, we can use an additional $n$ factor to store one $\mathcal{T}$ for each $\vs$ in $L^v_1, L^v_2$ to recover our approximated optimal districting.

\paragraph{Adapting to the Star Districting Setting.} We note that this dynamic programming approach also works for star districts on tree graph and yields an FPTAS.  Similar to the arbitrary districts setting, we consider three cases for each $v$: the absent case where $v$ is not included in any district; the consolidating where $v$ is in a star district that is contained in its descendants; the incomplete case where $v$ is the center of a star district that is incomplete.   

\section{Greedy Algorithms for Special Settings}\label{sec:variants}

In this section we discuss greedy strategies for various special cases of the $c$-balanced districting problem.

\subsection{General graphs with bounded rank districts
}

Given $G = (V,E)$ with  functions of weights $p_1$ and $p_2$ and integer $k$, the problem of $c$-balanced districting with rank $k$  asks for a districting $\mathcal{T}$ such that each $T \in \mathcal{T}$ is a $c$-balanced district and $|T| \le k$. Further, the total weight $w(\mathcal{T}) := \sum_{T\in\mathcal{T}} (p_1(T) + p_2(T))$ should be maximized.

We first provide a reduction from the most general $c$-balanced districting problem to maximum weighted matching on hypergraphs.
In a hypergraph $H=(V_H, E_H)$, $V_H=V$ and each hyperedge $h\in E_H$ is a subset of $V$. The \emph{rank} of $h$ is defined as $|h|$ and the rank of $H$ 
is the maximum rank among all $h\in E_H$.
If all hyperedges have the same rank $r$, we say that $H$ is \emph{$r$-uniform}.
A hypergraph matching $M\subseteq E_H$ on $H$ is a collection of pairwise disjoint hyperedges. 
For a given weight function $w:E_H\to \mathbb{Z}_{\ge 0}$, the maximum weighted hypergraph matching problem seeks for a matching $M$ that maximizes $w(M) := \sum_{h\in M} w(h)$.


\paragraph{Reduction.}
The reduction is straightforward: we treat $c$-balanced districts on $G$ as hyperedges in $H$.  Specifically, for each $c$-balanced district $T$ with $|T|\le k$, we create a hyperedge $h=T$ and add $h$ to the hypergraph. We can even enforce $h$ to have rank-$k$, by adding dummy vertices to the hypergraph and $h$. 
The weight $w(h)$ is then defined naturally as the sum of vertex weights within $h$, namely $\sum_{v\in h} w(v)$. The dummy vertices have weight $0$.

Now, it is clear that any matching $M$ in $H$ has a one-to-one correspondence to a valid bounded-rank-$k$ districting $\mathcal{T}$ on $G$.
Therefore, any algorithm solving the weighted hypergraph matching problem on a $k$-uniform hypergraph $H$ can be applied to solving the $c$-balanced districting problem with bounded rank $k$.



\paragraph{Simple Case: $k=2$.} The problem become polynomial time solvable by regarding the weighted hypergraph matching problem as a maximum weighted matching problem on normal graphs. We summarize the result below.

\begin{theorem}
\label{thm:bounded_k_2}
    There exists an algorithm that given an instance of $c$-balanced districting problem with rank $k=2$ on $G = (V,E)$ with population functions $p_1$ and $p_2$ outputs the solution in polynomial time.
\end{theorem}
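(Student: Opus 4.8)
The plan is to instantiate, at $k=2$, the reduction to weighted hypergraph matching set up just above the theorem, and to observe that a $2$-uniform hypergraph is nothing more than an ordinary edge-weighted graph, on which maximum weight matching is solvable in polynomial time by Edmonds' blossom algorithm.

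First I would characterize the feasible rank-$2$ districts. Since a two-vertex induced subgraph is connected iff the two vertices are adjacent, every rank-$2$ $c$-balanced district is either (i) a singleton $\{v\}$ that is itself $c$-balanced, i.e.\ $\min\{p_1(v),p_2(v)\}\ge w(v)/c$, or (ii) an edge $\{u,v\}\in E$ whose union is $c$-balanced, i.e.\ $\min\{p_1(u)+p_1(v),\,p_2(u)+p_2(v)\}\ge w(\{u,v\})/c$. Each condition is checkable in $O(1)$ time per candidate, so the complete list of feasible districts has size $O(|V|+|E|)$ and is enumerable in polynomial time.

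Next I would build the auxiliary edge-weighted graph $H$ following the dummy-vertex trick of the reduction. Take $V_H=V\cup\{d_v : \{v\}\text{ is }c\text{-balanced}\}$, where each $d_v$ is a fresh weight-$0$ dummy vertex. For each feasible pair district $\{u,v\}$ add an edge of weight $w(u)+w(v)$; for each feasible singleton $\{v\}$ add the pendant edge $\{v,d_v\}$ of weight $w(v)$. A matching $M$ in $H$ then corresponds bijectively to a feasible rank-$2$ districting $\mathcal{T}$ of equal total weight: pair edges map to two-block districts and pendant edges map to singleton districts, while disjointness of the districts is exactly the matching condition, since every dummy $d_v$ has degree one. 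Conversely, any feasible districting yields such a matching. Hence $w(\mathcal{T}_{\mathsf{OPT}})$ equals the maximum weight matching value in $H$, and running Edmonds' algorithm on $H$—which has $O(|V|)$ vertices and $O(|V|+|E|)$ edges—recovers an optimal districting in polynomial time.

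The construction and the correspondence are routine; the one point needing care is the treatment of singleton districts, which the pendant/dummy-vertex gadget resolves cleanly, so that the matching optimization correctly decides, for each vertex, between being unused, paired with a neighbor, or kept as a balanced singleton. I do not anticipate a substantive obstacle beyond verifying this weight-preserving bijection and invoking the polynomial-time guarantee of maximum weight matching.
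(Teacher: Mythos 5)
Your proposal is correct and matches the paper's own proof: both instantiate the hypergraph-matching reduction at $k=2$ (with weight-$0$ dummy vertices handling balanced singletons via pendant edges) and then invoke a polynomial-time maximum weight matching algorithm on the resulting ordinary graph. The only cosmetic difference is that you cite Edmonds' blossom algorithm while the paper cites faster modern matching algorithms, which does not affect the polynomial-time claim.
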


\begin{proof}
    Following the reduction, first consider the 2-uniform hypergraph $(V_H,E_H)$ formed using valid districts as hyperedges. $|V_H|=n$ and $|E_H|=O(n^2)$. Since the hypergraph is 2-uniform, the problem now reduces to maximum weight matching on general graphs. The fact that $H$ can be formed in linear time and we can solve maximum weight matching in $O(|E_H|\sqrt{|V_H|}\log (|V_H|\cdot w(G)))$ time \cite{DuanPS18,GabowT91} completes the proof.
\end{proof}

\paragraph{General Case: $k>2$.}
For $c$-balanced districting with rank $k$ for any $k$, we give a greedy algorithm that achieves a $k$-approximation ratio. Consider the greedy strategy described below in \Cref{alg:hypergraph_matching}.

\begin{algorithm}
\caption{A Polynomial Time Greedy Algorithm for Hypergraph Matching}
\label{alg:hypergraph_matching}
    \KwData{Given $H=(V_H, E_H)$ with the weight function $w$.}
    Initialize an empty matching $M = \emptyset$\;
    \While{$E_H\neq \emptyset$}{
    Pick $h \in E_H$ with the largest weight (breaking ties arbitrarily) and add $h$ to $M$.\\
    Remove all edges $h'$ with $h'\cap h\neq\emptyset$ from $E_H$.
    }
    \Return $M$
\end{algorithm}

\begin{lemma}\label{lem:matching}
For a given hypergraph $H=(V_H, E_H)$ with rank at most $k$, \Cref{alg:hypergraph_matching} returns a $k$-approximate solution to the maximum weight hypergraph matching problem.
\end{lemma}

\begin{proof}
Given $(V_H, E_H)$, let $M^*$ be the optimal solution, and $M$ be the output of \Cref{alg:hypergraph_matching}. 
Consider the mapping $f: M^*\to M$: for each $h\in M^*$, $f(h)$ gives the edge with the largest weight that shares at least one vertex with $h$,  i.e.
$$f(h) = \argmax_{h'\in M: h'\cap h\neq \emptyset}w(h').$$
Note that the set $\{h'\cap h\neq \emptyset\}$ is nonempty and thus $f$ is well-defined:  otherwise, $h$ itself must be added to $M$ by the greedy algorithm. Furthermore, since the greedy algorithm considers districts with a larger weight first, we have $w(h) \le w(f(h))$. 
Finally, because all edges in $M^*$ have disjoint sets of vertices, each $h'\in M$ has at most $k$ edges in the preimage $f^{-1}(h')$. Thus, 
$$w(M^*) = \sum_{h\in M^*} w(h)\le \sum_{h\in M^*} w(f(h)) \le k\sum_{h'\in M} w(h')= k\cdot w(M).$$
Therefore, $M$ is a $k$-approximate solution.
\end{proof}

Using \Cref{lem:matching}, we obtain an approximation algorithm for our $c$-balanced districting with rank $k$. 

\begin{theorem}
\label{thm:general_k}
There exists an algorithm that given an instance of $c$-balanced districting problem with rank $k$ for $k\geq3$ on $G$ with functions of weights $p_1$ and $p_2$ outputs a $k$-approximate solution in $\tilde{O}(k n^k)$ time.
\end{theorem}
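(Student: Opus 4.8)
The plan is to combine the reduction to hypergraph matching with the greedy guarantee of \Cref{lem:matching}, and then account for the cost of materializing the hypergraph. Concretely, I would first build the $k$-uniform hypergraph $H=(V_H, E_H)$ described in the reduction preceding \Cref{thm:bounded_k_2}: enumerate every vertex subset $T\subseteq V$ with $|T|\le k$, discard those for which $G[T]$ is disconnected or which fail the $c$-balanced condition of \Cref{eq:c-balanced}, and turn each surviving $T$ into a hyperedge of weight $w(T)=p_1(T)+p_2(T)$ (padding with dummy weight-$0$ vertices to make it exactly rank-$k$ if one wants uniformity, though this is not needed for the bound). Then run the greedy \Cref{alg:hypergraph_matching} on $H$ and output the districting corresponding to the returned matching $M$.

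For correctness, the reduction establishes a weight-preserving one-to-one correspondence between matchings in $H$ and feasible rank-$k$ $c$-balanced districtings on $G$, so an optimal matching corresponds to an optimal districting and any $f$-approximate matching yields an $f$-approximate districting. Since every hyperedge of $H$ has rank at most $k$, \Cref{lem:matching} guarantees that the greedy output $M$ satisfies $w(M^\ast)\le k\cdot w(M)$ where $M^\ast$ is a maximum-weight matching; pulling this back through the correspondence gives $w(\mathcal{T}_{\mathsf{OPT}})\le k\cdot w(\mathcal{T})$, i.e.\ a $k$-approximate solution. No padding is required for this factor because the proof of \Cref{lem:matching} only uses $|h'|\le k$ to bound the preimage size of the map $f$.

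For the running time, the number of candidate subsets of size at most $k$ is $\sum_{i=1}^{k}\binom{n}{i}=O(n^k)$, and each is tested for connectivity (a BFS/DFS on the induced subgraph) and for the two $c$-balance inequalities in $\poly(k)$ time, so $H$ is constructed with $|E_H|=O(n^k)$ hyperedges. The greedy then sorts these hyperedges by weight in $O(|E_H|\log|E_H|)=O(n^k\cdot k\log n)$ time and processes each in $O(k)$ time while maintaining a used-vertex marker, which is where the explicit $k$ factor in $\tilde O(kn^k)$ originates (the $\log n$ being absorbed into the $\tilde O$). I expect the only mildly delicate point to be the bookkeeping of this $k$-dependence rather than any conceptual difficulty: the dominant $O(n^k)$ term is clear, and one must simply verify that the connectivity/balance checks and the sort contribute no worse than the claimed $\tilde O(kn^k)$, treating arithmetic on weights as unit cost in the word-RAM model.
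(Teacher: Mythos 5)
Your proposal is correct and follows essentially the same route as the paper: the reduction to maximum-weight hypergraph matching, the greedy \Cref{alg:hypergraph_matching}, and the $k$-approximation guarantee of \Cref{lem:matching}, with the same $O(n^k)$ construction cost plus $\tilde{O}(kn^k)$ for sorting and greedy processing. The extra detail you give (explicit connectivity/balance checks per subset and the bookkeeping of the $k\log n$ factor) only elaborates what the paper's proof leaves implicit.
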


\begin{proof}
    Consider the rank $k$ hypergraph $H$ obtained from $G$ such that each hyperedge in $H$ corresponds to a $c$-balanced district. Following the reduction and \cref{lem:matching}, we have that \cref{alg:hypergraph_matching} returns a $k$-approximate solution to the $c$-balanced districting problem with rank $k$. The set $H$ can be constructed in time $O(n^k)$, sorting and running the greedy algorithm takes additional $\tilde{O}(k n^k)$ time.
\end{proof}



\subsection{General graphs with bounded degree and star districts}

In this section we consider $c$-balanced districting problem on a graph $G$ with arbitrary weights that has maximum degree bounded by $\Delta$.
Since each $c$-balanced district on $G$ now has at most $\Delta+1$ blocks, by \Cref{thm:general_k} we obtain an algorithm that produces a $(\Delta+1)$-approximate solution  in $\tilde{O}(n\Delta 2^\Delta)$ time. 

\paragraph{A Slightly Improved Approximation Ratio.} 
We realized that, using a special property of star districts, we are able to achieve a (very slightly) improvement to the approximation ratio, from $\Delta+1$ to $\Delta + \frac{1}{\Delta}$.
The observation is as follows. Suppose that a tight $(\Delta+1)$-approximated solution is returned from the greedy algorithm (via \Cref{alg:hypergraph_matching}). Let $T$ be one of the returned district with $|T|=\Delta+1$.
Then, because $w(T)$ is charged for $\Delta+1$ times in the analysis, every vertex of $T$ belongs to \emph{distinct} districts in an optimal solution.
In the star districting setting, we know that the center vertex $c_T$ must form a solo $c$-balanced district and occur in the optimal solution.

The above observation leads to a local-swap idea: if a sufficiently heavy center vertex $c_T$ of a district $T$ that is about to be added to the solution by the greedy algorithm, we can simply add $\{c_T\}$ to the output instead of adding $T$ (which potentially blocks lots of other districts). We implement this idea in \Cref{alg:greedy_bounded_degree}, where the only difference is highlighted as ``special case''.

\begin{algorithm}
\caption{Greedy algorithm for star districts on bounded-degree graphs}
\label{alg:greedy_bounded_degree}
\KwData{Input graph $G = (V,E)$ with maximum degree $\Delta$, $c\ge2$.}
\KwResult{A collection of $c$-balanced star districts $\cT = \{T_1, \cdots , T_\ell\}$}
Initialize $\cT = \emptyset$, $\mathcal{S} = \emptyset$.\\ 
\For{$v \in V$}{
        Add all $c$-balanced star districts centered at $v$ (by enumeration) to $\mathcal{S}$.
    }
Sort districts in $\mathcal{S}$ by their weights in the decreasing order.\\
\While{$\mathcal{S}$ is not empty}
{
   Let $T_i$ be the largest weight district in $\mathcal{S}$ and let $c_i$ be the center of $T_i$.\\
   \If(\tcp*[h]{special case}){$\{c_i\}$ is balanced and $|T_i|=\Delta+1$ and $w(c_i)\ge   w(T_i)/\Delta$}{
        Add $\{c_i\}$ to $\mathcal{T}$.\\
        Remove all districts that contains $c_i$ from $\mathcal{S}$.
   }
   \Else{
        Add $T_i$ to $\mathcal{T}$.\\
        Remove all districts that overlap with $T_i$ from $\mathcal{S}$.
   }
}
\Return $\mathcal{T}$
\end{algorithm}

\begin{theorem}
\label{thm:bounded_degree}
    There exists an algorithm that given a graph $G=(V,E)$ with maximum degree bounded by $\Delta$ and population functions $p_i(v)\in \mathbb{Z}_{\ge 0}$ for $v \in V$ outputs a $(\Delta+\frac{1}{\Delta})$-approximate solution to the $c$-balanced districting problem with star districts in $\widetilde{O}(n\Delta2^\Delta)$ time.
\end{theorem}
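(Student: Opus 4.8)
The plan is to refine the charging argument behind \Cref{lem:matching} so that the one output district that could be charged $\Delta+1$ times is in fact charged strictly less than the naive $(\Delta+1)\,w(\cdot)$. Let $\mathcal{T}$ be the output of \Cref{alg:greedy_bounded_degree} and let $\mathcal{O}$ be any optimal $c$-balanced star districting. Every district of $\mathcal{O}$ is a $c$-balanced star district and is therefore present in the initial candidate list $\mathcal{S}$; since the greedy loop runs until $\mathcal{S}$ is empty, each $h\in\mathcal{O}$ is removed from $\mathcal{S}$ at exactly one iteration. I would charge each such $h$ to the district $D\in\mathcal{T}$ added at that iteration, so that $\mathcal{O}$ is partitioned by the output districts, and it then suffices to show that every $D\in\mathcal{T}$ receives total optimal weight at most $(\Delta+\tfrac1\Delta)\,w(D)$; summing yields $w(\mathcal{O})\le(\Delta+\tfrac1\Delta)\,w(\mathcal{T})$. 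Throughout I use the greedy invariant that whenever $T_i$ is selected as the heaviest remaining district, every $h$ removed at that step is still in $\mathcal{S}$ and hence $w(h)\le w(T_i)$, together with the fact that the districts of $\mathcal{O}$ are pairwise disjoint, so at most one of them contains any fixed vertex.

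Next I would dispose of the easy cases. If $D=\{c_i\}$ is produced by the special case, only an optimal district containing $c_i$ can be removed at that step, so at most one $h$ is charged to $D$; its weight is at most $w(T_i)\le\Delta\,w(c_i)=\Delta\,w(D)$ by the special-case guard $w(c_i)\ge w(T_i)/\Delta$. If $D=T_i$ is added in the else branch with $|T_i|\le\Delta$, then at most $|T_i|\le\Delta$ disjoint optimal districts meet $T_i$, each of weight at most $w(T_i)$, for a total of at most $\Delta\,w(D)$. The only remaining case is $D=T_i$ added in the else branch with $|T_i|=\Delta+1$ and exactly $\Delta+1$ optimal districts charged to it; this is the case the improved ratio must absorb.

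The heart of the argument is a structural observation about this tight case. When $|T_i|=\Delta+1$, its center $c_i$ already has $\Delta$ neighbors inside $T_i$, so the degree bound forces $N(c_i)=T_i\setminus\{c_i\}$. If exactly $\Delta+1$ pairwise disjoint optimal districts meet the $\Delta+1$ vertices of $T_i$, each must contain exactly one vertex of $T_i$; in particular the optimal district $h^{*}$ containing $c_i$ satisfies $h^{*}\cap T_i=\{c_i\}$. Since $h^{*}$ is a star and every neighbor of $c_i$ lies in $T_i\setminus\{c_i\}$, the center $c_i$ cannot be a leaf of $h^{*}$ (its center would be a neighbor of $c_i$, hence in $T_i$), and if $c_i$ is the center of $h^{*}$ then $h^{*}$ can contain no further vertex; thus $h^{*}=\{c_i\}$. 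Being an optimal district, $\{c_i\}$ is $c$-balanced, so the special-case guard held in all but its weight condition, meaning the else branch was entered precisely because $w(c_i)<w(T_i)/\Delta$. Consequently the $\Delta$ optimal districts meeting the leaves contribute at most $\Delta\,w(T_i)$ and the one at $c_i$ contributes $w(c_i)<w(T_i)/\Delta$, for a total strictly below $(\Delta+\tfrac1\Delta)\,w(T_i)$. I expect this structural step --- linking tightness of the charge to the center being a balanced singleton in the optimum, which is exactly the situation the special case is engineered to exploit --- to be the main obstacle and the crux of the proof.

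Finally, correctness of \Cref{alg:greedy_bounded_degree} is immediate, since it outputs pairwise non-overlapping $c$-balanced star districts, and the running time follows by noting that each vertex has at most $2^{\Delta}$ star districts centered at it (one per subset of its $\le\Delta$ neighbors), so $|\mathcal{S}|\le n2^{\Delta}$; building, sorting, and greedily scanning this list costs $\widetilde O(n\Delta2^{\Delta})$, matching the claimed bound.
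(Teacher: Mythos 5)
Your proof is correct, and it uses the same algorithm, the same three-way case analysis (special case, non-tight greedy district, tight greedy district), and the same key structural step as the paper: when a rank-$(\Delta+1)$ district $T_i$ is charged $\Delta+1$ times, the degree bound forces $N(c_i)=T_i\setminus\{c_i\}$, so the optimal district through $c_i$ must be the singleton $\{c_i\}$, which is then $c$-balanced, and the failure of the special-case guard yields $w(c_i)<w(T_i)/\Delta$, giving the $\bigl(\Delta+\tfrac1\Delta\bigr)$ bound.

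The one place where you genuinely diverge is the charging map, and your choice is the more robust one. The paper defines $f(T^*)$ as the \emph{largest-weight} district of the output $\mathcal{T}$ that intersects $T^*$ (mirroring \Cref{lem:matching}) and asserts $w(T^*)\le w(f(T^*))$ outside the special case. But once the special case can fire, the weights of the districts \emph{added} to $\mathcal{T}$ are no longer monotone in the order of addition: an iteration may process a heavy $T_j$ yet add only the light singleton $\{c_j\}$. An optimal district $T^*$ containing $c_j$ is then eliminated at that iteration with only the guarantee $w(T^*)\le w(T_j)\le \Delta\, w(c_j)$, while a \emph{later, heavier} output district may also intersect $T^*$ and become $f(T^*)$; in that event $w(T^*)\le w(f(T^*))$ need not hold, so the paper's per-case accounting requires reading ``the special case does not occur'' as a statement about the iteration at which $T^*$ was eliminated, not about $f(T^*)$. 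Your charging — assign each optimal district to the output district added at the iteration in which it is deleted from $\mathcal{S}$ — makes exactly this reading explicit: the greedy invariant $w(T^*)\le w(T_i)$ holds by construction at that iteration, each special-case singleton is charged at most once, each else-branch district at most $|T_i|$ times, and your tight-case analysis goes through verbatim. You also spell out (via the leaf/center dichotomy for the star $h^*$) the claim the paper only states as an ``observation.'' So: same theorem, same algorithm, same crux, but your bookkeeping closes a gap the paper's mapping leaves open.
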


\begin{proof}
    The idea is similar to proof of \cref{lem:matching}. Let $\cT^* = \{T_1^*, \cdots, T_{\ell^*}^*\}$ be any optimal districting and $\cT$ be the districting returned by \cref{alg:greedy_bounded_degree}. Let $f:\cT^* \rightarrow \cT$ be the function that for $T_i^* \in \cT^*$ returns a district in $\cT$ with largest weight that intersects with $T_i^*$ in at least one vertex.
    If the special case does not occur, we know that $w(T_i^*) \le w(f({T_i^*}))$. Otherwise, we know by the condition of the special case, we have
    $w(T_i^*) \le \Delta \cdot w(f({T_i^*}))$.
    A naive bound gives $\sum_{T_i^*\in \cT^*} w(T_i^*) \le \sum_{T\in \cT} w(T) |f^{-1}(T)|$ -- in the worst case $|f^{-1}(T)|$ can be as large as $\Delta+1$.

    Now, to provide a better approximation ratio bound, 
    it suffices for us to bound for each $T\in \cT$, how many times $w(T)$ is charged by the districts in the optimal solution.  We say that $T$ is \emph{triggered} by the special case, if $T$ is a solo $c$-balanced district that joins $\cT$ according to the special case in \Cref{alg:greedy_bounded_degree}. There are three cases:
    \begin{itemize}[itemsep=0pt]
    \item \textbf{Case 1:} $|f^{-1}(T)|\le \Delta$ and $T$ is not triggered by the special case. In this case, the contribution of $T$ to the sum would be $w(T)\cdot |f^{-1}(T)|\le \Delta\cdot w(T)$.
    \item \textbf{Case 2:} $|f^{-1}(T)|=1$ and $T=\{c_i\}$ is triggered by the special case. In this case, when $T$ joins the output set $\cT$, the district that the algorithm is actually considering has weight at most $\Delta\cdot  w(T)$. Hence, the district in the optimal solution $T^*\in f^{-1}(T)$ must satisfy $w(T^*)\le \Delta \cdot w(T)$.
    \item \textbf{Case 3:} $|f^{-1}(T)|=\Delta+1$ and $T$ has not been triggered by the special case.
    In this case, there are exactly $\Delta+1$ districts in the optimal solution. Using the observation, the center vertex $c_T$ forms a solo district in the optimal solution. Moreover, since the special case is not triggered for $T$, we know that $w(c_T)\le w(T)/\Delta$.
    Therefore, $\sum_{T^*\in f^{-1}(T)} w(T^*) \le (\Delta + \frac{1}{\Delta})w(T)$ as desired.
    \end{itemize}
    Since in any case the contribution of $w(T)$ in the anaylsis does not exceed $(\Delta+\frac{1}{\Delta})w(T)$, we conclude that \Cref{alg:greedy_bounded_degree} produces an $(\Delta+\frac{1}{\Delta})$-approximate solution.

    For runtime, note that the $c$-balanced district for each node can be found in time $O(n2^\Delta)$ and the sorting and greedy strategy can be implemented in time $\widetilde{O}(n\Delta 2^\Delta)$.
\end{proof}

\paragraph{Remark.} The exponential dependence on $\Delta$ in runtime can be reduced to $\poly(\Delta, 1/\epsilon)$ at the cost of an additional $(1+\varepsilon)$ term in the approximation factor. This can be achieved by using the FPTAS for complete graph (\cref{alg:complete}) for finding the largest $c$-balanced district centered at each vertex $v$ in \cref{alg:greedy_bounded_degree}. 

\paragraph{Binary Weights.} If we impose an additional constraint such that the input graph $G$ has a bounded degree $\Delta$ and each vertices have either $p_1$ or $p_2$ weight to be 1, 
we may achieve a $(\Delta+1)/2$-approximate solution.
Consider an algorithm that computes any maximum (cardinality) matching $M^*$ where for each matched edge $\{u, v\}\in M^*$ we have $p_1(u)=1$ and $p_2(v)=1$.
The algorithm simply treats each matched edge as a perfectly balanced district and returns $\cT_{\mathrm{ALG}}=M^*$. Clearly, $w(\cT_{\mathrm{ALG}}) = 2|M^*|$.

To see the approximation ratio, consider any optimal solution $\cT^*$. Each district $T\in\cT^*$ contains at least one vertex with $p_1$ weight 1 and one vertex with $p_2$ weight 1. Thus, $|M^*|\ge |\cT^*|$.
On the other hand, since each star district $T\in\cT^*$ has at most $\Delta+1$ blocks, we conclude that 
\[
w(\cT^*) \le (\Delta+1)|\cT^*| \le (\Delta+1)|M^*|=\frac{\Delta+1}{2} w(\cT_{\mathrm{ALG}}).
\]
Since we are pairing up vertices of different weight types, we are actually solving for a maximum unweighted bipartite graph matching. The state-of-the-art unweighted maximum bipartite matching can be solved in $m^{1+o(1)}$ time~\cite{ChenKLPGS23}, $\tilde{O}(m+n^{1.5})$ time~\cite{BrandLNPSS0W20}, or $n^{2+o(1)}$ time~\cite{bernstein2024maximumflowaugmentingpaths,ChuzhoyK24}.


\subsection{General graphs with binary weights and star districts}


For binary weight setting, we assume that each vertex $v$ has only one type of non-zero population/weight, i.e. either $p_1(v) = 1$ or $p_2(v) = 1$.  
We call a vertex \emph{covered} if it is part of some districting $T \in \cT$ and \emph{uncovered} otherwise.  

We use local search to find an approximate solution, with details in \cref{alg:local_search}.
We prove that this algorithm returns a $c$-approximate districting. The idea is to argue that for \textit{any} optimal district $T_i^* \in \cT^*$, at least $c$ fraction of nodes in $T_i^*$ are covered by some $T_i \in \cT$, the approximation guarantee then follows.
We do this on a case basis, by considering a fixed $T_i^*$ and for its center vertex, based on the membership in districting $\cT$ argue for the covering of its neighbor vertices.

\begin{algorithm}
\caption{Local search algorithm for star districts on graphs with binary weights}\label{alg:local_search}
  \KwData{Input graph $G = (V,E)$ with binary weights, $c\ge2$.}
\KwResult{Balanced star districts $\cT = \{T_1, \cdots , T_\ell\}$}
Initialize $\cT = \emptyset$, $k$ to be the largest integer $\le c-1$. \\
\While{$\exists (u, w) \in E$ such that $p_1(u) = 1$, $p_2(w) = 1$ and $u$ and $w$ are uncovered}
{
   Start a new district $T$, setting any one of the nodes as the center (assume $w$)\\
    Greedily assign uncovered neighbors of $w$ to $T$ while maintaining $c$-balanced property of $T$ \\
    \For{$v \in T \setminus \{w\}$}{
        \If{ $v$ has $ > k$ uncovered neighbors such that $v$ forms a $c$-balanced district with these vertices
        }{
        Remove $v$ from $T$ and start new district $T'$ with $v$ as center\tcp*{swap step}
        Assign the $> k$ uncovered neighbors to $v$.\\
        }
    }
    Add all such created $T'$ to $\cT$. \\
    If the resulting $T$ after swap steps is still balanced, add $T$ to $\cT$.
}
Output $\cT = \{T_1, \cdots , T_\ell\}$
\end{algorithm}

\begin{theorem}
\label{lem:local_search}
    The districting $\cT$ returned by \Cref{alg:local_search} on termination is a $c$-approximate solution.
\end{theorem}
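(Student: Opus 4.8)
The plan is to reduce the global $c$-approximation bound to a single per-district covering statement, and then prove that statement by a case analysis on the center of each optimal district; the covered-center case will be the crux.

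First I would reduce. Since the weights are binary, every district $T$ satisfies $w(T)=|T|$, so $w(\cT^*)=\sum_{T^*\in\cT^*}|T^*|$ while $w(\cT)$ equals the number of covered vertices. The target is the local claim that every optimal district $T^*\in\cT^*$ has at least $|T^*|/c$ of its vertices covered by $\cT$. Granting this, summing over $T^*\in\cT^*$ and using that the optimal districts are pairwise disjoint (so each covered vertex is charged to at most one $T^*$) gives $\tfrac1c\,w(\cT^*)=\sum_{T^*}\tfrac1c|T^*|\le \sum_{T^*}\#\{\text{covered vertices of }T^*\}\le w(\cT)$, i.e.\ $w(\cT^*)\le c\cdot w(\cT)$, which is exactly the definition of a $c$-approximation.

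Next I would record the termination invariant and dispatch the easy case. When \Cref{alg:local_search} halts, its while-condition fails, so at termination there is no edge joining an uncovered $p_1$-vertex to an uncovered $p_2$-vertex. Fix $T^*$ with center $z$; without loss of generality $p_2(z)=1$, so the $p_1$-vertices of $T^*$ are precisely the $p_1$-neighbors of $z$, and there are $p_1(T^*)\ge |T^*|/c$ of them by the $c$-balance condition \eqref{eq:c-balanced}. If $z$ is uncovered, then each such $p_1$-neighbor must be covered---otherwise it would form a forbidden uncovered bichromatic edge with $z$---so at least $p_1(T^*)\ge |T^*|/c$ vertices of $T^*$ are covered and the local claim holds.

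The main obstacle is the covered-center case, where the invariant alone is useless (a covered $z$ imposes no constraint on its uncovered neighbors) and the local-search/swap structure must be exploited. Here I would split on how $z$ is covered. If $z$ is itself a center in $\cT$ (either originally or created by a swap), the greedy growth step absorbed every uncovered neighbor of $z$ subject to balance, and I would argue that the vertices it absorbed, together with neighbors covered by earlier districts, already number at least $|T^*|/c$. If instead $z$ is a \emph{non-swapped} leaf of some district, the key point is that the swap step did not fire on $z$: at the time its district was formed, $z$ did not admit $>k$ uncovered neighbors forming a $c$-balanced star, where $k=\lfloor c-1\rfloor$ is exactly the threshold at which a single-color star centered at $z$ saturates \eqref{eq:c-balanced}. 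Since coverage only grows, the uncovered leaves $U^*$ of $T^*$ at termination were already uncovered neighbors of $z$ at formation time, so the swap-free condition forbids $\{z\}\cup U^*$ from containing a large balanced sub-star. Combining this with the global balance of $T^*$---a large set of uncovered leaves of one color forces, by \eqref{eq:c-balanced}, a comparable number of leaves of the other color, and were those also uncovered they would complete a large balanced sub-star at $z$ and trigger the swap---I would deduce that $|U^*|\le (c-1)\cdot\#\{\text{covered vertices of }T^*\}$, equivalently that the covered fraction of $T^*$ is at least $1/c$. The delicate bookkeeping is reconciling the time-dependence of ``uncovered neighbors'' in the swap test with the final coverage, and tracking the balance budget of the greedy growth; this is where I expect the real work to lie and where the precise value $k=\lfloor c-1\rfloor$ is essential.
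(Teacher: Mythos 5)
Your skeleton coincides with the paper's own proof: reduce to the per-district claim that every optimal district $T^*$ keeps at least $|T^*|/c$ of its vertices covered (summing over the pairwise disjoint optimal districts then yields the $c$-approximation), and prove that claim by cases on how the optimal center $z$ is covered. Your reduction and your uncovered-center case are complete and correct, and they are exactly the paper's setup and its Case~1. The genuine gap is that the covered-center case --- which you yourself flag as ``where the real work lies'' --- is left as a plan rather than a proof, and that case is precisely where the content of the theorem sits. Concretely, two deductions are missing. First, when $z$ is the center of some district $T\in\cT$, you assert without argument that the absorbed and previously covered neighbors number at least $|T^*|/c$. The paper's argument here is short but it is an argument: assume wlog that $T$ ends up $p_1$-majority; since adding a vertex of the minority weight type to a $c$-balanced district (with $c\ge 2$) preserves $c$-balance, any $p_2$-child of $z$ that were still uncovered would have been absorbed by the greedy growth; hence all $p_2$-children of $z$ are covered, and $c$-balance of $T^*$ gives at least $|T^*|/c$ of them. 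Also, your parenthetical ``either originally or created by a swap'' is not faithful to \Cref{alg:local_search}: a swap-created center is assigned only the $>k$ uncovered neighbors that witnessed the swap test; it does \emph{not} absorb every uncovered neighbor subject to balance, so that sub-case cannot be dispatched by the sentence you wrote.

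Second, for the non-swapped-leaf case you state the target inequality $|U^*|\le (c-1)\cdot\#\{\text{covered vertices of }T^*\}$ and list the correct ingredients (monotonicity of coverage, the swap threshold $k$, balance of $T^*$), but you never perform the deduction, and this is where the case analysis is delicate. What is actually needed is a dichotomy extracted from the failure of the swap test: writing $a$ (resp.\ $b$) for the number of uncovered neighbors of $z$ of the same (resp.\ opposite) weight type as $z$, the failure of the test forces \emph{either} $a+b\le k$, \emph{or} $a=0$, \emph{or} $b\le 1$; otherwise the star consisting of $z$, $y:=\min\bigl(b,(c-1)(1+a)\bigr)$ opposite-type vertices, and $\min\bigl(a,(c-1)y-1\bigr)$ same-type vertices is $c$-balanced and uses more than $k$ uncovered neighbors, so the swap would have fired. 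Each branch then closes differently: $a+b\le k\le c-1$ gives covered $\ge |T^*|-(c-1)\ge |T^*|/c$ when $|T^*|\ge c$ (and $|T^*|<c$ is trivial since $z$ itself is covered); $a=0$ gives covered $\ge p_1(T^*)\ge |T^*|/c$; and $b\le 1$ gives covered $\ge 1+p_2(T^*)-b\ge |T^*|/c$. Your sketch gestures at exactly this color-skew argument, and executing it would in fact be \emph{more} careful than the paper's own Case~(3.b), which asserts that at most $k$ children remain uncovered --- a claim that fails in the monochromatic branch (large $a$, $b\le 1$), where the theorem is instead rescued by the balance of $T^*$ as above. But as submitted, your proposal defers this computation entirely, so the crux of the theorem remains unproven.
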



\begin{proof}
    Let $\cT$ denote the set of districts returned by \Cref{alg:local_search} and $\cT^*$ be the set of optimal districts. 
    Consider $T_i^*  \in \cT^*$ in the optimal solution and let $c_i$ be its center, $\{v_1, \cdots , v_m\}$ be the children of $c_i$ in $T_i^*$ and assume wlog $p_1(c_i) = 1$.

    \paragraph{Case 1:} Suppose $c_i \notin T_i$ for all $T_i \in \cT$.
    Since we have $p_1(c_i) = 1$, the subset of vertices in $\{v_1, \cdots , v_m\}$ that are not covered by any district in $\cT$ have $p_1$ weight to be 1 (and $p_2$ weight of $0$), as otherwise the algorithm would use the edge of any of those vertices to $c_i$ to start a new district. This implies all the vertices with $p_2$ weight of 1 in $\{v_1, \cdots , v_m\}$ are covered in $\cT$ which gives us that at least $|T_i^*| / c$ vertices are covered.

    \paragraph{Case 2:} Suppose that $c_i$ is also a center of $T_i \in \cT$.
    Assume wlog that majority of vertices in $T_i$ have $p_1$ weight of 1. Since the algorithm terminates, it means that no additional vertex with $p_1$ weight 1 from $\{v_1, \cdots , v_m\}$ can be added without violating the $c$-balanced property. It also implies that vertices with $p_2$ weight 1 from $\{v_1, \cdots , v_m\}$ are covered in $\cT$ (otherwise the algorithm could just include those vertices since $T_i$ is has majority of vertices weight $p_1$ weight of 1, giving us that at least $|T_i^*| / c$ vertices are covered. The case where $T_i$ has majority vertices with $p_2$ weight 1 follows via similar argument.

    \paragraph{Case 3:} Suppose $c_i \in T_i$ but it is not the center.
    We have the following two cases which follow from the conditions on the ``swap'' step within the algorithm: 

    \begin{itemize}[itemsep=0pt]
    \item \textbf{Case (3.a):} The swap step could not create a new district with $c_i$ as the center.
    Recall we assume with loss of generality that $p_1(c_i) = 1$. Since a new district could not be created, neighbors of $c_i$ with $p_2$ weight 1 (which also include vertices from $\{v_1 , \cdots v_m\}$ in $T_i^*$) were already covered by some $T$. This in turn implies that at least $|T_i^*| / c$ vertices are covered.

    \item \textbf{Case (3.b):} Creating a new district with $c_i$ was possible but the improvement was not substantial. This case occurs when the district resulting from swapping $c_i$ as center results in at most $k$ \text{additional} vertices being covered. 
    As creating a new district was possible, we have that $c_i$ has at least one uncovered neighbor with $p_2$ weight 1. This implies at least $c-1$ vertices with either $p_i$ weight could be added to this potential district without violating the $c$-balanced property, but were not available. As $c_i$ is not a center from swap step, at most $k$ vertices from $\{v_1, \cdots , v_m\}$ are left uncovered by $\cT$. If $|T_i^*| < c$, since $c_i$ is covered, it follows that at least $1/c$ fraction of vertices are covered.
    Now, for $|T_i^*| \ge c$, $k \le c-1$ and $c \ge 2$ we have $|T_i^*| - (c-1) \ge \frac{|T_i^*|}{c}$ which gives us the desired approximation factor. 
    \end{itemize}
\end{proof}

\bibliographystyle{abbrv}
\bibliography{cover}

\appendix

\section{Omitted Proofs from Section 3}\label{omit3}

Throughout, we refer type-1 vertices as vertices that have non-zero weight $p_1$ and zero weight of $p_2$ and type-2 vertices as vertices with non-zero weight $p_2$ and zero weight of $p_1$.

\BasicHardnessResult*

\begin{proof}
We describe a polynomial reduction from the $\mathsf{NP}$-hard problem of \emph{exact set cover}: given a set $S$ of $n$ elements $x_1, x_2, \cdots, x_n$, and a family $\mathcal{S}$ of subsets $S_1, S_2, \cdots, S_m$, where $S_i \subseteq S$, is it possible to find $\mathcal{S}'\subseteq \mathcal{S}$ such that every element in $S$ is covered by exactly one set in $\mathcal{S}'$.

We construct a bipartite graph $G$ where we place one vertex for each element $x_i$ and one vertex for each set $S_j$. We connect edges between $x_i$ with $S_j$ if and only if $x_i\in S_j$. We assign $p_1(x_i)=1$ and $p_2(x_i)=0$, $\forall i$. And we assign $p_1(S_j)=0$ and $p_2(S_j)=(c-1)|S_j|$, $\forall j$.
The only $c$-balanced districts in this graph are stars, each rooted at a vertex/block $S_j$ with all neighbors in $S$. Therefore if there is an exact set cover, then there is a solution to the $c$-balanced districting problem in $G$ with total weight of $cn$. Otherwise, the best $c$-balanced districting problem in $G$ has total weight no greater than $c(n-1)$.
\end{proof}

\HardnessForPlanarGraphs*

\begin{proof}
We use a reduction from \emph{planar 1-in-3SAT}~\cite{Mulzer2008-sl}. This is a problem of $n$ Boolean variables $x_1, x_2, \dots, x_n$ and $m$ clauses each with three literals, and the incidence graph of how the variables appear in the clauses is planar. The problem answers \textsc{Yes} if there is an assignment to the Boolean variables such that each clause is satisfied by exactly one literal.

We realize the planar 1-in-3SAT instance by a planar graph. In the graph we have two types of vertices, type-1 vertices which has non-zero weight $p_1$ and zero weight of $p_2$ and type-2 vertices with non-zero weight $p_2$ and zero weight of $p_1$. We take $c=2$ thus each district must have an equal amount of the two types of weight.

\begin{figure}[htbp]
\centering
\includegraphics[width=0.95\linewidth]{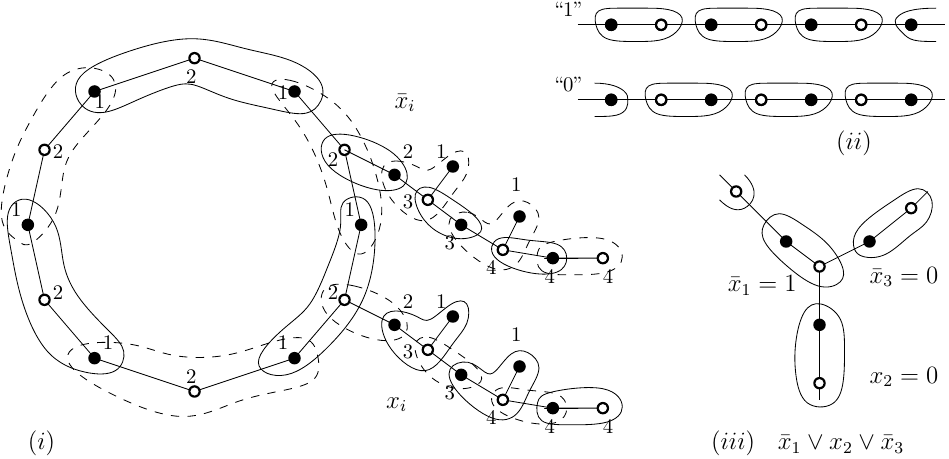} 
\caption{(i) A variable gadget for $x_i$. Solid vertices are type-2 vertices and hallow vertices are type-1 vertices. Here we use $m=3$ in this example and one clause has $x_i$ and one has $\bar{x}_i$. Thus there are two tendrils connect to clauses. If $x_i=1$ we choose solid districts; otherwise we choose dashed districts.  (ii) Implementation of an edge from a variable $x_i$ to a clause $C_j$ where $x_i$ appears. If the chain is propagating ``1", the districts on the chain starting from a variable gadget have a type-2 vertex followed by a type-1 vertex; otherwise the districts on the chain have a type-1 vertex followed by a type-2 vertex.
(iii) The clause gadget $x_1 \cup x_2 \cup \bar{x}_3$. Exactly one of $x_1, x_2, \bar{x}_3$ is assigned $1$ which means that the center type-2 vertex is covered by exactly one district.}
\label{fig:3SAT}
\end{figure}

Figure~\ref{fig:3SAT} (i) shows the gadget for a variable $x_i$. The center of this gadget is a cycle of $4m$ vertices, arranged on a cycle with alternating type-1 vertices (hallow) and type-2 vertices (solid). The type-1 vertices have weight of $p_1=2$ and type-2 vertices have weight of $p_2=1$. 
From a type-1 vertex we may grow one tendril that leads to a clause gadget, if this vertex appears in the clause. There are a total of $2m$ type-1 vertices where the tendrils of $m$ of them may lead to clauses where $x_i$ appears in the original form; and the other $m$ of them may lead to clauses where $\bar{x}_i$ appears. These are arranged in an alternating manner along the cycle. If $x_i$ appears in only $k$ clauses, there are only $k$ tendrils. Those tendrils that connect to clause gadgets carry alternating type-1 and type-2 vertices. The weights are labeled nearby the vertices in the Figure. 

The only balanced district we can find on the cycle is a star centered at a type-1 vertex $v$ with two type-2 leaves that are neighbors of $v$ on the cycle. 
Further, we could use two sets of districts (each set has $m$ balanced districts) to cover all type-2 vertices on the cycle. In the figure these two sets of districts are shown in solid sets and dashed sets respectively. They appear in alternating order along the cycle. Using either set we cover all type-2 vertices on the cycle and leave out precisely $m$ type-1 vertices uncovered. 
If we choose the solid set it means we assign $1$ to this variable. If we choose the dashed set it means we assign $0$ to this variable. Either way all type-2 vertices along the variable cycles are covered. How the type-1 vertices are covered (or not) depends on the assignment of the corresponding variable.

The choice of the districts (corresponding to the assigned Boolean value to $x_i$) is propagated to the clause set. 
The implementation of an edge connecting a variable $x_i$ to a clause $C_j$ is by a \emph{path} of alternating type-1 vertices and type-2 vertices. See Figure~\ref{fig:3SAT} (ii).
Each path starts from a type-1 vertex on the variable gadget and stops at a type-1 vertex at the center of a clause gadget. The path can be propagating a value of `1' if the covering tree takes a type-2 vertex followed by a type-1 vertex along the direction from the variable gadget to the clause gadget (solid before hallow), or `0' if swapped. When the first type-1 vertex along a tendril is covered by the variable assignment choice on the variable gadget (meaning the corresponding literal is taken as a 1), the districts on the tendril will be propagating a ``1'' to the clause.

One subtle issue of Figure~\ref{fig:3SAT} (i) is the covered weights of type-1 vertices, because the number of times a variable $x_i$ appears in the clauses can be different from that of $\bar{x}_i$. Suppose we take $x_i=1$ (i.e., the solid districts), the tendrils corresponding to $\bar{x}_i$ grab a type-1 (hallow) vertex on the cycle, while the type-1 vertices in the dashed districts on the variable gadget that are not attached to any tendrils will not be covered. Thus, the weight of type-1 vertices in the cycle and tendrils uncovered by $x_i=1$ is $2(m-k(x_i))$, with $k(x_i)$ as the number of times that $\bar{x}_i$ appears in clauses. To make sure that we always leave out the same total weight uncovered regardless of $x_i=1$ or $x_i=0$, we amend the beginning part of the tendrils such that if it is propagating a ``0'', we leave out an extra weight of two not covered. We achieve this by gradually increasing the node weight from $2$ on the cycle to $4$ with two attached vertices of  each of type-2 weight $1$. Figure~\ref{fig:3SAT} (i) shows how the two extra vertices are attached. This way, regardless of whether $x_i=1$ and $x_i=0$, the total weights not covered by the variable gadget along with the beginning parts of the tendrils are exactly $2m$.

The clause gadget is implemented by a single type-1 vertex where three tendrils (corresponding to three literals) meet. 
If $x_i$ (or $\bar{x}_i$) appears in the clause, we connect a path from a type-1 vertex covered by a solid (or dashed) district of the variable gadget $x_i$. Thus if the solid district is chosen, the path from $x_i$ propagates an assignment of $1$ to the clause and covers the center. If the dashed tree cover is chosen, the path from $x_i$ propagates an assignment of $0$ which cannot cover the center of the clause gadget. 
The center is covered by the (unique) path that is assigned ``1''. 
See Figure~\ref{fig:3SAT} (iii) for an example.

In summary, if the planar 1-in-3SAT is satisfiable, all $m$ type-2 vertices of the clause gadgets are covered by disjoint balanced districts. If the planar 1-in-3SAT cannot be satisfied, any balanced districting will leave at least some type-2 vertices not covered. This shows that the $c$-balanced districting problem in a planar graph is NP-hard.
\end{proof}

\HardnessForCompleteGraphs*

\newcommand{\SubsetSum}{\textsc{SubsetSum}\xspace}
\begin{proof}
We start from an instance of \SubsetSum and turn it into an instance of $c$-balanced districting problem. In \SubsetSum, there are $n$ numbers $x_1, x_2, \cdots, x_n$ and the goal is to ask if any subset sums up to $X=\sum_i x_i/2$. Now we create a complete graph $G$ with $n+1$ vertices including a vertex $v_0$ of type-2 weight of $X$, and type-1 weight of $0$, and $n$ vertices $v_1, \cdots v_n$ with $v_i$ holding type-1 weight of $(c-1)\cdot x_i$ and type-2 weight of $0$. 

Therefore, if the \SubsetSum answers positively, there is a star with root $v_0$ and the vertices in the \SubsetSum solution that achieves the maximum possible covered population of $c\cdot X$. On the other hand, if $c$-balanced districting problem returns a solution with covered population of $c\cdot X$, then the answer to the \SubsetSum instance is true.

Similarly, we can reduce a \SubsetSum instance to a star, where $v_0$ is the root of the tree, and $v_1, \cdots, v_n$ are leaves. The same argument follows.
\end{proof}



\HardnessForApproximation*

\begin{proof}
We take an instance of maximum independent set problem and turn it into an instance of $c$-balanced districting problem. Given a graph $G=(V, E)$, $n=|V|$ and $m=|E|$. Denote by $\Delta$ the maximum degree of $G$. For each vertex $v\in V$, we create a vertex $v'$ in graph $G'$ with type-1 weight of $(c-1)\Delta$ and type-2 weight of $0$. There are $n=|V|=|V'|$ such ``type-1 vertices''. We also have a set of ``type-2 vertices'' $V''$ with type-2 weight of $1$ and type-1 weight of $0$. Each vertex $v'\in V'$ has exactly $\Delta$ type-2 neighbors. If $u, v$ has an edge in $G$, in $G'$, $u'$ and $v'$ share one type-2 vertex, which corresponds to the edge between $u, v$ in $G$. See \Cref{fig:MIS}. If the degree of $u$ is less than $\Delta$, the corresponding vertex in $G'$ may have some dangling (degree-1) type-2 vertices. The total number of type-2 vertices is $n\Delta-m$. Thus the total number of vertices in $G'$ is $n(\Delta+1)-m$ and the number of edges in $G'$ is $n\Delta$. In order for a type-1 vertex $u'$ to be covered, all its type-2 neighbors must be used. Thus a maximum independent set $S$ in $G$ means we can cover all corresponding vertices of $S$ in $G'$ as well as all their type-2 neighbors, leading to a total coverage population of $|S|\Delta$. Similarly, if we can find a $c$-balanced districting problem in $G'$, the type-1 vertices that are covered in $c$-balanced districts cannot share any common type-2 neighbors, and therefore the corresponding vertices in $G$ must be independent. This reduction works when the district must be a star. 

\begin{figure}[htbp]
\centering
\includegraphics[width=0.45\linewidth]{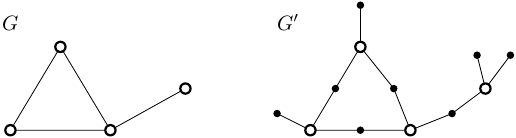} 
\caption{Graph $G$ and $G'$. $\Delta=3$ in this instance. The hallow vertices of $G'$ have type-1 weight of $3(c-1)$ and the solid vertices of $G'$ have type-2 weight of $1$.}
\label{fig:MIS}
\end{figure}

This reduction shows hardness of approximation, as an $\alpha$-approximation for maximum independent set means an $\alpha$-approximation for $c$-balanced districting problem, for any $c$. 
The maximum independent set cannot be approximated by a factor of $n^{1-\delta}$ for any constant $\delta > 0$ on general graph~\cite{bazganpolyAPX05, Zuckerman06}.
If we have an approximation algorithm for the districting problem with approximation factor $O(N^{1/2-\delta})$ with $N$ as the number of vertices in the districting graph $G'$, by the reduction $N=O(n\Delta)$ and this gives an $O((n\Delta)^{1/2-\delta})=O(n^{1-2\delta})$ for the maximum independent set problem on $G$, since $\Delta<n$, which is impossible unless $\mathsf{P}=\mathsf{NP}$.

As the maximum degree in both $G$ and $G'$ is $\Delta$, approximating the balanced districting problem in $G'$ with some factor depending on $\Delta$ gives the same approximation factor for the maximum independent set in $G$. 
For bounded degree graphs, the maximum independent set has a constant approximation~\cite{Halldorsson1994-as}, but is APX-complete~\cite{Chlebik2003-cu} and cannot expect an approximation ratio better than $\Delta/2^{O(\sqrt{\Delta})}$ unless $\mathsf{P}=\mathsf{NP}$ \cite{Samorodnitsky00PCP}. Further, assuming the Unique Games Conjecture, one cannot approximate the maximum independent set problem within a factor of $O(\Delta / \log^2 \Delta)$ \cite{AustrinKS11}. These (conditional) hardness results extend to balanced star districting problem on graphs with maximum degree $\Delta$.
This finishes the argument. 
\end{proof}

\section[A Whack-A-Mole-Algorithm For c-Balanced Star Districting]{A Whack-A-Mole Algorithm For $c$-Balanced Star Districting}\label{sec:whack-a-mole}

\subsection{The MWU Framework}

We first recap the classical multiplicative weight update framework.

Consider the following online process.
There are $n\in\mathbb{N}$ experts and there is a weight vector $y^{(1)}=\frac{1}{n}\cdot \mathbf{1}^n$. Let $\rho\in\mathbb{R}_{> 0}$ be a positive value that represents the bound of an expert's gain in a single round.
Let $T := \lceil\rho \epsilon^{-2}\ln n\rceil $ be the target number of rounds.
For each round $t=1, 2, \ldots, T$, the process receives a \emph{reward} vector $g^{(t)}=[g_1^{(t)}, \ldots, g_n^{(t)}]\in [-\rho, \rho]^n$.
Then, the process computes a temporary vector $\hat{y}^{(t+1)}_i := (1+\epsilon\cdot g_i^{(t)} / \rho) \cdot y^{(t)}_i$ for all $i=1, 2, \ldots, n$, and then normalizes it, obtaining $y^{(t+1)} := \hat{y}^{(t+1)} / \| \hat{y}^{(t+1)} \|$.
After the execution of all $T$ rounds, we have the following guarantee (from \cite[Corollary 2.6]{v008a006}). 

\begin{theorem}\label{thm:mwu}
Fix any $i$.
Let $G_i := \sum_{t=1}^{T} g_i^{(t)}$ be the sum of all rewards of the $i$-th expert throughout the process. 
Let $G_{\mathrm{ALG}} := \sum_{t=1}^T \langle y^{(t)}, g_i^{(t)}\rangle $ be the expected total gain, if in each round $t$ the prediction algorithm randomly picks an expert according to the distribution $y^{(t)}$, and follows that expert's decision.
Then,
\[
\frac{G_i}{T} \le (1+\epsilon)\frac{G_{\mathrm{ALG}}}{T} + \epsilon.
\]
\end{theorem}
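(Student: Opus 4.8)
The plan is to run the standard potential-function analysis of multiplicative weights while tracking the \emph{un-normalized} weights, from which both directions of the bound fall out by telescoping. Since the per-round $\ell_1$-normalization commutes with the multiplicative update, I would define $w_i^{(1)} := 1/n$ and $w_i^{(t+1)} := (1+\epsilon g_i^{(t)}/\rho)\,w_i^{(t)}$, and verify by induction that the algorithm's distribution satisfies $y_i^{(t)} = w_i^{(t)}/W^{(t)}$ with potential $W^{(t)} := \sum_j w_j^{(t)}$. The recurrence for the potential is then
\[
W^{(t+1)} = \sum_i \Bigl(1+\tfrac{\epsilon}{\rho} g_i^{(t)}\Bigr) w_i^{(t)} = W^{(t)}\Bigl(1 + \tfrac{\epsilon}{\rho}\langle y^{(t)}, g^{(t)}\rangle\Bigr),
\]
so telescoping from $W^{(1)}=1$ together with $1+x\le e^x$ gives the upper bound $W^{(T+1)} \le \exp\!\bigl(\tfrac{\epsilon}{\rho} G_{\mathrm{ALG}}\bigr)$, where $G_{\mathrm{ALG}}=\sum_t\langle y^{(t)}, g^{(t)}\rangle$ is the expected total gain. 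Since the potential dominates any single coordinate, $W^{(T+1)} \ge w_i^{(T+1)} = \tfrac1n\prod_{t=1}^T\bigl(1+\tfrac{\epsilon}{\rho} g_i^{(t)}\bigr)$.

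To turn the product into something linear in $G_i$, I would set $x_t := g_i^{(t)}/\rho \in [-1,1]$, split the rounds by the sign of $x_t$, and apply the convexity estimates $1+\epsilon x \ge (1+\epsilon)^{x}$ for $x\in[0,1]$ and $1+\epsilon x \ge (1-\epsilon)^{-x}$ for $x\in[-1,0]$ (both valid once $\epsilon\le\tfrac12$). Taking logarithms of $W^{(T+1)}\ge w_i^{(T+1)}$, chaining with the potential upper bound, and using $\ln(1+\epsilon)\ge \epsilon-\epsilon^2$ and $\ln(1-\epsilon)\ge -\epsilon-\epsilon^2$ collapses everything to
\[
\tfrac{\epsilon}{\rho}\,G_i - \tfrac{\epsilon^2}{\rho}\sum_{t=1}^T|g_i^{(t)}| \;\le\; \ln n + \tfrac{\epsilon}{\rho}\,G_{\mathrm{ALG}},
\]
that is, $G_i \le G_{\mathrm{ALG}} + \epsilon\sum_t|g_i^{(t)}| + \tfrac{\rho\ln n}{\epsilon}$ after clearing the $\rho/\epsilon$ factor.

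Finally I would divide by $T$ and substitute $T=\lceil\rho\epsilon^{-2}\ln n\rceil\ge \rho\epsilon^{-2}\ln n$, which makes $\tfrac{\rho\ln n}{\epsilon T}\le\epsilon$. The one delicate point — and what I expect to be the real obstacle, rather than any conceptual step — is the second-order term $\epsilon\sum_t|g_i^{(t)}|$: in the nonnegative-gain regime in which this bound is invoked, $\sum_t|g_i^{(t)}|=G_i$, so this term moves to the left as a factor $(1-\epsilon)G_i$; inverting $(1-\epsilon)$ and using $\epsilon\le\tfrac12$ promotes it to the multiplicative factor $(1+\epsilon)$ in front of $G_{\mathrm{ALG}}/T$, yielding exactly $\tfrac{G_i}{T}\le(1+\epsilon)\tfrac{G_{\mathrm{ALG}}}{T}+\epsilon$. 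Matching the clean constants here is precisely the bookkeeping of \cite[Corollary 2.6]{v008a006}, which one may alternatively cite verbatim.
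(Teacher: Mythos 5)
The paper never actually proves this statement: Theorem~\ref{thm:mwu} is imported as a black box from \cite[Corollary~2.6]{v008a006}, so there is no internal proof to compare against, and what you wrote is (a sketch of) the standard potential-function proof of that cited result. Its core is sound: the commutation of normalization with the multiplicative update (so that $y_i^{(t)} = w_i^{(t)}/W^{(t)}$), the potential upper bound $W^{(T+1)}\le\exp(\frac{\epsilon}{\rho}G_{\mathrm{ALG}})$, the single-coordinate lower bound, the convexity estimates $1+\epsilon x\ge(1+\epsilon)^x$ on $[0,1]$ and $1+\epsilon x\ge(1-\epsilon)^{-x}$ on $[-1,0]$, and the resulting regret inequality $G_i\le G_{\mathrm{ALG}}+\epsilon\sum_t|g_i^{(t)}|+\rho\epsilon^{-1}\ln n$ all check out. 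Your specialization to nonnegative gains at the end is also the right move, and in fact it is necessary rather than merely convenient: as literally stated, with $g^{(t)}\in[-\rho,\rho]^n$ and $\rho>1$, the theorem is false (take every $g_i^{(t)}=-\rho$; then $y^{(t)}$ stays uniform, $G_i/T=G_{\mathrm{ALG}}/T=-\rho$, and the claimed inequality reads $-\rho\le-(1+\epsilon)\rho+\epsilon$, which fails). The paper only ever invokes the theorem with rewards $g^{(t)}=\frac{\mu}{w(S)}\mathbf{1}_S\ge 0$, so your restriction covers the actual use.

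The one genuine flaw is the last step, where you assert that moving $\epsilon G_i/T$ to the left and ``inverting $(1-\epsilon)$'' yields \emph{exactly} the stated bound. It does not: from $(1-\epsilon)\frac{G_i}{T}\le\frac{G_{\mathrm{ALG}}}{T}+\epsilon$ you get $\frac{G_i}{T}\le\frac{1}{1-\epsilon}\frac{G_{\mathrm{ALG}}}{T}+\frac{\epsilon}{1-\epsilon}$, and since $\frac{1}{1-\epsilon}\ge 1+\epsilon$ (strictly, for $\epsilon>0$), this is \emph{weaker} than the claimed inequality; for $\epsilon\le\frac12$ it gives $(1+2\epsilon)\frac{G_{\mathrm{ALG}}}{T}+2\epsilon$. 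The slack is not an artifact of your loose logarithm estimates: keeping $\ln(1+\epsilon)$ exactly, the multiplicative factor $\frac{\epsilon}{\ln(1+\epsilon)}\le 1+\epsilon$ is fine, but the additive term one can certify is $\frac{\epsilon^2}{\ln(1+\epsilon)}>\epsilon$, so the precise constants of the statement are not recoverable along this route with $T=\lceil\rho\epsilon^{-2}\ln n\rceil$. This mismatch is immaterial to the paper---the theorem is only used up to constant rescaling of $\epsilon$ (the surrounding whack-a-mole analysis already sets $\epsilon:=\epsilon'/3$ to absorb such factors), so your argument with $2\epsilon$ in place of $\epsilon$ serves every purpose Theorem~\ref{thm:mwu} is put to. But you should either state your conclusion with the degraded constants, or rescale $\epsilon\mapsto\epsilon/2$ at the outset, rather than claim an exactness the derivation does not deliver.
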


\subsection{The Whack-A-Mole Algorithm}

Let us focus on the dual linear program.
Without loss of generality we assume that each $c$-balanced district $S$ has positive population, namely, $w(S)\ge 1$.
Let $\mu$ be the binary searched value to the optimal objective value of our linear programs.
It is straightforward to check that $\mu\in [1, w(G)]$, where $w(G)$ is the total weight of all vertices in the graph.
Given that the re-weighted linear program where each coefficient of the form $1/w(S)$ is replaced with $\mu/w(S)$,
the goal of the whack-a-mole algorithm now is to either find a feasible primal solution of value $\ge 1-2\epsilon$, or a feasible dual solution of value $\le 1+\epsilon$. This ensures an $(1+3\epsilon)$-approximation in the end. If we aim for an $(1+\epsilon')$-approximation with a specific $\epsilon' > 0$, then we can always initialize $\epsilon := \epsilon'/3$.  

We first describe the slow version of the whack-a-mole algorithm.
This algorithm simulates the prediction game round by round with a total of $T:=\mu\epsilon^{-2}\ln n$ rounds.
We set $\rho = \mu$. The algorithm initializes $x_S=0$ for all $S\in\mathcal{S}$ and $y_v=1/n$ for all $v\in V$.
In each round $t$, the whack-a-mole algorithm invokes the separation oracle and finds a weakly violating $c$-balanced district $S$ with 
\begin{equation}\label{eq:whack-a-mole-violation}
\frac{\mu}{w(S)} \cdot \langle y^{(t)}, \mathbf{1}_S\rangle < 1-\frac{\epsilon}{2}.
\end{equation}
The whack-a-mole algorithm then sets a reward vector $g^{(t)} := \frac{\mu}{w(S)}\mathbf{1}_S$. Thus, $\langle y^{(t)}, g^{(t)} \rangle < 1-\epsilon/2$. Meanwhile, the algorithm also increases the variable $x_S$ by $1$.
This process repeats for round $t=1, 2, \ldots,$ until either (1) $t=T$, or (2) there is no strongly violating $c$-balanced district anymore.

In case (1), the algorithm returns $x'_S \gets (1-2\epsilon)x_S/T$ as a feasible primal solution. Note that now $\sum_{S} x'_S = 1-2\epsilon$. This indicates that $\mathit{OPT}\ge \mu$.
In case (2), the algorithm returns a feasible dual solution $y'_v\gets (1-\epsilon)^{-1}\cdot y_v^{(t)}$, where $t$ refers to the last round of the WMU step. Note that now $\sum_v y'_v = (1-\epsilon)^{-1}$. This indicates that $\mathit{OPT}\le (1+2\epsilon)\mu$ whenever $\epsilon \le 1/2$. 
Now, we show that the algorithm does halt with a feasible (either primal or dual) solution.

\begin{lemma}
If the algorithm ends in case (1), then $\{x'_S\}$ is a feasible primal solution.
\end{lemma}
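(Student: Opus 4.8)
The plan is to recognize the whack-a-mole dual update as a direct instantiation of the multiplicative weights game of \Cref{thm:mwu}, identifying the $n$ experts with the vertices $v\in V$ and the weight vector $y^{(t)}$ with the normalized dual iterate maintained in round $t$; this identification is exact, since the algorithm's update $\hat y^{(t+1)}_v=(1+\epsilon g^{(t)}_v/\rho)y^{(t)}_v$ followed by renormalization is literally the MWU update. In round $t$ the separation oracle returns a district $S_t$ and the algorithm sets $g^{(t)}=\frac{\mu}{w(S_t)}\mathbf{1}_{S_t}$; since $w(S_t)\ge 1$ by assumption and $\rho=\mu$, every coordinate lies in $[0,\mu]\subseteq[-\rho,\rho]$, so this is a legal reward sequence. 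I would then read off the two quantities in \Cref{thm:mwu}: because $x_S$ counts how many rounds whacked $S$, the cumulative reward of expert $v$ is
\[
G_v=\sum_{t=1}^T g^{(t)}_v=\sum_{t:\,v\in S_t}\frac{\mu}{w(S_t)}=\sum_{S\ni v}\frac{\mu}{w(S)}\,x_S ,
\]
and the algorithm's cumulative gain is $G_{\mathrm{ALG}}=\sum_{t=1}^T\langle y^{(t)},g^{(t)}\rangle$.

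Next I would invoke the defining property of a weakly violating constraint, \Cref{eq:whack-a-mole-violation}, which is exactly $\langle y^{(t)},g^{(t)}\rangle<1-\epsilon/2$ in every round, hence $G_{\mathrm{ALG}}/T<1-\epsilon/2$. Feeding this into the regret bound $\frac{G_v}{T}\le(1+\epsilon)\frac{G_{\mathrm{ALG}}}{T}+\epsilon$ of \Cref{thm:mwu} gives $\frac{G_v}{T}\le(1+\epsilon)(1-\epsilon/2)+\epsilon=1+\tfrac32\epsilon-\tfrac12\epsilon^2$ for every vertex $v$. Since case~(1) sets $x'_S=(1-2\epsilon)x_S/T$, the left-hand side of the primal packing constraint at $v$ is precisely
\[
\sum_{S\ni v}\frac{\mu}{w(S)}\,x'_S=\frac{1-2\epsilon}{T}\sum_{S\ni v}\frac{\mu}{w(S)}\,x_S=(1-2\epsilon)\frac{G_v}{T},
\]
so it remains only to verify $(1-2\epsilon)\bigl(1+\tfrac32\epsilon-\tfrac12\epsilon^2\bigr)\le 1$, together with the trivial nonnegativity $x'_S\ge 0$ (each $x_S$ is incremented from $0$ and $1-2\epsilon>0$).

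The entire content is in that final constant bookkeeping, which is the step I would be most careful about: expanding $(1-2\epsilon)(1+\tfrac32\epsilon)=1-\tfrac12\epsilon-3\epsilon^2<1$ for all $\epsilon\in(0,\tfrac12)$, and noting that the dropped $-\tfrac12\epsilon^2$ factor only decreases the product further. I do not expect a genuine obstacle here; the only thing to get right is that the rescaling factor $1-2\epsilon$ exactly absorbs the $\Theta(\epsilon)$ slack that the MWU regret bound leaves in $G_v/T$, so the packing constraint closes at $\le 1$. This is precisely why the algorithm rescales the counts by $1-2\epsilon$ rather than by $1$, and observing this is the conceptual crux that turns the per-vertex regret guarantee into global primal feasibility.
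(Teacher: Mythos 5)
Your proof is correct and follows essentially the same route as the paper's: both invoke \Cref{thm:mwu} with vertices as experts, bound $G_{\mathrm{ALG}}/T$ by $1-\epsilon/2$ via \Cref{eq:whack-a-mole-violation}, and close the packing constraint with the identical bookkeeping $(1-2\epsilon)\bigl(1+\tfrac32\epsilon\bigr)\le 1$. The only additions you make — checking the rewards lie in $[-\rho,\rho]$ and noting $x'_S\ge 0$ — are harmless details the paper leaves implicit.
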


\begin{proof}
In case (1), $T$ rounds of the MWU steps are performed. Hence, we are able to utilize \Cref{thm:mwu} to validate the primal constraints.
For every $v\in V$, the constraint in \textsc{(Primal)} states that
\begin{align*}
\sum_{S\in\mathcal{S}: S\ni v} \frac{\mu}{w(S)} x'_S &= (1-2\epsilon)\cdot\frac{1}{T} \sum_{S\in\mathcal{S}: S\ni v} \frac{\mu}{w(S)} x_S \\
&= (1-2\epsilon)\cdot\frac{1}{T} \sum_{S\in\mathcal{S}: S\ni v} \frac{\mu}{w(S)}\cdot (\text{\# of rounds where $S$ is chosen})\\
&=(1-2\epsilon)\cdot\frac{1}{T} \sum_{t=1}^T \frac{\mu}{w(S^{(t)})}\mathbb{I}[g_v^{(t)} \neq 0] \tag{$\mathbb{I}[\cdot]$ is the indicator function}\\
&=(1-2\epsilon)\cdot\frac{1}{T} \sum_{t=1}^T \frac{\mu}{w(S^{(t)})} \cdot g_v^{(t)}\cdot \frac{w(S^{(t)})}{\mu} \tag{by definition of $g_v^{(t)}$}\\
&=(1-2\epsilon)\cdot\frac{1}{T} \sum_{t=1}^T g_v^{(t)} = (1-2\epsilon) \frac{G_v}{T} \\
&\le (1-2\epsilon)\cdot  \left(\epsilon + (1+\epsilon)\frac{G_{\mathrm{ALG}}}{T}\right) \tag{By \Cref{thm:mwu}}\\
&=(1-2\epsilon)\cdot \left(\epsilon + \frac{1+\epsilon}{T}\sum_{t=1}^T \langle g^{(t)}, y^{(t)}\rangle \right) \\
&\le (1-2\epsilon)\cdot \left(\epsilon + \frac{1+\epsilon}{T}\sum_{t=1}^T (1-\frac{\epsilon}{2})\right) \tag{By \Cref{eq:whack-a-mole-violation}}\\
&\le (1-2\epsilon)\cdot(1+\frac{3}{2}\epsilon) \le 1
\end{align*}
\end{proof}

\begin{lemma}
If the algorithm ends in case (2), then $\{y'_v\}$ is a feasible dual solution.
\end{lemma}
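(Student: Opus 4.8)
The plan is to establish dual feasibility by a single scaling argument, mirroring the companion lemma for case~(1). The dual constraint for the re-weighted program states that for every $c$-balanced star district $S$ we need $\frac{\mu}{w(S)}\sum_{v\in S} y'_v \ge 1$. The entire point of terminating in case~(2) is that the separation oracle has just certified that no strongly violating constraint remains, and this certificate is \emph{exactly} the statement that $y^{(t)}$ is dual-feasible up to a multiplicative slack of $1-\epsilon$; the role of the rescaling $y'_v := (1-\epsilon)^{-1}\, y_v^{(t)}$ is precisely to absorb that slack.

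Concretely, first I would unpack the termination condition. Case~(2) means $\mathcal{S}_{\mathrm{violate}}=\emptyset$, so by \Cref{eq:violate} (adapted to the re-weighted coefficients $\mu/w(S)$ as in \Cref{eq:whack-a-mole-violation}) every district $S$ satisfies
\[
\frac{\mu}{w(S)}\sum_{v\in S} y_v^{(t)} \ge 1-\epsilon.
\]
Next I would substitute the definition $y'_v = (1-\epsilon)^{-1} y_v^{(t)}$ and verify the dual constraint directly:
\[
\frac{\mu}{w(S)}\sum_{v\in S} y'_v = (1-\epsilon)^{-1}\cdot \frac{\mu}{w(S)}\sum_{v\in S} y_v^{(t)} \ge (1-\epsilon)^{-1}(1-\epsilon)=1,
\]
which holds for every $S$ simultaneously, yielding feasibility. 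I would also record that $\sum_v y'_v=(1-\epsilon)^{-1}$ (since the MWU step keeps $y^{(t)}$ an $\ell_1$-normalized distribution summing to $1$), although feasibility alone is all the lemma requires; the objective bound $\mathit{OPT}\le (1+2\epsilon)\mu$ is then read off separately.

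The one point needing care --- and the only real obstacle --- is reconciling the two different slack factors floating around the construction. The separation oracle of \Cref{lem:separation-oracle} only ever returns a \emph{weakly} violating district, with threshold $1-\epsilon/2$, whereas $\mathcal{S}_{\mathrm{violate}}$ is defined with the stronger threshold $1-\epsilon$. I would argue this mismatch is benign: a strongly violating constraint is in particular weakly violating, so the oracle reports some district whenever $\mathcal{S}_{\mathrm{violate}}\neq\emptyset$; contrapositively, when the algorithm halts in case~(2) with the oracle reporting nothing, we are guaranteed $\mathcal{S}_{\mathrm{violate}}=\emptyset$ at the $1-\epsilon$ threshold. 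It is this $1-\epsilon$ threshold --- not $1-\epsilon/2$ --- that cancels cleanly against the $(1-\epsilon)^{-1}$ scaling, so I would be careful to invoke the correct (stronger) certificate of emptiness rather than the oracle's weaker per-call guarantee.
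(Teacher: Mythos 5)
Your proposal is correct and matches the paper's own proof: both arguments observe that terminating in case~(2) means the oracle has certified $\mathcal{S}_{\mathrm{violate}}=\emptyset$ at the \emph{strong} ($1-\epsilon$) threshold, so every dual constraint holds up to a factor $1-\epsilon$, and the rescaling $y'_v=(1-\epsilon)^{-1}y_v^{(t)}$ absorbs exactly that slack. Your closing remark about distinguishing the oracle's weaker $1-\epsilon/2$ per-call guarantee from its emptiness certificate is precisely the distinction the paper's proof relies on implicitly, so there is no gap.
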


\begin{proof}
In case (2), the separation oracle certifies that there is no strongly violating constraints. Thus, from the definition of a strongly violation constraint (\Cref{eq:violate}), we know that for all $c$-balanced star district $S$, we have 
$\sum_{v\in S} y_v^{(t)}\ge (1-\epsilon) \cdot w(S)$, which implies that $\sum_{v\in S} y'_v \ge w(S)$. Thus, $\{y'_v\}$ is a feasible dual solution.
\end{proof}

\subsection{Speeding Up The Algorithm}

The previous algorithm requires $T$ rounds of WMU steps.
Unfortunately, $T=\lceil\mu\epsilon^{-2}\ln n\rceil$ and $\mu$ can be as large as $w(G)$.
A straightforward implementation requires $\Omega(w(G))$ time, even if we are provided with a polynomial time separation oracle.
From the description of our algorithm, we observe that without normalization, the sum $\sum_{v} y_v$ is non-decreasing. 
The main idea to speed up the MWU algorithm (provided in \cite{BKS23-lp}), is to perform a lazy normalization of the dual variables: the algorithm normalizes $\{y_v\}$ whenever its sum exceeds $1+\epsilon$.
The MWU rounds between two normalizations is then called a \emph{phase}.
Within the same phase, the algorithm is allowed to perform multiple MWU steps on the same violating constraint $S$ \emph{at once} until one of the following three situations happen: (1) this constraint becomes satisfied, (2) the total number of steps now reaches $T$, or (3) the sum of dual variables exceeds $1+\epsilon$ so the algorithm enters a new phase.
This subroutine is called \textsc{Enforce} in \cite{BKS23-lp}. 
We are now able to give a polynomial upper bound to the number of phases in total.

\begin{lemma}\label{thm:mwu-phases}
The number of phases is at most $O(\epsilon^{-2}\log n)$.
\end{lemma}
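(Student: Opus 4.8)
The plan is to introduce the potential $\Phi := \sum_{v} y_v$ (the current sum of dual variables) and to bound the number of phases $P$ by squeezing the \emph{total multiplicative growth} of $\Phi$ over the whole run between a lower and an upper estimate. Recall that $\Phi$ equals $1$ immediately after each normalization (hence at the start of every phase), that a single whack only multiplies the coordinates of the chosen district by a factor at least $1$, so $\Phi$ is non-decreasing within a phase, and that a phase terminates exactly when $\Phi$ first exceeds $1+\epsilon$. Writing $\Gamma$ for the product of the per-step growth ratios $\Phi^{(t+1)}/\Phi^{(t)}$ taken over all whacks of the run, every \emph{completed} phase contributes a factor strictly larger than $1+\epsilon$ to $\Gamma$; allowing the final phase to be incomplete then gives the lower bound $\Gamma \ge (1+\epsilon)^{P-1}$.

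For the matching upper bound I would compute the growth ratio of one whack. A whack of a still-violating district $S$ uses the reward $g^{(t)} = \frac{\mu}{w(S)}\mathbf{1}_S$ and updates $y_v \gets y_v\bigl(1+\tfrac{\epsilon}{\rho}g^{(t)}_v\bigr)$ with $\rho=\mu$; summing over $v$ yields
\[
\Phi^{(t+1)} = \Phi^{(t)} + \frac{\epsilon}{\rho}\,\langle g^{(t)}, y^{(t)}\rangle, \qquad\text{so}\qquad \frac{\Phi^{(t+1)}}{\Phi^{(t)}} = 1 + \frac{\epsilon}{\rho}\cdot\frac{\langle g^{(t)}, y^{(t)}\rangle}{\Phi^{(t)}}.
\]
Because the whacked constraint is violating, $\langle g^{(t)}, y^{(t)}\rangle = \frac{\mu}{w(S)}\sum_{v\in S} y_v < 1$ by \Cref{eq:whack-a-mole-violation}, and because $\Phi^{(t)}\ge 1$ throughout a phase, this ratio is at most $1+\epsilon/\mu$. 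Since the global step counter caps the run at $T = \mu\epsilon^{-2}\ln n$ whacks, one obtains $\Gamma < (1+\epsilon/\mu)^{T} \le \exp(\epsilon T/\mu) = \exp(\epsilon^{-1}\ln n) = n^{1/\epsilon}$.

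Combining the two bounds gives $(1+\epsilon)^{P-1} \le \Gamma < n^{1/\epsilon}$, whence $P-1 < \tfrac{\ln n}{\epsilon\,\ln(1+\epsilon)}$; using $\ln(1+\epsilon)\ge \epsilon/2$ for $\epsilon\le 1/2$ finally yields $P = O(\epsilon^{-2}\log n)$. I expect the only delicate point to be the bookkeeping between the normalized distribution and the (within-phase) unnormalized variables: one must confirm that $\Phi^{(t)}\ge 1$ persists through an entire phase so that the violation inequality $\langle g^{(t)}, y^{(t)}\rangle < 1$ cleanly produces the factor $1+\epsilon/\mu$, and that the cap of $T$ steps is genuinely global rather than reset per phase. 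Everything else is the textbook multiplicative-weights estimate of \cite{v008a006,BKS23-lp}.
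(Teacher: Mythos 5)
Your proposal is correct and takes essentially the same route as the paper's proof: bound the per-whack multiplicative growth of the (unnormalized) dual sum by $1+\epsilon/\mu$ using the violation condition, compound over the global cap of $T=\lceil\mu\epsilon^{-2}\ln n\rceil$ whacks to get total growth at most $e\cdot n^{1/\epsilon}$, and observe that every completed phase consumes a factor of at least $1+\epsilon$, yielding $O(\epsilon^{-2}\log n)$ phases. The bookkeeping point you flag resolves exactly as in the paper: within a phase the violation test is relative to the current unnormalized sum, i.e.\ $\langle g^{(t)},y^{(t)}\rangle<(1-\epsilon/2)\,\Phi^{(t)}$, so the ratio $\langle g^{(t)},y^{(t)}\rangle/\Phi^{(t)}<1$ holds directly (your intermediate claim $\langle g^{(t)},y^{(t)}\rangle<1$ could fail once $\Phi^{(t)}>1$, but is never actually needed), and the cap of $T$ steps is indeed global across phases.
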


\begin{proof}
There are $T:= \lceil\mu  \epsilon^{-2} \ln n\rceil$ MWU rounds. 
Let $\hat{y}_v^{(t)}$ be the  unnormalized dual variables and let $\|\hat{y}^{(t)}\|=\sum_v\hat{y}_v^{(t)}$ be the sum at the beginning of the round $t$.
Initially, the sum of all dual variables is $\sum_v \hat{y}_v^{(1)}=1$.
Within each round $t$, a (weaker) violating constraint $S$ is chosen such that
\[
\sum_{v\in S} \hat{y}_v^{(t)} \le (1-\epsilon/2)\cdot \frac{w(S)}{\mu}\cdot \|\hat{y}^{(t)}\|.
\] 
A MWU step multiplies each $\hat{y}_v^{(t)}$ by $(1+\epsilon \cdot g_v^{(t)} /\mu)$ for $v\in S$. Notice that $g_v^{(t)} := \mu/w(S)$ for all $v\in S$.
This implies that the absolute increase of the unnormalized dual variables can be bounded by 
\begin{align*}
\|\hat{y}^{(t+1)}\| - \|\hat{y}^{(t)}\| &\le \sum_{v\in S}
\frac{\epsilon}{\mu}\cdot g_v^{(t)}\cdot \hat{y}_v^{(t)} \\
&=  \frac{\epsilon}{\mu} \cdot \frac{\mu}{w(S)} \cdot \sum_{v\in S} \hat{y}_v^{(t)}\\
& \le \frac{\epsilon}{\mu} \cdot \frac{\mu}{w(S)} \cdot (1-\epsilon/2) \cdot \frac{w(S)}{\mu} \cdot \|\hat{y}_v^{(t)}\|\\
&\le \frac{\epsilon}{\mu} \|\hat{y}_v^{(t)}\|.
\end{align*}
Thus, each MWU step increases the total sum by a factor of at most $1+\epsilon/\mu$. Therefore,
\begin{align*}
\|\hat{y}^{(T)}\| &\le (1+\epsilon/\mu)^T 
= (1+\epsilon/\mu)^{\lceil\mu\epsilon^{-2}\ln n\rceil}\\
&\le e^{(\ln n)/\epsilon+1} = e\cdot n^{1/\epsilon}.
\end{align*}
Notice that the algorithm is triggered to start a new phase whenever the current normalized sum of the dual variables $\sum_v y_v \ge 1+\epsilon$. Hence, the total number of phases can be bounded by
\begin{align*}
\log_{1+\epsilon} \left(e\cdot n^{1/\epsilon}\right) &=O(\epsilon^{-2}\ln n),
\end{align*}
as desired.
\end{proof}

\subsection[Separation Oracle: Proof of Lemma 4.1]{Separation Oracle: Proof of \Cref{lem:separation-oracle}}\label{sec:appendix-solve-separation-oracle}
\LemmaSeparationOracle*

\begin{proof}
We generalize \Cref{alg:complete} to accommodate dual variables.
In particular, we will maintain a candidate list $L$ with the following property:
For any $c$-balanced district $S$, there exists a district $T\in L$ such that (1) $(p_1(S), p_2(S))$ is an $(\epsilon/10)$-approximate of $(p_1(T), p_2(T))$, and (2) $\sum_{v\in S} w'(v) / \sum_{v\in T} w'(v)\in [e^{-\epsilon/10}, e^{\epsilon/10}]$. Recall that from the proof of \Cref{thm:fptas-complete} we defined that a pair of numbers $(q_1, q_2)$ is an $\epsilon$-approximate to $(q'_1, q'_2)$ if both $q_1/q'_1, q_2/q'_2\in [e^{-\epsilon}, e^\epsilon]$.
The above property suggests that we add a third dimension for $y'_v$ to the list, and modify the trimming algorithm slightly --- we will not trim the solution if their $\sum_{v\in S} y'_v$ values are too far from each other.

We now prove that the final list contains a $c$-balanced district that is a weakly violating constraint.
Consider the population of commodities $(p_1(S_\mathrm{max}), p_2(S_\mathrm{max}))$ of $S_\mathrm{max}$. Without loss of generality, assume that $p_1(S_\mathrm{max})\ge p_2(S_\mathrm{max})$. 
Then, by the property we stated above, there exists a district $S\in L$ that satisfies:
\begin{align*}
(c-1)p_2(S) - p_1(S) & \ge (c-1)p_2(S_\mathrm{max}) - p_1(S_\mathrm{max}) \tag{maintained by \Cref{alg:complete}}\\
&\ge 0, \text{ and}\\
(c-1)p_1(S) - p_2(S) & \ge (1-\epsilon/10) (c-1)p_1(S_\mathrm{max}) - (1+\epsilon/10) p_2(S_\mathrm{max})\\
&\ge \left((1-\epsilon/10)(c-2) - (\epsilon/5)\right) p_2(S_\mathrm{max}) \\
&\ge 0 \tag{whenever $\epsilon\le \frac{c-2}{c}$}
\end{align*}
The above inequality shows that $S$ is indeed $c$-balanced. Furthermore, we have $\sum_{v\in S} y'_v/\sum_{v\in S_{\mathrm{max}}} y'_v\in [e^{-\epsilon/10}, e^{\epsilon/10}]$. Thus, we are able to show that $S$ is a weakly violating constraint:
\begin{align*}
\sum_{v\in S} y'_v &\le e^{\epsilon/10}\cdot \sum_{v\in S_{\mathrm{max}}} y'_v\\
&\le e^{\epsilon/10}\cdot (1-\epsilon) \cdot  w'(S_{\mathrm{max}}) \tag{$S_\mathrm{max}$ is strongly violating}\\
&\le e^{\epsilon/10}\cdot (1-\epsilon) \cdot e^{\epsilon/10} \cdot w'(S) \\
&\le e^{\epsilon/5} \cdot (1-\epsilon) \cdot w'(S) \le (1-\epsilon/2) \cdot w'(S). \tag{$\epsilon > 0$}\\
\intertext{On the other hand, we have}
w'(S) &\ge e^{-\epsilon/10}\cdot w'(S_\mathrm{max}) 
\ge \frac{1}{2} w'(S_\mathrm{max}),
\end{align*}
certifying that the output $S$ satisfies all the constraints from the lemma statement.

Let us now analyze the runtime of the algorithm. It suffices to analyze the number of scales at the new dimension.
Since each value $y'_v$ is at least $n^{-(1+1/\epsilon)}$ and is at most $1+\epsilon$, the number of scales in the third dimension can be bounded by 
\[
\log_{e^{\epsilon/10}} \frac{1+\epsilon}{n^{-(1+1/\epsilon)}} = \frac{\ln (1+\epsilon) + (1+1/\epsilon)\ln n}{\epsilon/10} = O(\epsilon^{-2}\ln n).
\]
Together with the analysis in \Cref{thm:fptas-complete}, the runtime of this generalized algorithm for the complete graph, including maintaining a solution, is
$O(\epsilon^{-6} n^6 (\log n)(\log w(G))^4)$.
\end{proof}

\subsection[Piecing Everything Together: Proof of Theorem 6]{Piecing Everything Together: Proof of \Cref{theorem:star-lp-main}}\label{sec:appendix-proof-of-thm-star-lp-main}


\paragraph{Correctness.}
The correctness comes from the whack-a-mole framework~\cite{BKS23-lp}, the implementation to the separation oracle (\Cref{lem:separation-oracle}), and the potential method analyzed in \Cref{sec:star-district-algorithm-overview}.

\paragraph{Runtime Analysis.}
We have $O(\epsilon^{-1}\log w(G))$ for binary search up to an $(1+\epsilon)$-approximation.
For each guess $\mu$, we run the whack-a-mole algorithm which has $O(\epsilon^{-2}\log n)$ phases and within each phase there are at most $O(n^2/\epsilon)$ calls to the separation oracle by \Cref{eq:7}. Each call to the separation oracle takes $O(\epsilon^{-6}n^{6}(\log n)(\log w(G))^4))$ by \Cref{lem:separation-oracle}.
Therefore, the entire algorithm runs in time
\[
O(\epsilon^{-10} n^{8} (\log n)^2 (\log w(G))^5) = \text{poly}(n, 1/\epsilon, \log w(G)).
\]

\paragraph{Number of Non-Zero Primal Variables.} It is probably worth to note that the number of non-zero terms does not involve any $\log w(G)$ terms. In particular, the algorithm uses the last returned feasible primal solution as the approximated solution. 
The number of non-zero terms can then be bounded by the number of phases multiplied by the number of separation oracle calls per phase, which is at most $O(\epsilon^{-3} n^2\log n)$.

\subsection{Proof of Randomized Rounding}
\begin{lemma}\label{lem:randomized-rounding}
The randomized rounding algorithm from the fractional LP solution produces an expected weight for output districting $I$ as
\[
\mathbf{E}[w(I)] \ge \sum_{S\in \mathcal{S}_{\mathrm{LP}}} w(S) \frac{x_S}{\tau} - \sum_{A, B\in \mathcal{S}_{\mathrm{LP}}:\ A\cap B\neq \emptyset} \min(w(A), w(B)) \frac{x_Ax_B}{\tau^2}\ .
\]
\end{lemma}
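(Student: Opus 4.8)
The plan is to compute $\mathbf{E}[w(I)]$ by linearity of expectation over the districts and then lower bound, for each $S$, the probability that it survives the rounding. Writing $\mathbf{E}[w(I)] = \sum_{S\in\mathcal{S}_{\mathrm{LP}}} w(S)\cdot \Pr[S\in I]$, the heart of the argument is a clean lower bound on $\Pr[S\in I]$. Recall that the algorithm processes the districts in decreasing order of weight, flipping for each $S$ an independent coin that is heads with probability $x_S/\tau$ (a valid probability since $\tau\ge 1$ and the primal constraint forces $x_S\le 1$), and adds $S$ to $I$ precisely when its coin is heads and no already-added district overlaps $S$. I would fix a total order $\prec$ refining the decreasing-weight order used by the algorithm, so that $B\prec S$ implies $w(B)\ge w(S)$.

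First I would observe that the event ``$S$'s coin is heads'' is independent of the event ``no district $B\prec S$ overlapping $S$ lies in $I$,'' because the latter is determined entirely by the coins of districts strictly preceding $S$. Hence $\Pr[S\in I] = \tfrac{x_S}{\tau}\cdot\Pr[\,\forall B\prec S \text{ with } B\cap S\neq\emptyset:\ B\notin I\,]$. A union bound over the blocking districts, together with the trivial bound $\Pr[B\in I]\le \Pr[B\text{'s coin is heads}] = x_B/\tau$, gives
\[
\Pr[S\in I]\ \ge\ \frac{x_S}{\tau}\Bigl(1 - \sum_{\substack{B\prec S\\ B\cap S\neq\emptyset}}\frac{x_B}{\tau}\Bigr) = \frac{x_S}{\tau} - \sum_{\substack{B\prec S\\ B\cap S\neq\emptyset}}\frac{x_Sx_B}{\tau^2}.
\]
Multiplying by $w(S)$ and summing over $S$ yields the main term $\sum_S w(S)\,x_S/\tau$ together with a correction term $\sum_S\sum_{B\prec S,\,B\cap S\neq\emptyset} w(S)\,x_Sx_B/\tau^2$.

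To match the statement, I would use that $B\prec S$ forces $w(S)=\min(w(S),w(B))$, so each summand of the correction equals $\min(w(S),w(B))\,x_Sx_B/\tau^2$. This ordered sum, taken with $B$ the heavier district, counts each unordered overlapping pair exactly once, whereas the symmetric sum $\sum_{A\cap B\neq\emptyset}\min(w(A),w(B))\,x_Ax_B/\tau^2$ in the statement counts each such pair at least twice and additionally includes nonnegative diagonal terms $A=B$. Therefore my correction term is bounded above by the symmetric sum; since it is \emph{subtracted}, replacing it by the larger symmetric sum only weakens the lower bound, which is exactly what we need.

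The hard part will not be any single idea but the bookkeeping: I must phrase the independence claim carefully, as the coin of $S$ against an event determined solely by earlier coins, so that the conditioning step is legitimate, and I must correctly track the asymmetry introduced by the weight ordering when passing from the ordered correction term to the symmetric sum of the statement. Both become routine once the processing order $\prec$ and the inequality $\Pr[B\in I]\le x_B/\tau$ are in place.
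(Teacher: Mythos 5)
Your proof is correct and takes essentially the same route as the paper's: both argue per district via linearity of expectation, use independence of the coin flips to lower-bound each district's survival probability by $\frac{x_S}{\tau}\bigl(1-\sum_{B}\frac{x_B}{\tau}\bigr)$ over heavier overlapping districts $B$, and then handle the $\min(w(A),w(B))$ bookkeeping by noting the blocked district is the lighter one. The only cosmetic difference is that the paper bounds the blocking event through the product of tails-probabilities $\prod_A\bigl(1-\frac{x_A}{\tau}\bigr)$ followed by $\prod(1-a_i)\ge 1-\sum a_i$, whereas you use the exact factorization of $\Pr[S\in I]$ plus a union bound with $\Pr[B\in I]\le \frac{x_B}{\tau}$ --- the same estimate in disguise.
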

\begin{proof}
Consider a district $S$ with non-zero value $x_S$, it is selected into $I$ only if two events happen: (1) the coin flip with probability $x_S/\tau$ turns out to be true; and (2) all the districts with value at least $x_S$ are not included in $I$ -- their coin flips are false. The probability of both events happening is 
\[
\frac{x_S}{\tau} \cdot \prod_{A\in \mathcal{S}_{\mathrm{LP}}: A\cap S\neq \emptyset, w(A) \ge w(S)} \left(1-\frac{x_A}{\tau}\right) \geq \frac{x_S}{\tau} \cdot \bigg(1-\sum_{A\in \mathcal{S}_{\mathrm{LP}}: A\cap S\neq \emptyset, w(A)\ge w(S)}\frac{x_A}{\tau}\bigg)
\]
Now, by linearity of expectation, we have 
\begin{align*}
\mathbf{E}[w(I)] &\ge \sum_{S\in \mathcal{S}_{\mathrm{LP}}} w(S) \frac{x_S}{\tau} \cdot \bigg(1-\sum_{A\in \mathcal{S}_{\mathrm{LP}}: A\cap S\neq \emptyset, w(A)\ge w(S)}\frac{x_A}{\tau}\bigg)
\\
&=\sum_{S\in \mathcal{S}_{\mathrm{LP}}} w(S) \cdot \frac{x_S}{\tau} -\sum_{S, A\in \mathcal{S}_{\mathrm{LP}}: A\cap S\neq \emptyset, w(A)\ge w(S)} w(S)\cdot \frac{x_Sx_A}{\tau^2} \\
&= \sum_{S\in \mathcal{S}_{\mathrm{LP}}} w(S) \cdot \frac{x_S}{\tau} - \sum_{A, B\in \mathcal{S}_{\mathrm{LP}}:\ A\cap B\neq \emptyset} \min(w(A), w(B)) \cdot \frac{x_Ax_B}{\tau^2}\ .
\end{align*}
\end{proof}

For any $\delta\ge 0$, let $\mathcal{S}_{\ge\delta}$ be the set of all districts $S\subseteq \mathcal{S}_{\mathrm{LP}}$ whose weight is at least $\delta$.
The following lemma connects the unweighted correlation between overlapped districts and the expected approximation ratio to the randomized rounding algorithm.

\begin{lemma}\label{lem:rounding-tau}
Let $\tau\in\mathbb{R}_{>0}$ be a fixed value. 
Suppose that for all $\delta > 0$, 
\begin{equation}\label{eq:rounding-constraint}
\sum_{A, B\in \mathcal{S}_{\ge\delta}: A\cap B\neq\emptyset} x_Ax_B \le (\tau/2)\cdot \sum_{S\in \mathcal{S}_{\ge\delta}} x_S,
\end{equation} then $\mathbf{E}[w(I)]\ge \frac{1}{2\tau} \sum_{S} w(S)\cdot x_S$.
\end{lemma}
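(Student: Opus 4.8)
The plan is to start from the expected-weight bound already established in \Cref{lem:randomized-rounding}, namely
\[
\mathbf{E}[w(I)] \ge \frac{1}{\tau}\sum_{S\in\mathcal{S}_{\mathrm{LP}}} w(S)\,x_S \;-\; \frac{1}{\tau^2}\sum_{A, B\in \mathcal{S}_{\mathrm{LP}}:\ A\cap B\neq \emptyset} \min(w(A), w(B))\, x_Ax_B,
\]
and to show that the \emph{weighted} correlation term on the right can be controlled using the hypothesis \eqref{eq:rounding-constraint}, which only speaks about \emph{unweighted} correlation terms over thresholded subsets. The bridge between the two is the layer-cake (integral) representation of the minimum. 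The hard part is precisely this conversion; once it is in place the rest is bookkeeping.

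First I would rewrite $\min(w(A), w(B))$ as a threshold integral. Since weights are nonnegative, $\min(w(A), w(B)) = \int_0^\infty \mathbf{1}[\min(w(A), w(B))\ge \delta]\, d\delta$, and the event $\min(w(A), w(B))\ge \delta$ is exactly the event that $A, B\in\mathcal{S}_{\ge\delta}$, i.e.\ both have weight at least $\delta$. Substituting this and exchanging the (finite) sum with the integral gives
\[
\sum_{A, B:\ A\cap B\neq \emptyset} \min(w(A), w(B))\, x_Ax_B
= \int_0^\infty \Bigg(\sum_{A, B\in\mathcal{S}_{\ge\delta}:\ A\cap B\neq\emptyset} x_Ax_B\Bigg)\, d\delta.
\]
Now I would apply the assumption \eqref{eq:rounding-constraint} pointwise inside the integral, bounding the inner sum by $(\tau/2)\sum_{S\in\mathcal{S}_{\ge\delta}} x_S$ for every $\delta>0$.

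The final step integrates the single-variable term back using the same layer-cake identity in reverse: $\int_0^\infty \sum_{S:\,w(S)\ge\delta} x_S\, d\delta = \sum_S x_S \int_0^{w(S)} d\delta = \sum_S w(S)\, x_S$. Chaining these together yields
\[
\sum_{A, B:\ A\cap B\neq \emptyset} \min(w(A), w(B))\, x_Ax_B \le \frac{\tau}{2}\sum_{S} w(S)\, x_S.
\]
Plugging this into the \Cref{lem:randomized-rounding} bound, the negative term contributes at most $\frac{1}{\tau^2}\cdot\frac{\tau}{2}\sum_S w(S) x_S = \frac{1}{2\tau}\sum_S w(S) x_S$, which is exactly half of the positive term $\frac{1}{\tau}\sum_S w(S) x_S$. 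Hence $\mathbf{E}[w(I)]\ge \frac{1}{2\tau}\sum_S w(S)\, x_S$, as claimed. The only subtlety to double-check is that the exchange of sum and integral is justified (it is, since $\mathcal{S}_{\mathrm{LP}}$ is finite and all weights are finite), and that the thresholding in \eqref{eq:rounding-constraint} is quantified over \emph{all} $\delta>0$, which is precisely what makes the pointwise bound inside the integral legitimate.
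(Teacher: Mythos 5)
Your proof is correct and is essentially the paper's own argument in continuous form: the paper sorts the districts by weight and writes $w(S_i)=\sum_{j\ge i}\bigl(w(S_j)-w(S_{j+1})\bigr)$, which is exactly the discrete (telescoping) version of your layer-cake identity $\min(w(A),w(B))=\int_0^\infty \mathbf{1}[A,B\in\mathcal{S}_{\ge\delta}]\,d\delta$, and then applies \eqref{eq:rounding-constraint} at each threshold just as you do pointwise in $\delta$. Both proofs then reassemble the single-variable term identically, so there is no substantive difference.
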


\begin{proof}
We first sort all districts in $\mathcal{S}_{\mathrm{LP}}$  in the non-increasing order of weights.
Let $S_1, S_2, \ldots, S_t$ be such a list.
For each district $S_i$, its weight $w(S_i)$ can be written as 
\[
w(S_i) = \sum_{j=i}^{t} (w(S_j) - w(S_{j+1})).
\]
Here for convenience we define $w(S_{t+1})=0$.
Using the above expression, we are able to establish that
\begin{align*}
&\sum_{A, B\in \mathcal{S}_{\mathrm{LP}}:\ A\cap B\neq \emptyset}\!\!\!\!\!\!\!\! \min(w(A), w(B)) \cdot \frac{x_Ax_B}{\tau^2}\\
&= \sum_{i=1}^t \sum_{\ell=1}^{i-1} \mathbb{I}[S_\ell\cap S_i\neq \emptyset] \cdot w(S_i)\cdot \frac{x_{S_i}x_{S_\ell}}{\tau^2}\\
&= \sum_{i=1}^t \sum_{\ell=1}^{i-1} \mathbb{I}[S_\ell\cap S_i\neq \emptyset] \cdot \left(\sum_{j=i}^t \left(w(S_j)-w(S_{j+1})\right)\right)\cdot \frac{x_{S_i}x_{S_\ell}}{\tau^2}\\
&= \sum_{j=1}^t \bigg(\!w(S_j)-w(S_{j+1})\!\bigg)\cdot 
 \left(\sum_{i=1}^j \sum_{\ell=1}^{i-1} \mathbb{I}[S_\ell\cap S_i\neq \emptyset] \cdot \frac{x_{S_i}x_{S_\ell}}{\tau^2} \right) \\
&=\sum_{j=1}^t \bigg(\!w(S_j)-w(S_{j+1})\!\bigg)\cdot 
 \left(
 \frac{1}{\tau^2}\cdot 
 \sum_{A, B\in \mathcal{S}_{\ge w(S_j)}:\ A\cap B\neq\emptyset} x_Ax_B
 \right)\\
&\le \sum_{j=1}^t \bigg(\!w(S_j)-w(S_{j+1})\!\bigg)\cdot 
 \left(
 \frac{1}{\tau^2}\cdot \frac{\tau}{2}\cdot \sum_{S\in \mathcal{S}_{\ge w(S_j)}} x_S
 \right) \tag{by \eqref{eq:rounding-constraint}}\\
&=\frac{1}{2\tau} \sum_{j=1}^t \sum_{i=1}^j \bigg(\!w(S_j)-w(S_{j+1})\!\bigg) \cdot x_{S_i} \\
&= \frac{1}{2\tau} \sum_{i=1}^t x_{S_i} \cdot \sum_{j=i}^t \bigg(\!w(S_j)-w(S_{j+1})\!\bigg)  \\
& = \frac{1}{2\tau}  \sum_{i=1}^t w(S_i) \cdot x_{S_i}
\end{align*}
Finally, we have
\begin{align*}
\mathbf{E}[w(I)] &\ge 
\sum_{S\in \mathcal{S}_{\mathrm{LP}}} w(S) \frac{x_S}{\tau} - \sum_{A, B\in \mathcal{S}_{\mathrm{LP}}:\ A\cap B\neq \emptyset} \min(w(A), w(B))\frac{x_Ax_B}{\tau^2} \tag{by \Cref{lem:randomized-rounding}}\\
&\ge \frac{1}{\tau}\sum_{S\in \mathcal{S}_{\mathrm{LP}}} w(S) x_S - \frac{1}{2\tau}\sum_{S\in\mathcal{S}_{\mathrm{LP}}} w(S)x_S\\
&= \frac{1}{2\tau}\sum_{S\in\mathcal{S}_{\mathrm{LP}}} w(S)x_S
\end{align*}
as desired.
\end{proof}

\subsection{Outerplanar Graphs}\label{sec:appendix-outerplanar-graphs}

In this section we devote in proving the following lemma.

\OuterPlanarGraphRoundingGap*

Let $G$ be a connected outerplanar graph.
We begin with the classical result~\cite{Diestel2017-tk}.

\begin{fact}\label{fact:outerplanar}
A graph $G$ is outerplanar if and only if $G$ does not have $K_{2, 3}$ and $K_4$ as minor.
\end{fact}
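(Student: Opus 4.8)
The plan is to reduce the claim to Wagner's theorem (a graph is planar iff it has neither $K_5$ nor $K_{3,3}$ as a minor) via an \emph{apex} construction. Let $G^{+} := G + K_1$ denote the graph obtained from $G$ by adjoining one new vertex $a$ made adjacent to every vertex of $G$. I would prove the desired biconditional through the chain of equivalences: $G$ is outerplanar $\iff$ $G^{+}$ is planar (an apex lemma) $\iff$ $G^{+}$ has no $K_5$ and no $K_{3,3}$ minor (Wagner, invoked as classical) $\iff$ $G$ has no $K_4$ and no $K_{2,3}$ minor (a minor correspondence). Only the first and last equivalences require genuine work; chaining them yields the statement in both directions simultaneously.

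First I would establish the \emph{apex lemma}, that $G$ is outerplanar iff $G^{+}$ is planar. For the forward direction, take an outerplanar drawing of $G$ with all vertices on the outer face, place $a$ in that face, and route an edge from $a$ to each vertex of $G$; since every vertex lies on the boundary of the single face in which $a$ sits, these edges can be drawn without crossings, so $G^{+}$ is planar. For the converse, take a planar embedding of $G^{+}$ and delete $a$ together with all its incident edges: the faces formerly incident to $a$ merge into one common face $f$, and because $a$ was adjacent to every vertex, each vertex of $G$ borders $f$; thus all vertices of $G$ lie on a single face and $G$ is outerplanar.

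Next I would prove the \emph{minor correspondence}, that $G^{+}$ has a $K_5$ or $K_{3,3}$ minor iff $G$ has a $K_4$ or $K_{2,3}$ minor. The easy direction: a $K_4$ minor of $G$ together with the apex (adjacent to all four branch sets) gives a $K_5$ minor of $G^{+}$, and a $K_{2,3}$ minor of $G$ with the apex appended to the size-$2$ side gives a $K_{3,3}$ minor. For the other direction, take a minor model in $G^{+}$; if it avoids $a$ entirely it lives inside $G$, and since $K_4\subseteq K_5$ and $K_{2,3}\subseteq K_{3,3}$ as subgraphs we are done. Otherwise $a$ lies in some branch set, and because $a$ is adjacent to all of $G$ we may shrink that branch set to the singleton $\{a\}$ without destroying its adjacency to any other branch set. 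Deleting the branch vertex $\{a\}$ then leaves, in the $K_5$ case, four pairwise-adjacent branch sets contained in $G$ (a $K_4$ minor of $G$), and in the $K_{3,3}$ case, after removing the apex from whichever side it occupied, the two remaining vertices of that side together with the three vertices of the opposite side form a $K_{2,3}$ minor inside $G$.

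The main obstacle I anticipate is the forward direction of the apex lemma when $G$ is not $2$-connected: the boundary walk of the outer face may visit some vertices more than once or may split across several blocks, so some care is needed to argue that $a$ can still reach every vertex by a non-crossing edge (for instance, by routing through cut vertices block by block, or by treating each appearance of a vertex on the face boundary as an accessible point). A secondary point requiring justification is the ``shrink the apex branch set to $\{a\}$'' step, which relies precisely on the universal adjacency of $a$ to guarantee that discarding the remainder of that branch set preserves every edge needed to the other branch sets.
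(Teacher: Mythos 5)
The paper never proves this statement: it is invoked as a classical fact with a citation to Diestel's textbook, so there is no internal proof to compare against. Your argument is the standard textbook proof of the Chartrand--Harary characterization: reduce outerplanarity of $G$ to planarity of the apex graph $G^{+}=G+K_1$, invoke Wagner's theorem, and translate the forbidden minors across the apex. The chain of equivalences is correct, and the two subtleties you flag are exactly the right ones. For the apex lemma when $G$ is not $2$-connected, your per-occurrence routing works; a cleaner standard alternative is to first extend $G$ to a \emph{maximal} outerplanar graph (a triangulated polygon whose outer boundary is a simple cycle), observe that adding the apex there yields a wheel with chords, which is plainly planar, and then pass back to the subgraph. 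The shrink-to-$\{a\}$ step in the minor correspondence is sound for the reason you give: branch sets are disjoint, so the remaining branch sets are untouched, and the universal adjacency of $a$ guarantees every required adjacency survives; the degenerate case where the model avoids $a$ is handled by $K_4\subseteq K_5$ and $K_{2,3}\subseteq K_{3,3}$, as you note. One small simplification: under the paper's definition of outerplanarity (\emph{some} face contains all vertices), the converse of your apex lemma needs no re-embedding to promote the merged face to the outer face.
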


Let $r\in V$ be any vertex in $G$. 
By considering shortest distances of vertices on $G$ from $r$, we are able to partition $V$ into layers $V=L_0\cup L_1\cup \cdots \cup L_d$, where $L_i=\{v\in V\ |\ \mathrm{dist}_G(v, r)=i\}$. Notice that algorithmically we can construct the sets $\{L_i\}$ using a single BFS, but here we do not need them.
But with the BFS tree in mind, we obtain the following fact.

\begin{fact}\label{fact:claw}
Consider any three vertices from the same layer $x, y, z\in L_i$.
Then $\{r, x, y, z\}$ form a claw minor ($K_{1, 3}$-minor) in $G$.
\end{fact}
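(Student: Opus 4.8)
The plan is to exhibit an explicit $K_{1,3}$ minor in which $x$, $y$, and $z$ are the three leaves and a single connected branch set containing $r$ plays the role of the center. Concretely, I would let $B := L_0 \cup L_1 \cup \cdots \cup L_{i-1}$ be the ball of radius $i-1$ around $r$, and take the four branch sets to be $B_r := B$, $B_x := \{x\}$, $B_y := \{y\}$, and $B_z := \{z\}$.

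First I would verify these branch sets are pairwise disjoint and connected. Disjointness is immediate: $x,y,z \in L_i$ are distinct, and none of them lies in $B$ since each has $\mathrm{dist}_G(\cdot, r) = i$. The only non-trivial point is that $G[B]$ is connected, which is the standard fact that shortest-path balls are connected: for any $u \in B$, a shortest $u$–$r$ path has all of its vertices at distance at most $\mathrm{dist}_G(u,r) \le i-1$ from $r$, hence lies entirely in $B$, so every vertex of $B$ is joined to $r$ inside $G[B]$.

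Next I would check the required adjacencies. Each of $x, y, z$ lies in $L_i$, so by the BFS layering each of them has a neighbor (its BFS parent) in $L_{i-1} \subseteq B$. Thus each singleton leaf branch set has an edge to the center branch set $B_r$. The four branch sets therefore realise a $K_{1,3}$ minor with center $B_r \ni r$ and leaves $x, y, z$; any edges that happen to exist among $x, y, z$ themselves are harmless, since a minor certificate only needs the three star edges to be present and permits edge deletions.

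The step most likely to trip one up is not a deep difficulty but a modelling choice: the naive attempt to use the singleton $\{r\}$ as the center's branch set only succeeds when $i = 1$, where $r$ is genuinely adjacent to $x, y, z$. For $i \ge 2$ one must inflate the center to the whole ball $B$, and the crux is precisely the connectivity of $B$ noted above. An alternative route through the three BFS tree paths $P_x, P_y, P_z$ also works, but it forces a case analysis on \emph{where} the paths branch --- they need not split simultaneously at their deepest common ancestor, and an asynchronous split yields a $P_4$-shaped Steiner tree rather than a star on exactly $\{r,x,y,z\}$ --- whereas the ball argument sidesteps this entirely.
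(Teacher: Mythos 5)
Your proof is correct. The paper itself offers no explicit proof of this fact --- it is stated as immediate ``with the BFS tree in mind'' --- so the intended argument is presumably to contract the BFS tree paths from $r$ toward the layer $L_i$; your construction is a clean formalization of exactly that idea, replacing the tree paths by the ball $B = L_0 \cup \cdots \cup L_{i-1}$ as the single center branch set. All the steps you need are verified correctly: $G[B]$ is connected because every vertex of a shortest $u$--$r$ path lies at distance at most $\mathrm{dist}_G(u,r) \le i-1$ from $r$, the leaves $x,y,z \in L_i$ are disjoint from $B$, and each has a neighbor in $L_{i-1} \subseteq B$. Your closing remark about the tree-path route is also well taken: the three root-to-leaf paths can overlap and branch asynchronously, so one would have to merge their union into a single branch set anyway, which is just $B$ restricted to the tree. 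One further point in favor of your formulation: the fact that your center branch set lies entirely in layers $0,\dots,i-1$ (hence is disjoint from all of $L_i$) is precisely the property the paper exploits later, when this claw is extended to a $K_{2,3}$ minor (adding a vertex $v$ adjacent to $x,y,z$) in \cref{lem:outerplanar-bfslayer} and \cref{lem:outerplanar-atmost-two-parents-bfs}, and to a $K_4$ minor by absorbing arcs of a cycle of $G[L_i]$ into the leaf branch sets; those constructions would break if the center branch set intruded into $L_i$.
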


The following lemma is crucial for proving \Cref{lem:outerplanar}.

\begin{lemma}\label{lem:outerplanar-bfslayer}
For each layer $L_i$, $i\geq 1$, the induced subgraph $G[L_i]$ is a collection of paths.
\end{lemma}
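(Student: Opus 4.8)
The plan is to prove that every connected component of $G[L_i]$ is a path, which is equivalent to showing that $G[L_i]$ is \emph{acyclic} and has \emph{maximum degree at most $2$}. I would establish each of these by exhibiting a forbidden minor ($K_{2,3}$ and $K_4$ respectively) whenever the property fails, contradicting \Cref{fact:outerplanar}. The single structural ingredient that makes all the branch sets valid is the layering property of BFS: the path from $r$ to any $u\in L_i$ visits $L_0, L_1, \ldots, L_i$ with exactly one vertex per layer, so the \emph{interior} of any such shortest path lies in $L_0\cup\cdots\cup L_{i-1}$ and is therefore automatically disjoint from any set of vertices chosen inside $L_i$.

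First I would bound the maximum degree. Suppose, for contradiction, that some $v\in L_i$ has three distinct neighbors $x,y,z\in L_i$. Let $A$ be the union of the BFS-tree shortest paths from $r$ to $x$, $y$, and $z$, with the three endpoints $x,y,z$ deleted. Then $A$ is connected (all three paths contain $r$), is contained in the lower layers $L_0\cup\cdots\cup L_{i-1}$, and is adjacent to each of $x,y,z$ through the second-to-last vertex of the corresponding path; this is exactly the claw model of \Cref{fact:claw}. In particular $A$ is disjoint from $\{v,x,y,z\}$, since $v,x,y,z\in L_i$ while $A$ avoids $L_i$. Taking the branch sets $A$ and $\{v\}$ on one side and $\{x\},\{y\},\{z\}$ on the other yields a $K_{2,3}$ minor, contradicting \Cref{fact:outerplanar}. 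Hence every vertex of $G[L_i]$ has at most two neighbors within $L_i$.

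Next I would rule out cycles. Suppose $G[L_i]$ contains a cycle $C$ and pick any three distinct vertices $a,b,c$ on $C$. These three vertices split $C$ into three arcs; contracting the interior of each arc into one of its endpoints produces three pairwise-adjacent branch sets $B_a,B_b,B_c$, all lying in $L_i$, which witness a triangle minor on $a,b,c$. Applying \Cref{fact:claw} to $a,b,c$ again gives a connected branch set $A\ni r$ built from BFS-tree paths, contained in $L_0\cup\cdots\cup L_{i-1}$, hence disjoint from $C$, and adjacent to each of $a,b,c$. The four branch sets $A,B_a,B_b,B_c$ are then pairwise adjacent, giving a $K_4$ minor and again contradicting \Cref{fact:outerplanar}. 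Therefore $G[L_i]$ is acyclic.

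Combining the two facts, each connected component of $G[L_i]$ is connected, acyclic, and of maximum degree at most $2$, which forces it to be a path (possibly a single vertex); thus $G[L_i]$ is a collection of paths. I expect the main obstacle to be the careful verification that the branch set $A$ is connected and disjoint from the layer-$i$ vertices in both minor constructions, and this is precisely where the BFS-layer structure is indispensable, since it guarantees the shortest-path interiors used to form $A$ never collide with $\{v,x,y,z\}$ or with the cycle $C$.
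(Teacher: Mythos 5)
Your proposal is correct and follows essentially the same route as the paper: bound the degree in $G[L_i]$ by $2$ via a $K_{2,3}$ minor built from \Cref{fact:claw}, and rule out cycles via a $K_4$ minor, both contradicting \Cref{fact:outerplanar}. The only difference is that you spell out the branch-set models (connectivity, disjointness from $L_i$ via the BFS layering) that the paper leaves implicit, which is a welcome level of detail but not a different argument.
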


\begin{proof}
We establish the proof by contradiction.
Let $H=G[L_i]$. If $H$ is not a collection of paths, there must exists either (1) a vertex $v$ with degree at least 3 in $H$, or (2) a cycle in $H$. Let $v_1, v_2, \ldots$ be the clockwise ordering of vertices from $H$ in an outerplanar embedding of $G$.  In case (1), the vertex $v$ connects to at least three other vertices $x, y, z\in L_i$. In this case, by \Cref{fact:claw} we know that $\{r, v, x, y, z\}$ is a $K_{2, 3}$ minor, which is impossible since $G$ is outerplanar.
Similarly, in case (2), a cycle has at least three vertices. Consider any three vertices $x, y, z$ within the same cycle of $H$. This implies that $\{r, x, y, z\}$ is a $K_4$ minor of $G$, a contradiction.
\end{proof}

\begin{lemma}\label{lem:outerplanar-5-hop-nbrs}
Let $G_i = G[L_i\cup L_{i+1}\cup \cdots \cup L_d]$ and let $k\in\mathbb{Z}_{\ge 0}$ be a non-negative integer.
For each vertex $v\in L_i$, the set $N_{G_i, k}(v) := \{x\in L_i \ |\ \mathrm{dist}_{G_i}(v, x) \le k\}$ has at most $2^k+1$ vertices.
\end{lemma}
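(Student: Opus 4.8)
The plan is to dominate $G_i$-distances among layer-$i$ vertices by distances in a very simple auxiliary graph. Define $\hat G$ on the vertex set $L_i$ by joining $u, w \in L_i$ whenever either $uw \in E(G[L_i])$, or there is a connected component $K$ of the \emph{deep graph} $G[L_{i+1} \cup \cdots \cup L_d]$ adjacent to both $u$ and $w$. I will prove two facts: (i) $\hat G$ has maximum degree at most $2$, so each of its components is a path or a cycle; and (ii) $\mathrm{dist}_{\hat G}(v, x) \le \mathrm{dist}_{G_i}(v, x)$ for every $x \in L_i$. Granting these, $N_{G_i, k}(v) \subseteq \{x \in L_i : \mathrm{dist}_{\hat G}(v,x) \le k\}$, and a radius-$k$ ball in a graph of maximum degree $2$ contains at most $2k+1 \le 2^k+1$ vertices, giving the claimed bound.

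The heart of the argument is (i), which I would establish from $K_{2,3}$-minor-freeness (\Cref{fact:outerplanar}) in the spirit of \Cref{fact:claw}. Suppose toward a contradiction that some $v \in L_i$ had three distinct $\hat G$-neighbors $y_1, y_2, y_3$. Let $P := \{v\} \cup \bigcup_j K_j$, where $K_j$ ranges over the deep components witnessing the through-$D$ adjacencies (a $y_j$ that is merely a path-neighbor of $v$ contributes no $K_j$). Then $P$ is connected, is contained in $L_i \cup L_{i+1} \cup \cdots \cup L_d$ with $P \cap L_i = \{v\}$, and is adjacent to each $y_j$. Let $Q$ be the minimal subtree of a fixed BFS tree $T_r$ rooted at $r$ spanning $\{r, y_1, y_2, y_3\}$, with the three leaves $y_1, y_2, y_3$ removed; since the $y_j$ all sit at depth $i$, every remaining node of $Q$ lies strictly above layer $i$, so $Q \subseteq L_0 \cup \cdots \cup L_{i-1}$, $Q$ is connected, and $Q$ is adjacent to each $y_j$ through its tree-parent edge. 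Thus $P$ and $Q$ are disjoint connected sets, both disjoint from $\{y_1, y_2, y_3\}$, each adjacent to all three $y_j$; contracting $P$ and $Q$ yields a $K_{2,3}$-minor, contradicting \Cref{fact:outerplanar}.

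For (ii), fix $x \in L_i$ and take a shortest $v$--$x$ path in $G_i$, listing the layer-$i$ vertices it visits in order as $v = w_0, w_1, \ldots, w_t = x$. Each consecutive pair $w_s, w_{s+1}$ is joined along the path either by a single edge of $G[L_i]$, or by a sub-path whose internal vertices all lie in $L_{i+1} \cup \cdots \cup L_d$; in the latter case these internal vertices are consecutive on a path, hence lie in one connected component $K$ of the deep graph adjacent to both $w_s$ and $w_{s+1}$. In either case $w_s w_{s+1}$ is an edge of $\hat G$, so $\mathrm{dist}_{\hat G}(v, x) \le t$; and since each of the $t$ segments contributes at least one edge to the $G_i$-path, $t \le \mathrm{dist}_{G_i}(v, x)$. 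This proves (ii), and combining with (i) completes the argument.

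The main obstacle is the degree bound (i): one must choose the branch sets $P$ and $Q$ so that they are simultaneously connected and vertex-disjoint from each other and from $\{y_1,y_2,y_3\}$. The crucial structural input is that $y_1, y_2, y_3$ share the common depth $i$, which forces the spanning subtree $Q$ entirely into the strictly higher layers and thereby keeps it disjoint from the deep region occupied by $P$; this is exactly where outerplanarity enters. Everything else is routine: the union-of-paths structure of $G[L_i]$ (\Cref{lem:outerplanar-bfslayer}) is subsumed by (i), and the ball-size estimate in a max-degree-$2$ graph is elementary. (In fact this approach yields the sharper bound $2k+1$, which is $\le 2^k+1$ for all $k\ge 0$.)
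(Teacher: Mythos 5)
Your proof is correct, and it takes a genuinely different route from the paper's. The paper proves the lemma by induction on $k$: it bounds by $2$ the number of layer-$i$ vertices reachable from $v$ either by a direct edge in $G[L_i]$ (a set $X$) or by a path whose internal vertices lie in deeper layers (a set $Y$) --- essentially the same degree bound as your claim (i) --- and then recurses via $N_{G_i,k}(v) \subseteq \bigl(\bigcup_{x\in X} N_{G_i,k-1}(x)\bigr) \cup \bigl(\bigcup_{y\in Y} N_{G_i,k-2}(y)\bigr)$, with a case analysis on $|X|$ and $|Y|$ yielding $2^k+1$. You instead package both adjacency types into the auxiliary graph $\hat G$ once and for all, prove the max-degree-$2$ bound there, and replace the induction by the distance-domination claim (ii), after which the ball bound in a max-degree-$2$ graph finishes immediately. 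What your route buys: it is non-inductive, and it gives the sharper linear bound $2k+1$ in place of the exponential $2^k+1$; propagated into \Cref{lem:two-nbr-prev-level}, this would let each $L_i$ be partitioned into $2\cdot 4+1=9$ rather than $17$ five-hop independent sets, improving the constant in the outerplanar approximation. A further merit is that your explicit two-branch-set construction ($P$ built from $v$ and the deep components, $Q$ from the BFS subtree strictly above layer $i$) makes rigorous the $K_{2,3}$ minor that the paper's inductive step invokes only tersely (it asserts a $K_{1,3}$ minor and leaves the combination with the root implicit, as in \Cref{fact:claw}). What the paper's route buys, conversely, is that it avoids introducing any auxiliary graph and directly matches the bound as stated in the lemma.
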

\begin{proof}
We prove by induction on $k$. When $k=0$ the statement is trivially true. Now we take $k=1$. For any $i$ and any vertex $v\in L_i$,
denote by 
$X$ the neighbors of $v$ in $G_i$. 
We claim $|X|\le 2$. Indeed, if $|X| \ge 3$ then there must be a claw minor from $v$ to $\{x, y, z\}\subseteq X$ in $G_i$. This implies that $\{v, x, y, z, r\}$ is a $K_{2, 3}$ minor in $G$, a contradiction to \Cref{fact:outerplanar}. This immediately implies that $|N_{G_i, k}(v)|\leq 2$ as well.

Suppose now that the statement is correct up to $k-1\geq 1$.
Consider any vertex $v\in L_i$. 
By \Cref{lem:outerplanar-bfslayer}, $G[L_i]$ is a collection of paths. Let $X$ be the neighbor of $v$ in $L_i$.
Consider any path $P$ starting from $v$ and uses vertices in $L_{i+1}\cup L_{i+2}\cup \cdots\cup L_d$ as internal vertices and ending at any vertex on $L_i\setminus X$. We note that these paths will have lengths at least 2.
Let $Y\subseteq L_i$ be all possible ending vertices of such path $P$.
We claim that $|X|+|Y|\le 2$. Indeed, if otherwise $|X|+|Y|\ge 3$, then $\{v\}\cup X\cup Y$ is clearly containing a $K_{1, 3}$ minor.
We now observe that 
\begin{equation}\label{eq:outerplanar-nbrs}
N_{G_i, k}(v) \subseteq \left(\bigcup_{x\in X}N_{G_i, k-1}(x)\right) \cup \left(\bigcup_{y\in Y}N_{G_i, k-2}(y)\right).
\end{equation}
By induction hypothesis, for all $0\le k'<k$ and for all $x\in X\cup Y$, $|N_{G_i, k'}(x)|\leq 2^{k'}+1$. To apply the induction, it suffices to consider several cases:
\begin{itemize}[itemsep=0pt]
\item If $|X\cup Y|=0$, then $|N_{G_i, k}(v)|=1$. 
\item If $|X\cup Y|=1$, then $|N_{G_i, k}(v)|\le 2^{k-1}+1<2^k+1$.
\item If $|X|=2$ (which implies $|Y|=0$), we notice that $k-1\ge 1$ and thus $v$ is included in every set $N_{G_i, k-1}(x)$ with $x\in X$.
Then, $|N_{G_i, k}(v)|\le 2(2^{k-1}+1)-1=2^k+1$. Here we need to subtract $1$ as $v$ is double counted.
\item If $|X\cup Y|=2$ and $|Y|\ge 1$, then using \eqref{eq:outerplanar-nbrs}, we have $|N_{G_i, k}(v)| \le (2^{k-1}+1) + (2^{k-2}+1) \le 2^{k}+1$.
\end{itemize}
\end{proof}


\begin{lemma}\label{lem:two-nbr-prev-level}
For any $i$, $L_i$ can be partitioned into $O(1)$ sets $S_{i, 1}, S_{i, 2}, \ldots, S_{i, \ell}$ such that $S_{i, j}$ is a 5-hop independent set in the subgraph $G_i=G[L_i\cup L_{i+1}\cup \cdots \cup L_d]$. Specifically, $L_i$ is $(17, 5)$-scattered in $G_i$.
\end{lemma}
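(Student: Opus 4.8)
The plan is to reduce the claim to a bounded-degree graph-coloring argument, with all the geometric content already packed into \Cref{lem:outerplanar-5-hop-nbrs}. First I would introduce an auxiliary \emph{conflict graph} $C$ on the vertex set $L_i$: two distinct vertices $u, v\in L_i$ are declared adjacent in $C$ precisely when $\mathrm{dist}_{G_i}(u, v)\le 4$ (that is, strictly less than $5$). By the definition of a $t$-hop independent set, a subset $S\subseteq L_i$ is a $5$-hop independent set in $G_i$ if and only if $S$ is an (ordinary) independent set of $C$. So a partition of $L_i$ into $5$-hop independent sets in $G_i$ is exactly a proper vertex coloring of $C$, and it suffices to show that $C$ is properly colorable with at most $17$ colors.

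The key step is bounding the maximum degree of $C$. For any fixed $v\in L_i$, the set of neighbors of $v$ in $C$ is by definition $N_{G_i, 4}(v)\setminus\{v\}$, where $N_{G_i, 4}(v)=\{x\in L_i\mid \mathrm{dist}_{G_i}(v, x)\le 4\}$ as in \Cref{lem:outerplanar-5-hop-nbrs}. Applying that lemma with $k=4$ gives $|N_{G_i, 4}(v)|\le 2^4+1=17$, and since $v$ itself lies in this set, $v$ has at most $16$ neighbors in $C$. Hence $C$ has maximum degree at most $16$.

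It then follows from a standard greedy coloring argument that any graph of maximum degree at most $\Delta$ is properly $(\Delta+1)$-colorable, so $C$ is colorable with at most $17$ colors. Each color class is an independent set of $C$, hence a $5$-hop independent set of $G_i$, and the color classes partition $L_i$. This is exactly the assertion that $L_i$ is $(17, 5)$-scattered in $G_i$, proving the lemma.

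Given \Cref{lem:outerplanar-5-hop-nbrs}, no genuine obstacle remains; the real work was front-loaded into that neighborhood bound (which in turn relied on the outerplanar $K_{2,3}$- and $K_4$-minor-free structure via \Cref{lem:outerplanar-bfslayer}). The only point to check carefully is the off-by-one in the degree count: the bound $2^4+1=17$ from \Cref{lem:outerplanar-5-hop-nbrs} \emph{includes} $v$, which leaves exactly $16$ true conflicts and hence $16+1=17$ colors, matching the constant stated in the lemma.
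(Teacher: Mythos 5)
Your proof is correct and is essentially the paper's own argument: the paper likewise runs a greedy coloring of $L_i$ in which each vertex avoids the colors already used inside $N_{G_i,4}(v)$, and invokes \Cref{lem:outerplanar-5-hop-nbrs} with $k=4$ to conclude that $2^4+1=17$ colors suffice. Your conflict-graph reformulation and the explicit $(\Delta+1)$-coloring step (max degree $16$, hence $17$ colors) is just a slightly more careful packaging of the same greedy argument, with the identical off-by-one accounting.
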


\begin{proof}
We consider the following greedy coloring algorithm on all vertices of $L_i$ in any order: for each vertex $v\in L_i$, assign a color to $v$ that is different from the color of any colored vertices in $N_{G_i, 4}(v)$. 
By \Cref{lem:outerplanar-5-hop-nbrs}, $2^4+1=17$ colors suffices for coloring all vertices in $L_i$.
Observe that any two vertices of the same color are at least 5-hops away from each other.
Hence, we are able to partition $L_i$ into at most $17=O(1)$ sets, where each set $S_{i, j}$, comprised of all vertices of color $j$, is a $5$-hop independent set in $G_i$.
\end{proof}

Before we give the proof of \Cref{lem:outerplanar}, we introduce a helpful observation for the analysis.

\begin{lemma}\label{lem:outerplanar-atmost-two-parents-bfs}
For any vertex $v\in L_{i+1}$, there are at most two neighbors of $v$ in $L_i$ in $G$.
\end{lemma}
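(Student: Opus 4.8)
The plan is to argue by contradiction: if some $v \in L_{i+1}$ had three distinct neighbors in $L_i$, I would extract a forbidden $K_{2,3}$ minor and invoke \Cref{fact:outerplanar}. This is essentially the same ``common apex plus three coterminal shortest paths'' idea already used to produce the claw minor in \Cref{fact:claw}, except that now the root $r$ and the new vertex $v$ together play the two roles of the degree-$3$ side of $K_{2,3}$.

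Concretely, suppose $v \in L_{i+1}$ has three distinct neighbors $x, y, z \in L_i$. The case $i = 0$ is vacuous, since $L_0 = \{r\}$ and so $v$ can have at most one neighbor there; hence I may assume $i \ge 1$. Fix the BFS tree $T$ rooted at $r$ and let $P_x, P_y, P_z$ be the root-to-leaf shortest paths to $x, y, z$. Their union $T'$ is a subtree in which $x, y, z$ are leaves: each sits at the deepest involved layer $L_i$, so it has no descendants in $T'$. Moreover every vertex of $T'$ other than $x, y, z$ lies in layers $L_0, \dots, L_{i-1}$ (shortest paths visit layers monotonically, one vertex per layer), and is therefore distinct from $v \in L_{i+1}$.

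The key step is to display the five branch sets of a $K_{2,3}$ minor. First I would take $B_1 := V(T') \setminus \{x, y, z\}$, which contains $r$, is connected (deleting the three leaves from a tree leaves a tree), and is disjoint from $\{v, x, y, z\}$ by the layer observation above. Then set $B_2 := \{v\}$ and use the three singletons $\{x\}, \{y\}, \{z\}$. All six required adjacencies follow directly: $v$ is adjacent to each of $x, y, z$ by hypothesis, while the parent of each leaf — a vertex in layer $L_{i-1}$, hence inside $B_1$ — is adjacent to that leaf. This exhibits a $K_{2,3}$ minor with sides $\{B_1, B_2\}$ and $\{x, y, z\}$, contradicting \Cref{fact:outerplanar}.

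The only point that needs care — and the step I expect to be the main, though mild, obstacle — is the disjointness and connectivity bookkeeping for $B_1$ when the three shortest paths overlap near $r$. Overlapping paths only shrink $T'$ and keep $B_1$ connected, and the monotone-in-layer structure guarantees that $v$ never occurs as an intermediate path vertex, so no degeneracy arises regardless of how the paths merge. Consequently at most two neighbors of $v$ can lie in $L_i$, as claimed.
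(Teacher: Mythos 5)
Your proof is correct and follows essentially the same route as the paper: the paper's (one-line) argument also exhibits $\{r, x, y, z, v\}$ as a $K_{2,3}$ minor contradicting \Cref{fact:outerplanar}, with $r$ and $v$ on the degree-$3$ side. Your write-up simply makes the branch sets and the connectivity/disjointness bookkeeping explicit, which the paper leaves implicit.
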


\begin{proof}
If $v\in L_{i+1}$ has at least three neighbors $x, y, z\in L_i$, we know that $\{r, x, y, z, v\}$ form a $K_{2, 3}$ minor, which is a contradiction.
\end{proof}

Now we are ready to finish the proof of \Cref{lem:outerplanar}.

\begin{proof}[Proof of \Cref{lem:outerplanar}]
The proof will be similar to the proof of \Cref{lem:divide-and-conquer-analysis}, but without the recursion. 
We first apply the following charging argument, where we charge the cost $x_Ax_B$ of each pair of intersecting districts $A\cap B\neq \emptyset$ to any vertex $v\in (A\cap B)\cup \{c_A, c_B\}$ with the smallest possible distance to the chosen root $r$ on $G$. Recall that $c_A$ and $c_B$ are the center vertices of $A$ and $B$ respectively.

Consider a vertex $v$ in the $i$-th BFS layer $L_i$.
Let $Q(v)$ be the collection of district pairs charged to $v$.
By the way we charge the pairs, we know that the total cost for $v$ can be bounded by:
\begin{equation}\label{eq:outer-planar-analysis}
\mathrm{cost}(v) := \sum_{(A, B)\in Q(v)} x_Ax_B \le 2 \cdot \sum_{S:\ c_S\in N_{G_i, 2}(v)} x_S.
\end{equation}

Now, it suffices to bound $\sum_{v\in V} \mathrm{cost}(v)$.
Consider a district $S$ with the center vertex $c_S\in L_i$.
By \eqref{eq:outer-planar-analysis}, $x_S$ is never involved in any $\mathrm{cost}(v)$ with $v\in L_{i+1}\cup L_{i+2}\cup \cdots \cup L_d$.
We observe that by \Cref{lem:outerplanar-atmost-two-parents-bfs}, $x_S$ is counted by at most $4$ times in level $i-2$, at most $2$ times in level $i-1$.
Finally, we would like to analyze the number of occurrences of $x_S$ in $\mathrm{cost}(v)$ for $v$ in level $i$. By \Cref{lem:two-nbr-prev-level}, $L_i$ can be partitioned into at most 17 5-hop independent sets $S_{i, 1}, S_{i, 2}, \ldots, S_{i, 17}$. Since each set $S_{i, j}$ is a 5-hop independent set, $x_S$ can only be involved in at most one vertex.
This implies that $x_S$ is added by at most $17$ times in level $i$. Therefore, we conclude that $\sum_{v\in V}\mathrm{cost}(v) \le 2\cdot (17+4+2) \sum_S x_S = O(1)\cdot \sum_S x_S$ as desired.
\end{proof}

\subsection{Trees}\label{sec:lp-trees}

On trees, we are able to bound the product terms for overlapping districts with just the sum of all variables.

\begin{lemma}\label{lem:lp-trees}
Let $G$ be a tree. Suppose that $\{x_S\}$ are primal variables obtained by \Cref{theorem:star-main}. Then, the sum of products of primal variables among all (unordered) pairs of overlapping districts $\{A, B\}$ satisfies $\sum_{A\cap B\neq\emptyset} x_Ax_B\le \sum x_S$.
\end{lemma}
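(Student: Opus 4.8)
The plan is to root the tree $G$ at an arbitrary vertex $r$ and then charge each overlapping pair $\{A,B\}$ to a single, canonically chosen vertex, so that the primal constraint $\sum_{S\ni v} x_S\le 1$ can be invoked exactly once per charge. For a star district $S$ with center $c_S$, observe that in a tree the only neighbors of $c_S$ are its parent $p(c_S)$ and its children, so $S\subseteq \{p(c_S),c_S\}\cup\mathrm{children}(c_S)$. Consequently $S$ has a \emph{unique} vertex of minimum depth, which I will call the \emph{top} $t_S\in\{c_S,p(c_S)\}$. The structural fact I will lean on is that $t_S$ is an ancestor of (or equal to) every vertex of $S$: each vertex of $S$ is $c_S$, a child of $c_S$, or $p(c_S)$, and in all cases $t_S$ sits weakly above it on the root path.

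The crucial step is the following claim: \emph{if $A\cap B\neq\emptyset$ and $\mathrm{depth}(t_A)\le \mathrm{depth}(t_B)$, then $t_B\in A$} (and trivially $t_B\in B$). To prove it, fix any common vertex $u\in A\cap B$. Since a top is an ancestor of its entire star, both $t_A$ and $t_B$ are ancestors of $u$; because the ancestors of $u$ form a single root-to-$u$ path, $t_A$ and $t_B$ are comparable, and $\mathrm{depth}(t_A)\le\mathrm{depth}(t_B)\le\mathrm{depth}(u)$ forces $t_B$ to lie on the path from $t_A$ down to $u$. Finally, that path is contained in $\{t_A,c_A,u\}\subseteq A$, because $u$ is within distance one of $c_A$; hence $t_B\in A$. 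I would spell this out in the two cases $t_A=c_A$ and $t_A=p(c_A)$, both of which are short checks, and I would note that the equal-depth case collapses to $t_A=t_B$ and is therefore subsumed.

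With the claim in hand, I fix any strict total order $\prec$ on districts refining the order by top-depth (ties broken arbitrarily but consistently), and charge each overlapping pair $\{A,B\}$ with $A\prec B$ to the larger district $B$. Then $\mathrm{depth}(t_A)\le\mathrm{depth}(t_B)$, so the claim yields $t_B\in A\cap B$; that is, every district $A\prec B$ overlapping $B$ contains the single vertex $t_B$. Writing $y_v:=\sum_{S\ni v} x_S\le 1$, I conclude
\[
\sum_{\{A,B\}:\,A\cap B\neq\emptyset} x_Ax_B \;=\; \sum_B x_B\!\!\sum_{\substack{A\prec B\\ A\cap B\neq\emptyset}}\!\! x_A \;\le\; \sum_B x_B\!\!\sum_{\substack{A\ni t_B\\ A\neq B}}\!\! x_A \;\le\; \sum_B x_B\,(y_{t_B}-x_B) \;\le\; \sum_B x_B \;=\; \sum_S x_S,
\]
which is exactly the stated bound (indeed with slack $\sum_B x_B^2$ to spare).

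The main obstacle is entirely concentrated in the structural claim: one must argue that the deeper of the two tops is forced to lie inside the shallower district. The clean route above is the observation that in a tree the top of a star is an ancestor of the whole star, which makes the two tops comparable along the ancestor chain of any shared vertex and pins the deeper top onto a short path that stays inside the shallower star. Everything after that is bookkeeping — choosing a consistent tie-breaking order so that each unordered overlapping pair is charged exactly once, and applying the per-vertex primal constraint.
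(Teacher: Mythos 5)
Your proof is correct, and it is built on the same underlying engine as the paper's: root the tree, associate to each star district $S$ its highest vertex (your $t_S$ is exactly the paper's $\mathrm{highest}(S)$, and your two cases $t_S=c_S$ versus $t_S=\mathrm{parent}(c_S)$ are the paper's type I / type II split), charge each overlapping pair so that all districts charged against a given one share a single vertex, and finish with the primal constraint $\sum_{S\ni v}x_S\le 1$ at that vertex. Where you genuinely differ is in the bookkeeping of the charging. The paper fixes a district $A$ and selects which overlapping partners $B$ to count against it via conditions on the centers (``$c_A$ is not an ancestor of $c_B$, or $c_A=c_B$ and $\mathrm{highest}(A)\neq\mathrm{parent}(\mathrm{highest}(B))$''), then argues in two cases (type I: partners contain $c_A$; type II: partners contain $\mathrm{parent}(c_A)$); this selection may double-count some pairs, which the paper absorbs into an inequality. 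You instead impose a global total order $\prec$ refining depth of tops, charge each unordered pair exactly once to the $\prec$-larger district $B$, and prove one uniform structural claim — both tops are weak ancestors of any shared vertex, hence comparable, hence the deeper top $t_B$ lies on the short path from $t_A$ to the shared vertex, which stays inside $A$. Your version buys a cleaner argument (one claim instead of a case analysis and no delicate partner-selection conditions), exact-once counting, and a slightly tighter bound $\sum_B x_B\bigl(y_{t_B}-x_B\bigr)\le \sum_S x_S - \sum_S x_S^2$; the paper's version avoids introducing a total order and tie-breaking, exhibiting the common vertex directly in terms of $A$'s center. Either way the constant feeding into the rounding lemma (taking $\tau=2$) is the same.
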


\begin{proof}
For any district $S$ we denote $c_S$ as their center vertex. 
Fix any vertex $r$ on $G$ as the root of the tree. For each vertex $v\in V\setminus \{r\}$, we let $\mathrm{parent}(v)$ to be the parent vertex of $v$.
For brevity we also define $\mathrm{parent}(r)=\perp$.
Then, each district $S$ can be of two types: either $\mathrm{parent}(c_S)\notin S$ (denoted as \textbf{type I}) or $\mathrm{parent}(c_S)\in S$ (denoted as \textbf{type II}).
For each district $S$ we also define $\mathrm{highest}(S)$ to be the vertex in $S$ that is closest to the root on $G$.

Since the pairs of overlapping districts are unordered, it suffices to consider for each district $A$, \emph{only} the overlapping districts $B$ such that either (1) $c_A$ is NOT an ancestor of $c_B$, or (2) $c_A=c_B$ and $\mathrm{highest}(A) \neq \mathrm{parent}(\mathrm{highest}(B))$ --- this condition excludes the situation where $A$ is type II but $B$ is type I.

Let $\mathcal{S}_A$ be the collection of all overlapping districts $B$ that satisfies the above criteria.
We can now express the sum of products as:
\begin{equation}\label{eq:16}
\sum_{A\cap B\neq\emptyset} x_Ax_B \le 
    \sum_{A\in\mathcal{S}} x_A\cdot \left(
    \sum_{B\in \mathcal{S}_A} x_B
    \right)
\end{equation}
We note that some overlapping pairs $(A, B)$ may be double counted in the above notation so we may have inequality. For example, when both $A$ and $B$ are type II and $c_A, c_B$ are siblings. Another example is when $c_A=c_B$ and $A, B$ are both type I.
Now, for each $A\in \mathcal{S}$. Depending on whether $A$ is type I or type II, we have two cases:

\begin{itemize}[itemsep=0pt]
\item \textbf{$A$ is type I.} In this case, by the definition of $\mathcal{S}_A$, we know that \emph{every} district in $\mathcal{S}_A$ contains $c_A$. Therefore, by the primal constraints we have $\sum_{B\in \mathcal{S}_A}x_B \le 1$.
\item \textbf{$A$ is type II.} In this case, by the definition of $\mathcal{S}_A$, we know that all districts in $\mathcal{S}_A$ contains $\mathrm{parent}(c_A)$: whenever $c_B=c_A$, the criteria (2) guarantees that $B$ is of type II too.
Hence, using the fact that $G$ is a tree, $\mathrm{parent}(c_A)=\mathrm{parent}(c_B)\in B$.
Thus, we also have $\sum_{B\in\mathcal{S}_A}x_B\le 1$ as well.
\end{itemize}
\Cref{lem:lp-trees} immediately follows \Cref{eq:16} and the above case analysis.
\end{proof}

By setting $\tau=2$ in \Cref{lem:rounding-tau}, we conclude that the algorithm produces $(4+\epsilon)$-approximate solutions on trees.

\subsection{General Graphs}\label{sec:appendix-general-graphs}

\GeneralGraphRoundingGap*

\begin{proof}
We split the set $\mathcal{S}$ into two parts $\mathcal{S}_{\mathrm{large}} := \{S\in\mathcal{S}\ |\ |S|\ge \sqrt{n}\}$ and $\mathcal{S}_{\mathrm{small}} := \{S\in\mathcal{S}\ |\ |S|<\sqrt{n}\}$ according to the cardinality of each $c$-balanced district.
We first claim that 
\newcommand{\SLarge}{\mathcal{S}_{\mathrm{large}}}
\begin{equation}\label{eq:9}
    \sum_{A, B\in\mathcal{S}_{\mathrm{large}}:\ A\cap B\neq\emptyset} x_Ax_B \le \sqrt{n}\cdot \sum_{S\in \mathcal{S}_{\mathrm{large}}} x_S.
\end{equation}
As $\sum_{A, B\in \mathcal{S}_{\mathrm{large}}} x_Ax_B\le (\sum_{S\in \mathcal{S}_{\mathrm{large}}} x_S)^2$, it suffices to show that $\sum_{S\in\mathcal{S}_{\mathrm{large}}} x_S\le \sqrt{n}$.
Let $c_S$ be the center vertex of the district $S$.
By a double counting method, we have:
\begin{align*}
\sum_{S\in \SLarge} x_{S}\cdot \sqrt{n} &\le \sum_{S\in \SLarge} x_S\cdot |S| 
\le \sum_{S\in\SLarge} \sum_{v\in S} x_S
\leq \sum_{v\in V} \sum_{S:\ S\ni v} x_S \\
&\le \sum_{v\in V}  1 = n. \tag{by primal constraints of LP}
\end{align*}
Therefore \Cref{eq:9} is true and the equation captures the sums between all pair of large-size districts.
For every remaining pair of districts that are overlapping, one of the districts must be in $\mathcal{S}_{\mathrm{small}}$. For each district that overlaps with $A$, we charge it to one of the common vertices, arbitrarily chosen.
Let's say $A\in\mathcal{S}_{\mathrm{small}}$. Then, by partitioning all districts that overlaps with $A$ on each vertex $v\in A$, we have
\begin{equation}\label{eq:10}
\sum_{B\in\mathcal{S}: \ A\cap B\neq\emptyset} x_B \le \sum_{v\in A}\sum_{B\in\mathcal{S}: \ v\in A\cap B} x_B \le
\sum_{v\in A} 1 \le |A|.
\end{equation}
Finally, by combining both \Cref{eq:9} and \Cref{eq:10}, we have
\begin{align*}
\sum_{A, B\in \mathcal{S}:\ A\cap B\neq \emptyset} x_Ax_B 
&\le \sum_{A, B\in \SLarge:\ A\cap B\neq \emptyset} x_Ax_B 
+ \sum_{A\in\mathcal{S}_{\mathrm{small}}, B\in\mathcal{S}:\ A\cap B\neq \emptyset} x_Ax_B\\
&\le \sqrt{n}\cdot \sum_{S\in\SLarge} x_S + \sum_{A\in\mathcal{S}_{\mathrm{small}}, B\in\mathcal{S}:\ A\cap B\neq \emptyset} x_Ax_B \tag{by \Cref{eq:9}}\\
&\le \sqrt{n}\cdot \sum_{S\in\SLarge} x_S + \sum_{A\in\mathcal{S}_{\mathrm{small}}} |A|\cdot x_A
\tag{by \Cref{eq:10}}\\
&\le \sqrt{n}\cdot \sum_{S\in\SLarge} x_S + \sum_{A\in\mathcal{S}_{\mathrm{small}}} \sqrt{n}\cdot x_A\\
&\le \sqrt{n}\cdot \sum_{S\in\mathcal{S}} x_S 
\end{align*}
as desired.
\end{proof}



\subsection{Large Rounding Gap Examples}\label{sec:large-rounding-gap}

We first show that it is possible for the analysis reporting a rounding gap greater than $1$, even if the graph is planar.
We remark that, however, these examples do not imply hardness of approximation results, as it is still possible for the rounding algorithm (e.g., by testing with a smaller $\tau$ value) returning a close-to-optimal solution.

\subsubsection{Constant $>1$ Rounding Gap For Planar Graphs}

We examine the graphs which correspond to the following two grid patterns: the (square) grid and the triangular grid.

\begin{figure}[ht]
    \centering
    \includegraphics[width=0.3\textwidth]{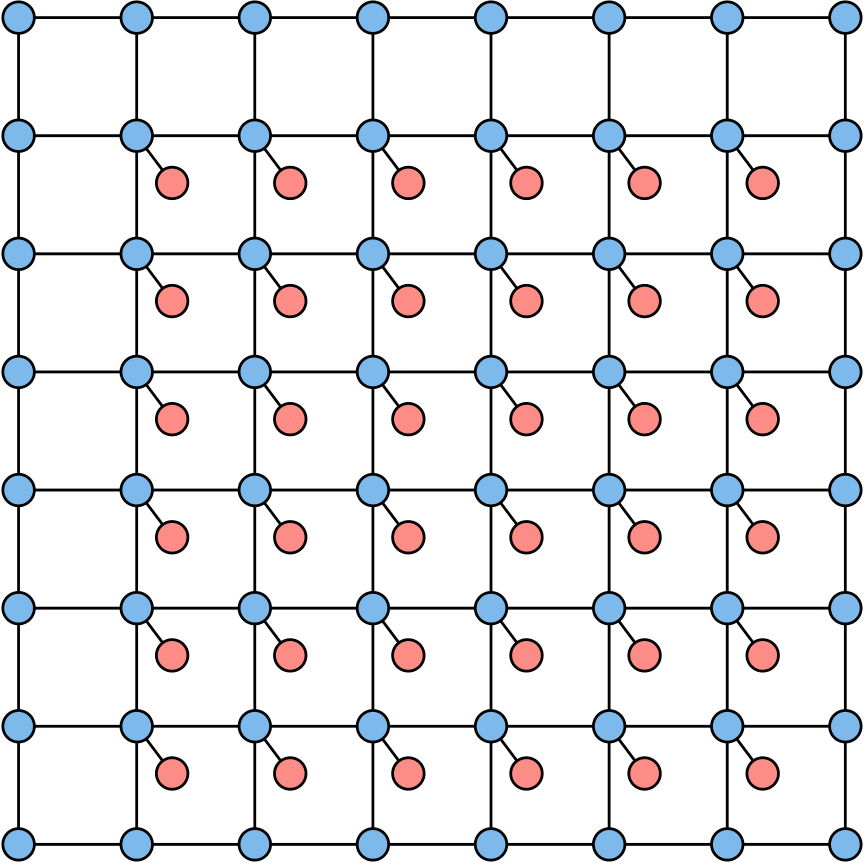}
    \hspace*{2cm}
    \includegraphics[width=0.3\textwidth]{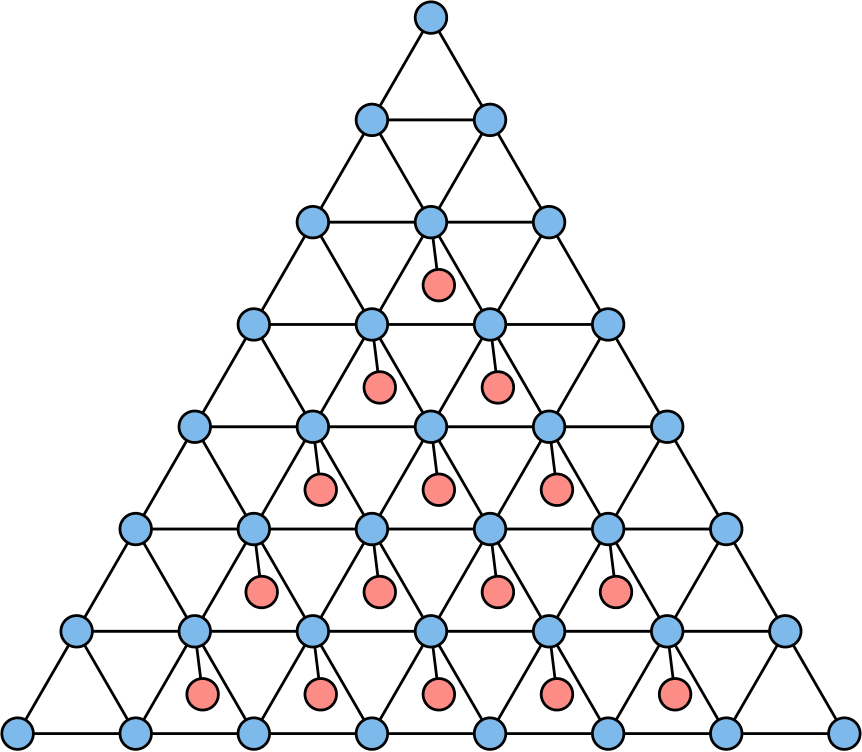}
    \caption{Construction of two types of the grid. The boundary of grid contains $O(\sqrt{n})$ vertices.
    Each internal (non-boundary) vertex is attached with an additional vertex for fulfilling the balanced-ness constraint. The weights are set as follows. (a) For the square grid construction, each grid vertex has $p_1(v)=1$ and each attached vertex has $p_2(v)=(c-1)\times 5$. (b) For the triangular grid construction, each grid vertex has $p_1(v)=1$ and each attached vertex has $p_2(v)=(c-1)\times 7$.}
    \label{fig:grids}
\end{figure}

\paragraph{Square Grid.} Consider a $\sqrt{n}\times \sqrt{n}$ square grid graph. Each grid vertex has weight $p_1(v)=1$ and $p_2(v)=0$. 
Each internal vertex is attached with an additional vertex (called a dangling vertex), with weight $p_1(v)=0$ and $p_2(v)=(c-1)\times 5$. See the left figure in \Cref{fig:grids}.
This assignment ensures that any $c$-balanced \emph{star} district $S$ has a center at an internal vertex and it must take all 5 neighbors (4 neighbors on the grid and one dangling vertex).
An integral solution can be done by greedily tiling ``cross shaped'' tiles on the grid. This achieves a total weight $(1-o(1))cn$.
On the other hand, we can set dual variables to be $y_v=c$ for all $v$ on the grid, and $y_v=0$ for all vertex $v$ that is dangling. This ensures a feasible dual solution with objective value $cn$.
Thus, the integrality gap of the formulated LP is at most $1+o(1)$.

Now, we examine the rounding algorithm with a possible output from LP.
A close-to-optimal fractional solution can be achieved by setting $x_S=1/5$ for all district $S$. The objective value is then $(\sqrt{n}-2)^2 \cdot \frac{1}{5} \cdot (5(c-1)+5) = (1-o(1))cn$.
The sum of products between overlapping districts is at least
\begin{align*}
\frac{1}{2}\cdot (\sqrt{n}-6)^2\cdot 12 \cdot \frac{1}{5^2} = (1-o(1))\cdot \frac{6}{25}\cdot n\ .
\end{align*}
Since the sum of all $x_S$ is at most $n/5 = \frac{5}{25}\cdot n$, we know that $\tau\ge 6/5$ must be set in order to apply \Cref{lem:rounding-tau}.

\paragraph{Triangular Grid.} Consider a triangular grid where each side has $O(\sqrt{n})$ vertices.
Similar to the square grid construction, we assign each grid vertex a weight $p_1(v)=1$ and $p_2(v)=0$.
For each internal vertex, there is an additional dangling vertex attached with weight $p_1(v)=0$ and $p_2(v)=(c-1)\times 7$.
See the right figure in \Cref{fig:grids}.
A similar argument to the square grid analysis shows that the integrality gap is also at most $1+o(1)$.
Now, a fractional solution can be achieved by setting $x_S=1/7$ for all $c$-balanced star district $S$.
The sum of products between overlapping districts is at least 
\[
\frac{1}{2}\cdot (1-o(1))n\cdot 18\cdot \frac{1}{7^2} = (1-o(1))\cdot \frac{9}{49}\cdot n\ .
\]
Since the sum of all $x_S$ is at most $n/7=\frac{7}{49}\cdot n$, we know that $\tau\ge 9/7$ must be set in order to apply \Cref{lem:rounding-tau}. We notice that $9/7 > 6/5$, which leads to a slightly larger gap comparing with the square grid case.

\subsubsection{Large Rounding Gap on General Graphs}
In \Cref{sec:star-districting-general-graphs} we have shown an example reduced from hypergraph matching with both the integrality gap and the rounding gap are $\Theta(\sqrt{n})$. 
Here we show an instance with $O(1)$ integrality gap yet still $\Omega(\sqrt{n})$ rounding gap.


\paragraph{The Construction.}
In this construction we assume $\sqrt{n}$ to be an integer.
We construct a bipartite graph $G=((A\cup B)\cup R, E)$ with $3n$ vertices ($|A|=|B|=|R|=n$) as follows.
We regard vertices in $R$ as a $\sqrt{n}\times \sqrt{n}$ grid.
Each vertex in $R$ can be labelled as $R_{i, j}$ where $1\le i, j\le \sqrt{n}$.
Both $A$ and $B$ are partitioned into $\sqrt{n}$ sets $\{A_i\}$ and $\{B_j\}$, each containing $\sqrt{n}$ vertices.
For each $i$, each vertex $v\in A_i$ has incident edges to all vertices $R_{i, 1}, R_{i, 2}, \ldots, R_{i, n}$.
For each $j$, each vertex $v\in B_j$ has incident edges to all vertices
$R_{1, j}, R_{2, j}, \ldots, R_{n, j}$.
The degree of each vertex in $A$ and $B$ is $\sqrt{n}$ and the degree of each vertex in $R$ is then $2\sqrt{n}$.

Now, we assign the weights. Each vertex $v$ in $A\cup B$ has weight $p_1(v)=(c-1)\sqrt{n}$ and $p_2(v)=0$.
Each vertex $v$ in $R$ has weight $p_1(v)=0$ and $p_2(v)=1$. With this weight assignment, the only $c$-balanced star districts that can be formed are the districts that are centered at any vertex $v\in A\cup B$ and containing all neighbors of $v$.
There are exactly $2n$ districts. 

\paragraph{A Fractional Solution.}
For each $v\in A\cup B$ we denote $S_v$ to be the (unique) district centered at $v$. A fraction solution for the formulated LP can be simply setting $x_{S_v}=1/(2\sqrt{n})$. This is definitely feasible since the degree of vertices in $R$ is $2\sqrt{n}$.
This solution is also optimal since the weight contribution of $p_2$ to the objective function is maximized.

\paragraph{Integrality Gap.} It is also easy to see that the integrality gap is $1$: for each set $A_i$ we pick an arbitrary vertex $v\in A_i$ and add $S_v$ to the final districting. The final districting covers all vertices in $B$ so it is an optimal solution as well.

\paragraph{Rounding Gap.} We now show that this example gives an $\Omega(\sqrt{n})$ rounding gap. Indeed, for each vertex $v\in A$, the district $S_v$ intersects with \emph{all} $S_u$ where $u\in B$: suppose $v\in A_i$ and $u\in B_j$, then they share a common vertex  $R_{i, j}\in S_v\cap S_u$. Therefore, we have
\[
\sum_{S,T:\ S\cap T\neq\emptyset} x_Sx_T \ge |A|\cdot |B|\cdot \left(\frac{1}{2\sqrt{n}}\right)^2 = \frac{n}{4}.
\]
On the other hand, the sum of all primal variables is
\[
\sum_{S} x_S = 2n\cdot \left(\frac{1}{2\sqrt{n}}\right) = \sqrt{n}.
\]
Therefore, the rounding gap is $\sqrt{n}/4=\Omega(\sqrt{n})$.

\subsubsection{Large Rounding Gap for Deterministic Greedy Rounding}\label{appendix:det-rounding}

\begin{figure}[ht]
    \centering
    \includegraphics[width=0.5\textwidth]{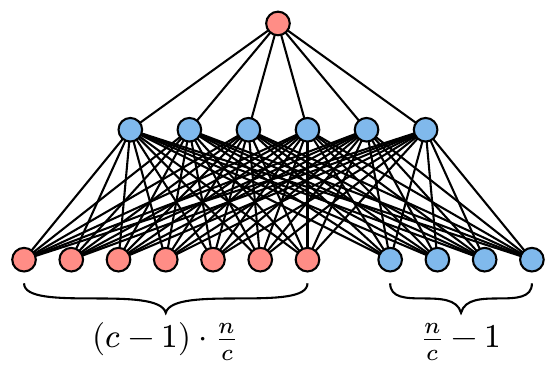}
    \caption{Construction of hard instance for greedy rounding.}
    \label{fig:greedy_round}
\end{figure}

In this section, we give an instance on which deterministic greedy rounding fails to give a good approximation, motivating the need for randomized rounding. Here in the deterministic greedy rounding, we sort the primal variables in decreasing order and choose a district as long as it does not overlap with any chosen districts. 

\paragraph{The construction:} Refer to \Cref{fig:greedy_round} for the construction. The graph can be considered as divided into three layers, with single vertex in first layer (denoted henceforth by $v$), $c-1$ vertices in second layer (call this set $X$) and $n-1$ vertices in the third layer (call this set $Y$). $v$ has $p_1$ population 1, each vertex in $X$ has $p_2$ population 1, $Y$ has $(c-1) \frac{n}{c}$ vertices with $p_1$ population 1 and $\frac{n}{c}-1$ vertices with $p_2$ population 1. Vertex $v$ is connected to every vertex in $X$ and there exists a complete bipartite graph between $X$ and $Y$. The total number of vertices in this graph is $n+c-1$.

The $c$-balanced districts are as follows: $v$ and all its neighbors in $X$ form a $c$-balanced district of weight $c$ (call this district $S_v$). Each node $u \in X$ forms a $c$-balanced district with all its neighbors in $Y$, each of which has weight $n$ (denoted by $S_u$ for $u \in X$). Note that all these districts are star shaped. Consider the fractional solution, where $x(S_u) = \frac{1}{c-1}$ for each district centered at $u \in X$ and $x(S_v) = 1 - \frac{1}{c-1}$. It's straightforward to verify that this forms a feasible solution, with objective value $n + c - \frac{c}{c-1}$, which forms a $(1+\varepsilon)$-approximate solution since the optimal solution is to pick any one district centered at a vertex in $X$. For constant $c > 3$, we have $1 - \frac{1}{c-1} > \frac{1}{c-1}$, and as a result the greedy rounding that sorts districts by $x(S)$ values for districts $S$, chooses the district $S_v$ with weight $c$ (which is a constant), and cannot choose any other districts. This gives an $\Omega(n)$-approximation.



\section{Omitted Proofs from \Cref{sec:FPTAS}}\label{sec:omitted-proofs-from-fptas}

\CompleteGraphDistrictingTheorem*

\begin{proof}[Proof of \Cref{thm:fptas-complete}]
Given an arbitrary ordering on blocks, let $L^i$ be the set of all values that can be obtained by selecting some subset of the first $i$ blocks $\{v_1, \dots, v_i\}$,
$$L^i = \left\{\exists S\subseteq [i], \sum_{j\in S}\vp(v_j)\right\}\subset \mathbb{Z}_{\ge 0}^2.$$
We use induction to show that $L^i_1$ in \cref{alg:complete} is an $(\ell_1, \frac{\epsilon i}{n})$-trimmed of $L^i$ for all $i$.    The base case $i = 0$ is trivially holds as $L^0 = \emptyset$.  Suppose $L^{i-1}_1$ is an $\left(\ell_1, \frac{\epsilon (i-1)}{n}\right)$-trimmed of $L^{i-1}$.  For any $\vq+\vp(v_i)\in L^i\setminus L^{i-1}$ with $\vq\in L^{i-1}$, by the induction hypothesis, there exists $\vq'\in L^{i-1}_1$ which  $\frac{\epsilon(i-1)}{n}$-approximates and $\ell_1$-dominates $\vq$.  Because $p_1(v_i), p_2(v_i)\ge 0$, 
\begin{align*}
    &\frac{q_1'+p_1(v_i)}{q_1+p_1(v_i)}, \frac{q_2'+p_2(v_i)}{q_2+p_2(v_i)}\in [e^{\frac{-\epsilon(i-1)}{n}}, e^{\frac{\epsilon(i-1)}{n}}]
    \text{ and }\ell_1(\vq'+\vp(v_i)) \ge \ell_1(\vq+\vp(v_i)).
\end{align*}
On the other hand, by the definition of $L^i_1$, for any $\vq'+\vp(v_i)\in L^{i-1}_1+\vp(v_i)$ there exists $\vq''\in L^i_1$ so that 
\begin{align*}
    &\frac{q_1''}{q_1'+p_1(v_i)}, \frac{q_2''}{q_2'+p_2(v_i)}\in [e^{\frac{-\epsilon}{n}}, e^{\frac{\epsilon}{n}}]
    \text{ and } \ell_1(\vq'')\ge \ell_1(\vq'+\vp(v_i))
\end{align*}
Combining these two proves that $\vq''$ $\frac{\epsilon i}{n}$-approximates and $\ell_1$-dominates $\vq+\vp(v_i)$.  The identical argument holds for all $\vq\in L^{i-1}\subseteq L^i$.  Thus, we show $L^i_1$ is an $(\ell_1, \frac{\epsilon i}{n})$-trimmed of $L^i$ for all $i$.  Similar argument applies to $L^i_2$.

Let $\vq^*$ be the optimal $c$-balanced value in $L^n$.  Suppose $q_2^*\ge q_1^*$.  Since $L^n_1$ is $(\ell_1, \epsilon)$-trimmed to $L^n$, there exists $\vq'\in L^n_1$ that $\epsilon$-approximates and $\ell_1$-dominates $\vq^*$.  Because $\vq'$ $\epsilon$-approximates $\vq^*$, the approximation guarantee holds, $q'_1+q_2' \ge e^{-\epsilon} (q_1^*+q_2^*)$.  Now we show $\vq'$ is also $c$-balanced. 
 Because $\vq^*$ is $c$-balanced and $\vq'$ $\ell_1$-dominates $\vq^*$, 
 $$0\le (c-1)q_1^*- q_2^* = \ell_1(\vq^*)\le \ell_1(\vq').$$
Moreover, because $q_2^*\ge q_1^*$ and $\vq'$ $\epsilon$-approximates $\vq^*$, we have
    $$(c-1)q_2'\ge  (c-1)e^{-\epsilon}q_2^*\ge (c-1)e^{-\epsilon}q_1^*\ge (c-1)e^{-2\epsilon}q_1'\ge q_1'$$
    where the last inequality holds because $\frac{1}{2}\ln(c-1)\ge \epsilon$.  Combining these two, we have $\ell_1(\vq')$ and $\ell_2(\vq')\ge 0$ completing the proof.  Similarly, if $q_2^*\ge q^*_1$, there exists an $\epsilon$-approximation and $c$-balanced solution in $L^n_2$.  

    The running time of $i$-th iteration is $O(|L_1^i|^2+|L_2^i|^2)$ which can be bounded as the following.  Consider a geometric grid with vertices in  $\{(e^{\frac{j}{n}\epsilon}, e^{\frac{k}{n}\epsilon}): j, k = 0,\dots, \lceil \frac{n}{\epsilon}\ln w(V)\rceil\}$.  Because $L_1^{i}\subseteq [w(V)]^2$ and no two points in can be in a same rectangle after trimming, the size of $L_1^i$ is bounded by the size of grid $O(\frac{n^2}{\epsilon^2}(\ln w(V))^2)$. Therefore, the running time of \cref{alg:complete} is $O(\frac{n^5}{\epsilon^4}(\ln w(V))^4)$.  The additional $n$ in the theorem statement is to reconstruct the set.
\end{proof}

\TreeDistrictingTheorem*

Analogous to \cref{sec:complete}, with slight abuse of notation, let 
$\ell_1(\vs) = (c-1)s_1-s_2$ and $\ell_2(\vs) = (c-1)s_2-s_1$ for all  $\vs\in \mathbb{R}^3$.  Given $j = 1,2$, $\epsilon\ge 0$, and $\vs, \vs'$, $\vs$ is \emph{$\ell_j$-dominated} by $\vs'$ if $\ell_j(\vs)\le \ell_j(\vs')$, and $\vs$ is an \emph{$\epsilon$-approximate} of $\vs'$ if 
$s_1/s_1', s_2/s_2', s_3/s_3'\in [e^{-\epsilon}, e^{\epsilon}]$ with $0/0 := 1$. 
Finally, given two sets $L, L'\subset \mathbb{R}^3$, $L'$ is a $(\ell_j,\epsilon)$-trimmed of $L$ if $L'\subseteq L$ and for all $\vs\in L$ there exists $\vs'\in L'$ which is an $\epsilon$-approximate and $\ell_j$-dominates $\vs$, and $L+L' = \{\vs+\vs': \vs\in L\text{ and } \vs'\in L'\}$. 

The following lemma shows that the parameter $\epsilon$ decay smoothly under composition and addition.
\begin{lemma}\label{lem:comp}
    Given $\epsilon, \epsilon_1, \epsilon_2\ge 0$ and $j = 1,2$, if $L_1'$ is an $(\ell_j, \epsilon_1)$-trimmed of $L_1$ and $L_2'$ is an $(\ell_j, \epsilon_2)$-trimmed of $L_2$, $L'' = \Ftrim{$L_1'+L_2', \ell_j, \epsilon$}$ is a $(\ell_j, \epsilon+\max (\epsilon_1, \epsilon_2))$-trimmed of $L_1+L_2$.   
    Moreover, if $L_1+L_2\subseteq \{0,1,\dots, W\}^3$ for some $W\in \mathbb{Z}_{> 0}$, $|L''| \le \left(1+\frac{\ln W}{\epsilon}\right)^3$.
\end{lemma}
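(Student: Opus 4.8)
The plan is to decouple the two ingredients of a $(\ell_j,\epsilon)$-trimmed set --- the $\epsilon$-approximation and the $\ell_j$-domination --- and to track how each propagates through (i) taking sumsets $L_1+L_2$ and (ii) chaining two approximations, before invoking the guarantee of the \textsc{Trim} subroutine exactly once. First I would record two elementary closure facts for nonnegative stamps. Since $\ell_j$ is a \emph{linear} functional on $\mathbb{R}^3$, domination is additive: if $\vs_1'$ $\ell_j$-dominates $\vs_1$ and $\vs_2'$ $\ell_j$-dominates $\vs_2$, then $\ell_j(\vs_1'+\vs_2')=\ell_j(\vs_1')+\ell_j(\vs_2')\ge \ell_j(\vs_1)+\ell_j(\vs_2)=\ell_j(\vs_1+\vs_2)$. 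Approximation behaves differently under the two operations, and isolating this distinction is the crux: when we \emph{add} approximations of two \emph{different} targets the relative error is governed by the \emph{maximum} of the two parameters (coordinatewise, $e^{-\epsilon_1}a+e^{-\epsilon_2}b\ge e^{-\max(\epsilon_1,\epsilon_2)}(a+b)$ for $a,b\ge 0$, and symmetrically for the upper bound, with the $0/0:=1$ convention causing no trouble since both summands then vanish), whereas when we \emph{chain} two approximations of the \emph{same} target the errors \emph{add}, because $[e^{-\epsilon},e^{\epsilon}]\cdot[e^{-\delta},e^{\delta}]=[e^{-(\epsilon+\delta)},e^{\epsilon+\delta}]$.

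With these facts the main argument is a short chain. Fix any $\vs\in L_1+L_2$ and write $\vs=\vs_1+\vs_2$ with $\vs_i\in L_i$. By hypothesis there exist $\vs_i'\in L_i'$ that $\epsilon_i$-approximate and $\ell_j$-dominate $\vs_i$; set $\vs'=\vs_1'+\vs_2'\in L_1'+L_2'$. The two closure facts give that $\vs'$ both $\max(\epsilon_1,\epsilon_2)$-approximates and $\ell_j$-dominates $\vs$. Next I would use that a single call $\textsc{Trim}(\cdot,\ell_j,\epsilon)$ returns an $(\ell_j,\epsilon)$-trimmed set of its input --- this follows from its greedy sweep in decreasing $\ell_j$-order together with the symmetry of $\epsilon$-approximation, and is precisely the guarantee already relied upon inside the proof of \Cref{thm:fptas-complete}. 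Hence there is $\vs''\in L''$ that $\epsilon$-approximates and $\ell_j$-dominates $\vs'$. Chaining, $\vs''$ $\ell_j$-dominates $\vs$ by transitivity of $\le$, and $\vs''$ $(\epsilon+\max(\epsilon_1,\epsilon_2))$-approximates $\vs$ by the additive composition rule. Since $L''\subseteq L_1'+L_2'\subseteq L_1+L_2$, we conclude that $L''$ is an $(\ell_j,\epsilon+\max(\epsilon_1,\epsilon_2))$-trimmed set of $L_1+L_2$, as claimed.

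For the cardinality bound I would invoke the standard geometric-grid packing argument. The \textsc{Trim} sweep guarantees that no two distinct stamps surviving in $L''$ can $\epsilon$-approximate one another. Overlaying on $\{0,1,\dots,W\}^3$ a grid whose cells along each axis are the singleton $\{0\}$ together with the geometric intervals $[e^{k\epsilon},e^{(k+1)\epsilon})$ covering $[1,W]$, any two stamps lying in a common cell $\epsilon$-approximate each other, so each cell holds at most one element of $L''$. Each axis then contributes at most $1+\frac{\ln W}{\epsilon}$ scales (the value $0$ plus the $O(\ln W/\epsilon)$ buckets on $[1,W]$), and multiplying over the three coordinates yields $|L''|\le\left(1+\frac{\ln W}{\epsilon}\right)^3$.

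The main obstacle I anticipate is the bookkeeping around $\epsilon$-approximation: one must keep straight that the two compositions obey different arithmetic --- $\max$ when summing approximations of \emph{distinct} vectors, but $+$ when chaining approximations of the \emph{same} vector --- and must check that the $0/0:=1$ degeneracies never break the coordinatewise inequalities. This is exactly the place where a careless argument would mistakenly accumulate $\epsilon_1+\epsilon_2$ rather than $\max(\epsilon_1,\epsilon_2)$ in the sumset step, so it is worth verifying the coordinatewise bounds explicitly rather than appealing to an informal "errors add" heuristic.
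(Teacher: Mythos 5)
Your proposal is correct and follows essentially the same route as the paper's own proof: sum the dominating/approximating witnesses from $L_1'$ and $L_2'$ (picking up a $\max(\epsilon_1,\epsilon_2)$ error via the coordinatewise mediant bound and linearity of $\ell_j$), then compose with the \textsc{Trim} guarantee on $L_1'+L_2'$ (adding $\epsilon$), and finish the cardinality bound with the same geometric-grid packing argument. Your writeup is in fact somewhat more explicit than the paper's about the $\max$-versus-$+$ distinction and the $0/0$ convention, but there is no substantive difference in approach.
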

\begin{proof}
    For the first part, $L_1'+L_2'\subseteq L_1+L_2$ is trivial.  For any $\vs_1 = (s_{1,1}, s_{1,2}, s_{1,3})\in L_1$ and $\vs_2 = (s_{2,1}, s_{2,2}, s_{2,3})\in L_2$, there exists $\vs_1'\in L_1'$ and $\vs_2'\in L_2'$ so that $\ell_j(\vs_1')\ge \ell_j(\vs_1), \ell_j(\vs_2')\ge \ell_j(\vs_2)$, 
    \begin{align*}
        &\frac{s_{1,1}}{s_{1,1}'}, \frac{s_{1,2}}{s_{1,2}'}, \frac{s_{1,3}}{s_{1,3}'}\in [e^{-\epsilon_1}, e^{\epsilon_1}]\text{, and }\frac{s_{2,1}}{s_{2,1}'}, \frac{s_{2,2}}{s_{2,2}'}, \frac{s_{2,3}}{s_{2,3}'}\in [e^{-\epsilon_2}, e^{\epsilon_2}].
    \end{align*}
    Hence, $\vs_1'+\vs_2'\in L_1'+L_2'$ and $\vs_1+\vs_2\in L_1+L_2$ satisfy $\ell_1(\vs_1'+\vs_2')\ge \ell_1(\vs_1+\vs_2)$, 
    \begin{align*}
        &\frac{s_{1,1}+s_{2,1}}{s_{1,1}'+s_{2,1}'}, \frac{s_{1,2}+s_{2,2}}{s_{1,2}'+s_{2,2}'}, \frac{s_{1,3}+s_{2,3}}{s_{1,3}'+s_{2,3}'}\in [e^{-\max \epsilon_1, \epsilon_2} e^{\max \epsilon_1, \epsilon_2}].
    \end{align*}
    Because $L''$ is $(\ell_1, \epsilon)$-trimmed of $L_1'+L_2'$, $L''\subset L_1'+L_2'\subset L_1+L_2$, and $L''$ is $(\ell_1, \epsilon+\max(\epsilon_1, \epsilon_2))$-trimmed of $L_1+L_2$ using a similar argument.

    For the second part, because $L''\subseteq L_1+L_2$, the value of $L''$ is in $\{0,1,\dots, W\}^3$.  On the other hand, consider a geometric grid with vertices in $\{\vs\in [0,V]^3: s_1, s_2, s_3 = 0, 1, e^{\epsilon}, e^{2\epsilon},\dots,\}$.  As no two points in $L''$ can be in the same rectangle, the size of $L''$ is bounded by the size of the grid, $O((\epsilon^{-1}\ln W)^3)$.  
\end{proof}

\begin{algorithm}
\caption{FPTAS on tree graphs}\label{alg:tree}
\KwData{$\epsilon>0$, $c> 2$, a rooted tree $(V,E)$, and population functions $\vp = (p_1, p_2)$ }
\KwResult{The size of optimal $c$-balanced districts}
\SetKwFunction{Ftrim}{Trim}
  \SetKwProg{Fn}{Function}{:}{}
  \Fn{\Ftrim{$L$, $\ell$, $\varepsilon$}}{
        Sort $L = \{\vs_1, \dots, \vs_m\}$ so that $\ell(\vs_1)\ge \ell(\vs_2)\ge\dots\ge  \ell(\vs_m)$\;
        Set $L_{out} = \emptyset$ and $L_{rem} = \emptyset$\;
        \For{$i = 1,\dots, m$}
        {
            \If{$\vs_i\notin 
            L_{rem}$}{
            $L_{out}\gets L_{out}\cup \{\vs_i\}$\;
            \For{$j = i+1,\dots,m$}
            {
                \If{$\vs_i$ $\varepsilon$-approximates $\vs_j$}{
                    $L_{rem}\gets L_{rem}\cup\{\vs_j\}$
                }
            }
            }
        }
        \KwRet $L_{out}$\;
  }
\SetKwFunction{Fgrow}{grow}
  \SetKwProg{Fn}{Function}{:}{}
  \Fn{\Fgrow{$v$, $\varepsilon$}}{
        Set $L_1^v, L_2^v\gets \emptyset$\;
        Set $I_1, I_2 \gets \{(p_1(v), p_2(v), 0)\}$ \tcp{Only include itself}
        \For{Each $v$'s child $u$}
        {
            $(L_1^u, L_2^u)\gets \Fgrow{$u, \varepsilon$}$\;
            $I_1\gets \Ftrim{$I_1+L_1^u, \ell_1, \varepsilon$}$\;
            $I_2\gets \Ftrim{$I_2+L_2^u, \ell_2, \varepsilon$}$\;
        }
        $C\gets \{(0,0,s_1+s_2+s_3): \exists \vs\in I_1\cup I_2, \ell_1(\vs), \ell_2(\vs)\ge 0\}$\;
        $A\gets\{(0,0,s_3):\exists \vs\in I_1\cup I_2 \}$\;
        \uIf{$v\neq root$}{
        $L_1^v \gets \Ftrim{$I_1\cup C\cup A, \ell_1, \varepsilon$}$\;
        $L_2^v \gets \Ftrim{$I_2\cup C\cup A, \ell_2, \varepsilon$}$\;
    }
    \Else{
    $L_1^v \gets \Ftrim{$C\cup A, \ell_1, \varepsilon$}$\;
    $L_2^v \gets \Ftrim{$C\cup A, \ell_2, \varepsilon$}$\;
    }
        \KwRet $(L_1^v, L_2^v)$\;
  }
$L_1^{root}, L_2^{root}\gets \Fgrow{$root, \epsilon/n$}$\;
\KwRet $\max\{s_3: \vs\in L_1^{root}, L_2^{root}\}$
\end{algorithm}

\begin{proof}[Proof of \cref{thm:tree}]
    Note that the size of an optimal solution $\mathcal{T}^*$ is $s_3^{root}(\mathcal{T}^*)$, and $(0,0,s_3^{root}(\mathcal{T}^*))\in L^{root}$.  If $L_{1}^{root}$ and $L_{2}^{root}$ are $(\ell_1, \epsilon)$-trimmed and $(\ell_2,\epsilon)$-trimmed of $L^{root}$ respectively, there exists $\vs\in L_1^{root}\subseteq L^{root}$ such that $s_1 = s_2 = 0$ and $s_3\ge e^{-\epsilon}s_3^{root}(\mathcal{T}^*)$ which yields an approximation ratio of $e^\epsilon$ and completes the proof.
    To this end, let $size(v)$ denote the number of block in the subtree of $v$ including $v$ and $\epsilon_v := size(v)\frac{\epsilon}{n}\le \epsilon$ with $\epsilon_v' = (size(v)-1)\frac{\epsilon}{n}$.  
    We will use induction.  As we cannot decide whether the current incomplete district is $c$-balanced or not in \Fgrow, our induction will be on slightly extended sets instead of $L^v$, defined as following
 \begin{equation}
     \bar{L}^v = \left\{\vs^v(\mathcal{T}):  \text{\parbox{8cm}{$\mathcal{T} = (T_1, \dots, T_m)$ and for all $i$, $T_i$ is $c$-balanced
 if it is fully contained in the subtree of $v$.}}\right\}\label{eq:tree_stamp_ex}
 \end{equation}
 where $L^v\subseteq \bar{L}^v$ and $L^{root} = \bar{L}^{root}$ because $\bar{L}^v$ allows $\mathcal{T}$ to contain unbalanced districts outside of the subtree of $v$.  Additionally, we partition $\bar{L}^v$ into three sets: 
 \begin{align*}
     I^v=& \left\{\vs^v(\mathcal{T})\in \bar{L}^v: \text{{$v$ is incomplete for some $T\in \mathcal{T}$}}.\right\}\\
     C^v=& \left\{\vs^v(\mathcal{T})\in \bar{L}^v: \text{{$v$ is consolidating}}.\right\}\\
     A^v=& \left\{\vs^v(\mathcal{T})\in \bar{L}^v: \text{{$v$ is absent}}.\right\}
 \end{align*}
We use induction to show that {$L_1^v$ is a $(\ell_1, \epsilon_v)$-trimmed of $\bar{L}^v$ and $L_2^v$ is a $(\ell_1, \epsilon_v)$-trimmed of $\bar{L}^v$ for all $v$.}

    When $v$ is a leaf, the statement holds.  Now suppose $v$ is not the root and for all $v$'s child $u$, $L_1^u$ is a $(\ell_1, \epsilon_u)$-trimmed of $\bar{L}^u$ and $L_2^u$ is a $(\ell_2, \epsilon_u)$-trimmed of $\bar{L}^u$.  
    
    First, because $I^v$ can be written as 
    $$I^v=(p_1(v), p_2(v),0)+\sum_{\text{child } u}\bar{L}^u,$$
    and $I_1$ is $(p_1(v), p_2(v),0)+\sum_{\text{child } u} L_1^u$ applied \Ftrim $deg(v)-1$ times.
    Thus $I_1$ (and $I_2$) is an $(\ell_1, \epsilon_v')$-trimmed (and $(\ell_2, \epsilon_v')$-trimmed) of $I^v$, due to \cref{lem:comp} and $\max_u \epsilon_u+\frac{\epsilon}{n}(deg(v)-1)\le \epsilon_v'$. 

    Second, we show that $A$ is an $(\ell_1, \epsilon_v')$ and  $(\ell_2, \epsilon_v')$-trimmed of $A^v$.  For any districting $\mathcal{T}$ satisfying the condition in \cref{eq:tree_stamp_ex}, if $v$ is absent in $\mathcal{T}$ so that $\vs^v(\mathcal{T})\in A^v$, we can create a new districting $\mathcal{T}' = \mathcal{T}\cup \{v\}$ so that $v$ is incomplete in $\mathcal{T}'$ and thus $\vs^v(\mathcal{T}')\in I^v$.  Hence
    $$s_1^v(\mathcal{T}) = s_2^v(\mathcal{T}) = 0\text{ and } s_3^v(\mathcal{T}) = s_3^v(\mathcal{T}').$$
    Because $I_1$ is an $(\ell_1, \epsilon_v')$-trimmed of $I^v$, $I_1\subseteq I^v$ and $I_1$ has $\vz' = (z_1', z_2', z_3')$ so that $\epsilon_v'$-approximates $\vs^v(\mathcal{T}')$ so that $$e^{-\epsilon_v'}\le \frac{z_3'}{s_3^v(\mathcal{T}')}\le e^{\epsilon_v'}.$$  
    By the definition of $A$ in \cref{alg:tree}, there exists $\vz\in A$ so that $z_1 = z_2 = 0$ and $z_3 = z_3'$.  Combining these we have 
    $$s_1^v(\mathcal{T}) = z_1 = s_2^v(\mathcal{T}) = z_2 = 0\text{ and }e^{-\epsilon_v'}\le \frac{z_3}{s_3^v(\mathcal{T})}\le e^{\epsilon_v'}.$$
    Because $I_1, I_2\subseteq I^v$, $A\subseteq A^v$.

    Third, we show that $C$ is an $(\ell_1, \epsilon_v')$ and  $(\ell_2, \epsilon_v')$-trimmed of $C^v$.  If $v$ is consolidating for $T\in \mathcal{T}$, we creates $\mathcal{T}'$ so that $v$ is incomplete by extending $T$ outside the subtree of $v$ and thus $\vs^v(\mathcal{T}')\in I^v$.  Note that 
    $s^v_1(\mathcal{T}) = s^v_2(\mathcal{T}) = 0$, $s_1^v(\mathcal{T}') = p_1(T)$,  $s_2^v(\mathcal{T}') = p_2(T)$ and 
    $$s_3^v(\mathcal{T}) = s_1^v(\mathcal{T}')+s_2^v(\mathcal{T}')+s_3^v(\mathcal{T}').$$
    If $p_1(T)\le p_2(T)$, because $I_1$ is an $(\ell_1, \epsilon_v')$-trimmed of $I^v$, there exists $\vz'\in I_1$ which $\epsilon_v'$-approximates and $\ell_1$-dominates $\vs(\mathcal{T}')$.  Thus, 
    $z_1'+z_2'+z_3'\ge e^{\epsilon_v'}\left(s_1(\mathcal{T}')+s_2(\mathcal{T}')+s_3(\mathcal{T}')\right)$.
    Now we show that $(z_1',z_2')$ is $c$-balanced, so $(0,0,z_1'+z_2'+z_3')\in C$.  Because $\vz'$ $\ell_1$-dominates $\vs(\mathcal{T}')$ $\ell_1(\vz')\ge \ell_1(\vs(\mathcal{T}'))\ge 0$.  Because $p_1(T)\le p_2(T)$, we have 
    $$(c-1)z_2'\ge (c-1)e^{-\epsilon_v'} s_2(T)\ge (c-1)e^{-\epsilon_v'} s_1(T)\ge (c-1)e^{-2\epsilon_v'}z_1'\ge z_1'$$ because $(c-1)\ge e^{2\epsilon} \ge e^{2\epsilon_v'}$.  
    Therefore, $(0,0,z_1'+z_2'+z_3')\in C$ $\epsilon_v'$-approximates, $\ell_1$ and $\ell_2$ dominates $\vs^v(\mathcal{T})\in C^v$.      
    Similarly if $p_1(T)\ge p_2(T)$, there exists $\vz''\in I_2$ so that $(0,0, z_1''+z_2''+z_3'')\in C$ $\epsilon_v'$-approximates, $\ell_1$ and $\ell_2$-dominates $\vs(\mathcal{T})$.

    Combining these three cases, $L_1^v = \Ftrim{$I_1\cup A\cup C, \ell_1, \epsilon/n$}$ is an $(\ell_1, \epsilon_v)$-trimmed of $\bar{L}^v$ because $\epsilon_v'+\epsilon/n = \epsilon_v$.  The argument for $v = root$ is similar. 

    For time complexity, by \cref{lem:comp}, the size of $I_1, I_2, L_1^v, L_2^v$ and thus $A$ and $C$ are bounded by $O\left((\frac{n\ln W}{\epsilon})^3\right)$ with $W = w(V)$.  Each trimming process $\Ftrim$ takes quadratic time in the size, and there are $n$ recursive calls of $\Fgrow$, so the running time is bounded by $O(n(\frac{n\ln W}{\epsilon})^6)$. The additional $n$ in the theorem statement is to reconstruct the districting.  
\end{proof}

\end{document}